\newtheorem{theorem}{Theorem}[section]
\newtheorem{lemma}[theorem]{Lemma}
\newtheorem{remark}[theorem]{Remark}
\newtheorem{definition}[theorem]{Definition}
\numberwithin{equation}{section}
\newenvironment{proof}[1][Proof]{\noindent\textbf{#1.} }{\ \rule{0.5em}{0.5em}}
\renewcommand{\epsilon}{\varepsilon}
\newcommand{\ket}[1]{\mathop{\left|#1\right>}\nolimits}
\newcommand{\kb}[2]{\left| #1\right\rangle\!\left\langle #2 \right|}
\newcommand{\Tra}[1]{\mathop{{\mathrm{Tr}}_{#1}}}
\newcommand{\Tr}[2]{\mathop{{\mathrm{Tr}}_{#1}} \left(#2\right) }
\def\N{\mathcal{N}}
\def\R{\mathcal{R}}
\def\D{\mathcal{D}}
\def\sA{\mathsf{A}}
\def\sB{\mathsf{B}}
\def\QIC{\mathrm{QIC}}
\newcommand{\suppress}[1]{}
\newcommand{\set}[1]{\left\{ #1 \right\}}
\newcommand{\size}[1]{\left| #1 \right|}
\newcommand{\xor}{\oplus}
\newcommand{\dyck}{\textsc{Dyck}}
\newcommand{\rA}{{\mathrm A}}
\newcommand{\rB}{{\mathrm B}}
\newcommand{\rF}{{\mathrm F}}
\newcommand{\rH}{{\mathrm H}}
\newcommand{\rin}{{\mathrm{in}}}
\newcommand{\rout}{{\mathrm{out}}}
\newcommand{\QCC}{{\mathrm{QCC}}}
\DeclareMathOperator{\rI}{{\mathrm I}}
\newcommand{\bI}{{\mathbb I}}
\DeclareMathOperator{\fh}{{\mathfrak h}}
\newcommand{\cA}{{\mathcal A}}
\newcommand{\Asc}{{\textsc{Ascension}}}
\newcommand{\asc}{{\textsc{Asc}}}
\newcommand{\ai}{{\textsc{AI}}}
\newcommand{\rO}{{\mathrm O}}
\newcommand{\eqdef}{\coloneqq}
\let\@copyrightspace\relax
\begin{document}

\title{Augmented Index and \\ Quantum Streaming Algorithms 
for \dyck(2)~\thanks{An abridged version of this article appeared as
Ref.~\cite{NT17}.}
}

\author{
Ashwin Nayak \thanks{Department of Combinatorics and Optimization,
and Institute for Quantum Computing,
University of Waterloo, 200 University Ave.\ W., Waterloo, ON,
N2L~3G1, Canada. Email: \texttt{ashwin.nayak@uwaterloo.ca}.
Research supported in part by NSERC Canada.}
\and 
Dave Touchette \thanks{Department of Computer Science, and Institut
Quantique, Universit{\'e} de Sherbrooke, 2500 Boulevard de l'Universit{\'e},
Sherbrooke, QC,  J1K~2R1, Canada. Email: \texttt{touchette.dave@gmail.com}.
This work was done while the author was at Institute for Quantum 
Computing and Department of Combinatorics and Optimization, University 
of Waterloo, and the Perimeter Institute of Theoretical Physics. Research
supported in part by NSERC, partly via the PDF program, CIFAR, and by
Industry Canada. IQC and PI are supported in part by the Government of
Canada and the Province of Ontario.
}
}


\date{August 16, 2022}

\maketitle

\begin{abstract}
We show how two recently developed quantum information theoretic tools can be 
applied to obtain lower bounds on quantum information complexity.
We also develop new tools with potential for broader applicability,
and use them to establish a lower bound on the quantum information
complexity for the Augmented Index function on an easy distribution.
This approach allows us to handle superpositions rather than
distributions
over inputs, the main technical challenge faced  previously.
By providing a quantum generalization of the argument of Jain and
Nayak~[IEEE TIT'14], we leverage this to obtain a lower bound on the
space complexity of multi-pass, unidirectional
quantum streaming algorithms for the \dyck(2) language.
\end{abstract}

\thispagestyle{empty}

\newpage

\setcounter{page}{1}

\section{Introduction}

\subsection{Streaming Algorithms and Augmented Index}

The first \emph{bona fide\/} quantum computers that are built are likely 
to involve a few hundred qubits, and
be limited to short computations. This prompted much research into
the capabilities of space bounded quantum computation, especially of
quantum finite automata, during the early development of the theory 
of quantum computation (see, e.g., Refs.~\cite{MC00,KW97,AF98,ANTV02}).
More recently, this has led to the investigation
of \emph{quantum streaming
algorithms\/}~\cite{LeGall09,GavinskyKKRW08,Blume-KohoutCG14,Montanaro16}.

Streaming algorithms were originally proposed as a means of processing massive
real-world data that cannot be stored in their entirety in
computer memory~\cite{Muthukrishnan05}. Instead of random access to the input data, these
algorithms receive the input in the form of a \emph{stream\/}, i.e., one 
input symbol at a time. The algorithms attempt to solve some information 
processing task using as little space and time as possible, on
occasion using more than one sequential pass over the input stream.

Streaming algorithms provide a natural framework for studying simple,
small-space quantum computation beyond the scope of quantum finite
automata. Some of the works mentioned above (e.g.,
LeGall~\cite{LeGall09}) show how quantum streaming algorithms can accomplish
certain specially crafted tasks with exponentially smaller space,
as compared with classical algorithms. This led Jain and
Nayak~\cite{JainN14} to ask how much more efficient such quantum
algorithms could be for other, more natural problems. They focused on
\dyck(2), a well-studied and important problem from formal language
theory~\cite{ChomskyS63}.  \dyck(2) consists of all well-formed expressions
with two types of parenthesis, denoted below by~$a, \overline{a}$
and~$b,\overline{b}$, with the bar indicating a closing parenthesis.
More formally,
$\dyck(2)$ is the language over the alphabet~$\Sigma = \set{a,
\overline{a},
b,\overline{b}}$ defined recursively as
\[
\dyck(2) \quad = \quad \epsilon + \bigl( a \cdot \dyck(2) \cdot \overline{a}
                  + b \cdot \dyck(2) \cdot \overline{b} \bigr)
                  \cdot \dyck(2) \enspace,
\]
where~$\epsilon$ is the empty string, `$\cdot$' indicates concatenation
of strings (or subsets thereof) and `$+$' denotes set union.

The related problem of recognizing whether a given expression with
parentheses is well-formed was originally studied by Magniez, Mathieu,
and Nayak~\cite{MagniezMN14} in the context of \emph{classical\/} streaming
algorithms. They discovered a remarkable phenomenon, that providing
\emph{bi-directional\/} access to the input stream leads to an
exponentially more space-efficient algorithm. They presented
a randomized streaming algorithm that makes one pass over the input,
uses~$\rO(\sqrt{n\log n}\,)$
bits, and makes polynomially small probability of error
to determine membership of expressions of length~$\rO(n)$ in \dyck(2).
Moreover, they proved that this space bound is
optimal for error at most~$1/(n \log n)$, and conjectured that a similar
polynomial space bound holds for multi-pass algorithms as well.
Magniez \emph{et al.}\ complemented this with a second randomized
algorithm that makes two passes in \emph{opposite\/} directions over 
the input, uses only~$\rO(\log^2 n)$ space, and has polynomially 
small probability of error. Later, two sets of
authors~\cite{ChakrabartiCKM10,JainN14} independently and
concurrently proved the conjectured hardness of \dyck(2) for classical
multi-pass streaming algorithms that read the input only in one
direction.  They showed that any
unidirectional randomized~$T$-pass streaming algorithm that
recognizes length~$n$ instances of \dyck(2) with a constant
probability of error uses space~$\Omega(\sqrt{n}/T)$.

The space lower bounds for \dyck(2) established in
Refs.~\cite{MagniezMN14,ChakrabartiCKM10,JainN14} all rely on a
connection with a two-party communication problem, Augmented Index,
a variant of Index, a basic problem in two-party communication
complexity. In
the Index function problem, one party, Alice, is given a
string~$x \in \set{0,1}^n$, and the other party, Bob, is given an
index~$k \in [n]$, for some positive integer~$n$. Their goal is to
communicate with each other and compute~$x_k$, the~$k$th bit of the
string~$x$. In the Augmented Index function problem, Bob is given
the prefix~$x[1,k-1]$ (the first~$k-1$ bits of~$x$) and a bit~$b$
in addition to the index~$k$. The
goal of the two parties is to determine if~$x_k = b$ or not.
The three works cited above (see also~\cite{ChakrabartiK11}) 
all prove information cost trade-offs for Augmented Index. As a result, 
in any bounded-error protocol for the function, either Alice
reveals~$\Omega(n)$ information about her input~$x$,
or Bob reveals~$\Omega(1)$ information about the index~$k$ 
(even under an easy distribution, the uniform
distribution over zeros of the function).

Motivated by the abovementioned works, Jain and Nayak~\cite{JainN14} 
studied quantum protocols for Augmented
Index. They defined a notion of quantum information cost
for distributions with a limited form of
dependence, and then arrived at a similar tradeoff as in the classical
case. This result, however, does not imply a lower bound on the space
required by quantum streaming algorithms for \dyck(2). The issue is that
the reduction from low information cost protocols for Augmented Index to
small space streaming algorithms breaks down in the quantum case (for
the notion of quantum information cost they proposed). This left open
the possibility of more efficient unidirectional quantum streaming
algorithms.

\subsection{Overview of Results}

We establish the following lower bound on the space complexity of $T$-pass, 
unidirectional quantum streaming algorithms for \dyck(2),
thus solving the question posed by Jain and Nayak~\cite{JainN14}.
\begin{theorem}
\label{th:infmainlbstream}
For any $T \geq 1$, any unidirectional $T$-pass quantum streaming algorithm 
that recognizes $\dyck(2)$ with a constant probability of error uses
space $\Omega (\sqrt{n}/ T^3)$ on length~$n$ instances of the problem.
\end{theorem}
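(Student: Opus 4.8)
The plan is to reduce recognizing $\dyck(2)$ in the streaming model to a two-party quantum communication problem $\Asc$ that nests many Augmented Index instances inside a single bracket expression (following Jain and Nayak~\cite{JainN14}), and then to lower bound the quantum information complexity of $\Asc$. A key point is the choice of information measure: rather than the restricted notion used in~\cite{JainN14}, I would work with the standard quantum information cost of a protocol with respect to the purified inputs, for which the cost of any protocol is at most twice its total communication. It is precisely this property that lets the reduction from a streaming algorithm to a communication protocol go through in the quantum setting, which is where the earlier argument broke down.

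Next I would set up the hard instances and the simulation. One nests $\ell = \Theta(\sqrt n)$ instances of Augmented Index, each on $\Theta(\sqrt n)$ bits, into a bracket expression of length $\Theta(n)$; because the brackets are nested, the expression reads as all of ``Alice's symbols'' (encoding the strings $x^{(1)},\dots,x^{(\ell)}$) followed by all of ``Bob's symbols'' (encoding the indices, prefixes, bits, and a hidden critical level $j^\ast$), and membership in $\dyck(2)$ is equivalent to the Augmented Index instance at level $j^\ast$ evaluating correctly. The hard input is taken to be the uniform superposition over the zeros of this function. A $T$-pass unidirectional quantum streaming algorithm with $s$ qubits of memory, run on such an input, is then a two-party protocol in which Alice processes her prefix and passes the $s$-qubit memory to Bob, Bob processes his suffix and passes it back, and so on for at most $2T-1$ messages of $s$ qubits each; hence this protocol computes $\Asc$ with constant error and, by additivity of quantum information cost over messages, has cost $\rO(Ts)$.

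The core of the proof is the lower bound on the quantum information cost of $\Asc$ on the superposition input. Using the single-instance bound established earlier in the paper --- in any bounded-error quantum protocol for Augmented Index on the easy distribution, Alice reveals $\Omega(\sqrt n)$ information about her string or Bob reveals $\Omega(1)$ information about his index --- one argues by a coherent direct-sum and cut-and-paste over the $\ell$ levels, exploiting that $j^\ast$ is hidden, that any bounded-error protocol for $\Asc$ has quantum information cost $\Omega(\sqrt n)$ up to a factor polynomial in $T$; comparing with the $\rO(Ts)$ bound above gives $s = \Omega(\sqrt n / T^3)$. Finally one observes that a length-$\Theta(n)$ hard instance yields the bound for length-$n$ instances after adjusting constants.

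I expect this last step to be the main obstacle, and the source of the extra $T$-factors relative to the classical bound $\Omega(\sqrt n/T)$ of~\cite{ChakrabartiCKM10,JainN14}. Classically one conditions freely on partial transcripts and on the hidden level $j^\ast$ when stitching the single-instance bounds together; quantumly, reading or conditioning on a register disturbs the global state, so the cut-and-paste, the averaging over the $\ell$ levels, and the direct-sum step all have to be recast using the new quantum information-theoretic tools developed in the paper, with each ``rewinding'' of the protocol paid for --- and it is exactly here that working with superpositions rather than distributions over inputs forces the more delicate analysis, and where the additional polynomial overhead in $T$ is incurred.
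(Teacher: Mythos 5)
Your high-level strategy is essentially the one the paper follows: reduce a $T$-pass quantum streaming algorithm to a quantum communication protocol for $\Asc$, apply a direct-sum argument over the $m$ nested coordinates to derive a low-information two-party protocol for Augmented Index, and then contradict a quantum information cost trade-off for Augmented Index proved via the superposition-average encoding and quantum cut-and-paste tools; the choice of the standard notion of $\QIC$ (with its direct-sum property and the fact that it lower bounds communication) is indeed the crucial departure from Jain--Nayak. However, several details of your reduction are off in ways worth flagging.

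First, the $\dyck(2)$ instance built from $\Asc(m,n)$ does \emph{not} read as a block of Alice symbols followed by a block of Bob symbols. The instance is
$(x^1\,\overline{y^1}\,\overline{z^1}\,z^1\,y^1)\,(x^2\,\overline{y^2}\,\overline{z^2}\,z^2\,y^2)\cdots(x^m\,\overline{y^m}\,\overline{z^m}\,z^m\,y^m)\,\overline{x^m}\cdots\overline{x^1}$,
so Alice's symbols (the $x^i$, $\overline{x^i}$) and Bob's symbols (the $\overline{y^i}\,\overline{z^i}\,z^i\,y^i$) genuinely interleave, and the closing brackets $\overline{x^i}$ are Alice's as well — there is no clean Alice-prefix/Bob-suffix split. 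Consequently, the streaming algorithm naturally yields a $(2m)$-party protocol with $\Theta(mT)$ messages of $s$ qubits each (Lemma~\ref{lem:reddycktoasc}), \emph{not} a two-party protocol with $\rO(T)$ messages. The two-party protocol $\Pi_j$ for Augmented Index only emerges after choosing a cut point $j$ and letting Alice simulate players $\sA_1,\sB_1,\dotsc,\sA_j$ while Bob simulates $\sB_j,\sA_{j+1},\dotsc,\sB_m$ (Lemma~\ref{lem:redaugind}); it is this collapse that gives a $2T$-message two-party protocol with per-direction communication $\rO(sT)$.

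Second, there is no ``hidden critical level $j^\ast$'' baked into the problem: $\Asc$ is the OR of the $m$ Augmented Index instances, and $j^\ast$ is simply the coordinate singled out by an averaging step in the direct sum. More importantly, the direct sum is \emph{asymmetric}: $\QIC_{\rA\to\rB}(\Pi_j,\mu_0)\le 2sT$ for every $j$, while the averaging only gives $\QIC_{\rB\to\rA}(\Pi_{j^\ast},\mu_0)\le 2sT/m$ for some $j^\ast$. This asymmetry is exactly what the trade-off of Theorem~\ref{th:formainlbaugind} — linear-in-$n$ for $\QIC_{\rA\to\rB}$ but only $\Omega(1/t^2)$ for $\QIC_{\rB\to\rA}$ — is designed to match, and glossing over it (as in ``cost $\rO(Ts)$'') obscures why $m=n$ is the right choice. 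With the parameters set correctly your final numerology does land on $\Omega(\sqrt{N}/T^3)$, but the route you describe to get there needs this repair.
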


The space bound above holds for a general model for quantum streaming
algorithms, one in which the computation is characterized by
arbitrary quantum operations. In particular, the computation may be
non-unitary, and may use ``on-demand'' ancillary qubits in addition to
the allowed work space. Some earlier work showing strong limitations
of bounded space, such as that on quantum finite
automata~\cite{ANTV02}, assumed unitary evolution.

Theorem~\ref{th:infmainlbstream} shows that, possibly up to logarithmic
factors and the dependence on the number of passes, 
quantum streaming algorithms are no more efficient than classical ones for this problem. 
In particular, this provides the first natural example for which classical bi-directional 
streaming algorithms perform exponentially better than unidirectional quantum streaming algorithms.

Theorem~\ref{th:infmainlbstream} is a consequence of a lower bound that
we establish on a measure of quantum information cost introduced
by Touchette~\cite{Touchette15}. (Henceforth, we use the term
``quantum information cost'' without any qualification to refer to
this notion.) We consider this cost for any quantum protocol $\Pi$
computing the Augmented Index function, with respect to an ``easy''
distribution~$\mu_0$: the
uniform distribution over the zeros of the function. Due to the asymmetry 
of the Augmented Index function, we distinguish between the amount of 
information Alice transmits to Bob, denoted $\QIC_{\rA \rightarrow \rB}
(\Pi, \mu_0)$ 
and the amount of information Bob transmits to Alice, denoted
$\QIC_{\rB \rightarrow \rA} (\Pi, \mu_0)$; see
Section~\ref{sec-qic} for formal definitions for these notions.
Our main technical contributions are in proving the following trade-off.
\begin{theorem}
\label{th:infmainlbaugind}
In any $t$-round quantum protocol $\Pi$ computing the Augmented Index function $f_n$ with 
constant error $\epsilon \in [0, 1/4)$ on any input, either $\QIC_{\rA \rightarrow \rB} (\Pi, \mu_0) \in \Omega (n / t^2)$ 
or $\QIC_{\rB \rightarrow \rA} (\Pi, \mu_0) \in \Omega (1 /t^2)$.
\end{theorem}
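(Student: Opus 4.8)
The plan is to prove the trade-off by a hybrid/information-flow argument that tracks how a quantum protocol for Augmented Index must "transport" information along the index. Recall that in the hard distribution~$\mu_0$ (uniform over zeros of~$f_n$), Alice holds~$x\in\{0,1\}^n$, and Bob holds an index~$k$, the prefix~$x[1,k-1]$, and the bit~$b=x_k$. The starting point is to restrict attention to the "core" instances and set up the standard reduction skeleton: if Bob reveals little about~$k$, then by an averaging argument the protocol behaves almost identically for many nearby values of~$k$; and for the protocol to be correct at~$k$ it must, intuitively, "read" the bit~$x_k$ out of Alice's message stream, which forces Alice's messages to carry~$\Omega(1)$ information about~$x_k$ for each such~$k$, and hence~$\Omega(n)$ in total once we can decouple these contributions. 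The technical heart is to make this decoupling work for \emph{superpositions} over inputs rather than mixtures, which is exactly the obstacle flagged in the abstract.

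\medskip

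\noindent\textbf{Step 1: Embed Augmented Index into a chain of Index-like subproblems.} Partition~$[n]$ into blocks and, for a typical block, argue that conditioned on~$k$ lying in that block, the protocol~$\Pi$ yields a good protocol for a smaller Augmented Index instance on the block. Using the $t$-round structure, a cut-and-paste / round-elimination style argument (the quantum analogue of the classical inductive argument of Jain--Nayak) reduces the claim to a single-block statement with the~$t^2$ loss appearing from two sources: a factor~$t$ from distributing a round-by-round message-compression bound across the~$t$ rounds, and another factor~$t$ from the accumulated fidelity/trace-distance errors when we hybrid across rounds.

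\medskip

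\noindent\textbf{Step 2: The single-block information lower bound via the new quantum tools.} For a single block, I would invoke the quantum information-theoretic tools advertised in the abstract — a quantum cut-and-paste lemma and a "superposition-friendly" average-encoding / information-preserving bound — to show: if $\QIC_{\rB\to\rA}(\Pi,\mu_0)$ is~$o(1/t^2)$, then Bob's messages, averaged over~$k$ in the block, are nearly~$k$-independent in the strong sense that Alice's view is close (in the appropriate quantum metric) to one not depending on~$k$; correctness then forces the reduced state on Alice's message registers, conditioned on~$x_k=0$ versus~$x_k=1$, to be noticeably distinguishable for almost every~$k$ in the block. Turning "distinguishable for every~$k$" into "$\Omega(|\text{block}|)$ total information about~$x$" is where one must be careful with superpositions: classically one would use the chain rule and the fact that revealing~$x_k$ for each~$k$ sums up, but quantumly the conditioning on~$x[1,k-1]$ (which Bob holds) and the coherence across~$x$ must be handled by a quantum chain rule for $\QIC$ together with a monotonicity/data-processing step that isolates the~$k$th contribution without destroying the others. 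Summing the per-block bounds over all~$\Theta(n/|\text{block}|)$ blocks yields~$\QIC_{\rA\to\rB}(\Pi,\mu_0)\in\Omega(n/t^2)$, completing the dichotomy.

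\medskip

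\noindent\textbf{Main obstacle.} The crux — and the reason the classical proof does not transfer verbatim — is Step~2's decoupling: in the quantum setting Alice's input register can be entangled with her message registers, so "Alice reveals~$\Omega(1)$ about~$x_k$ for each~$k$" does not immediately add up to~$\Omega(n)$, because the per-index informations are about overlapping quantum systems rather than disjoint classical bits. I expect the bulk of the work to be in proving a quantum information cost chain rule / superadditivity statement tailored to the Augmented Index structure (where Bob's knowledge of the prefix~$x[1,k-1]$ plays the role of the conditioning that makes the contributions additive), robust enough to tolerate the~$o(1)$ errors introduced by the hybrid argument of Step~1. A secondary, more routine obstacle is bookkeeping the two factors of~$t$: one must verify that the round-elimination hybrids and the message-compression steps compose with only polynomial loss and that the constant error~$\epsilon<1/4$ is preserved (up to a small additive slack) through all the approximations.
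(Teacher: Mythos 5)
Your proposal has the right ingredients listed in the abstract (a superposition-friendly average-encoding bound and a quantum cut-and-paste lemma), but the structural skeleton you build around them does not match what the paper does, and I believe it would fail at exactly the points you flag as "obstacles." Concretely: there is no block decomposition and no round elimination in the paper's argument. Instead, the proof fixes a single pair of index positions $j \in [n/2]$ and $\ell \in (n/2, n]$, a prefix $z \in \{0,1\}^{\ell}$, and considers the $2\times 2$ rectangle of inputs $\{z, z^{(\ell)}\} \times \{(j, z'), (\ell, z)\}$, where $z^{(\ell)}$ is $z$ with bit $\ell$ flipped. The quantum cut-and-paste lemma is applied once, at the end of the protocol, to argue that if all per-round distances $h_i$ are small, then the final states on the diagonal inputs $(z^{(\ell)}, (j,z'))$ and $(z^{(\ell)}, (\ell, z))$ — which have different function values — are close, contradicting correctness. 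The factor $t^2$ comes from (a) a concavity/Jensen step converting a sum of square roots of per-round information terms into a square root of the total, giving one factor of $t$, and (b) the sum over $\Theta(t)$ terms $h_i$ in the cut-and-paste bound, giving another; it does not come from composing round-elimination hybrids.

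The deeper issue is your Step~2. You correctly identify that "per-index $\Omega(1)$" bounds cannot simply be added in the quantum setting, and you propose to overcome this with a "quantum information cost chain rule / superadditivity" tailored to Augmented Index. The paper does not prove, and does not need, such a statement, and I suspect a general superadditivity of this kind is not available. What it does instead is much cheaper: Alice's input $X$ is a \emph{classical} register, so the ordinary classical chain rule for (Holevo) mutual information applies to decompose $\rI(X : B_iC_i \,|\, KX[1,K])$ into contributions $\rI(X_\ell : B_iC_i \,|\, \dots)$ from individual bits, and the factor $1/n$ appears as the probability weight of each $(j,\ell)$ pair, not as the result of chaining $n$ additive $\Omega(1)$ terms. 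The genuinely quantum difficulty (superpositions over inputs) is isolated to the \emph{Bob-to-Alice} direction, where the reduced state $\rho_{i,\ell z k x[1,k]}$ is a coherent superposition over the suffix $w$; this is where the superposition-average encoding theorem (proved via the Fawzi--Renner recovery map) is used, to show Alice's state is nearly independent of whether Bob's index is $j$ or $\ell$. Your sketch conflates these two roles and places the quantum chain-rule burden where it does not belong. As written, the proposal would not produce a proof; the missing idea is the specific $2\times 2$ rectangle construction with one index in each half of $[n]$ and a single-bit flip, which lets the classical chain rule do the additive work on $X$ while the quantum machinery handles only the superposed-suffix closeness.
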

A more precise statement is presented as Theorem~\ref{th:formainlbaugind}.
As in previous works, establishing a lower bound on the quantum
information cost for such an easy distribution is necessary; the
direct sum argument that links quantum streaming algorithms to
quantum protocols for Augmented Index crucially hinges on this.
(This phenomenon is common in such direct sum arguments.)

The high level intuition underlying the proof of
Theorem~\ref{th:infmainlbaugind}
and its structure is the same as that in Ref.~\cite{JainN14}.
There are two principal challenges in their approach, and the choice of
an appropriate measure of information cost is fundamental to
overcoming both challenges.
The first challenge is a direct sum argument that relates streaming
algorithms for \dyck(2) and communication protocols for Augmented Index.
The second challenge is establishing an information cost trade-off for
Augmented Index. Jain and Nayak considered several notions of
information cost, each one of which was effective in addressing one
challenge \emph{but not the other\/}.
This was further complicated by the intrinsic correlation
of the inputs for Augmented Index held by the two parties. Indeed, an
important motivation behind the notion of quantum information cost used
in Ref.~\cite{JainN14} is the desire to avoid leaking information about
the inputs by virtue of their preparation in superposition, instead of
exchanging information through interaction alone.
The notion they analyzed in detail admits an information cost trade-off,
but not a connection between streaming algorithms and low information
protocols. In particular, the notion does not seem to be bounded by
communication complexity.

Quantum information cost, as proposed by Touchette~\cite{Touchette15},
turns
out to be a suitable choice for quantifying the information content of
messages in our context. It is defined in terms of
conditional mutual information, conditioned on the
recipient's quantum state. Thus, this notion naturally avoids the
difficulties
arising from the intrinsic correlation between the two parties' inputs.
It is also relatively simple to derive low quantum
information cost protocols for Augmented Index from small-space streaming
algorithms for \dyck(2), through a direct sum argument.
Remarkably, the properties of quantum information cost allow us to
execute the reduction even for algorithms whose computation involves
arbitrary quantum operations, including non-unitary evolution.
However, a quantum information cost trade-off for Augmented Index still
presents significant obstacles. In order to overcome these,
we develop new tools for quantum communication complexity that we believe
have broader applicability.

One tool is a generalization of the well-known Average Encoding Theorem
of (classical and) quantum complexity theory~\cite{KNTZ07},
which formalizes the intuition that weakly correlated systems are
nearly independent. We call this generalized version the
\emph{Superposition-Average Encoding Theorem\/}, as it allows us to
handle arbitrary superpositions over inputs to quantum communication
protocols (as opposed to classical distributions over inputs).
The proof of this theorem builds on the breakthrough result by Fawzi
and Renner~\cite{FawziR15}, linking conditional quantum mutual information
to the optimal recovery map acting on the conditioning system.
Note that there is an obvious generalization of the Average Encoding
Theorem to an analogous result for conditional quantum mutual information
implied by the Fawzi-Renner inequality together with the Uhlmann theorem.
This \emph{cannot\/} directly be used in a proof \emph{{\`a} la\/}
Ref.~\cite{JainN14}. For one, such a generalization would give
us a unitary operation that acts on one part of a (pure)
``reconstructed''
state, and maps it to a state close to a target state. The hybrid
argument in Ref.~\cite{JainN14} relies on the commutativity of such
unitary operations corresponding to  successive messages in a
protocol, whereas the operations obtained by the obvious generalization do not commute.

Another key ingredient in the proof of Theorem~\ref{th:infmainlbaugind} 
is a \emph{Quantum Cut-and-Paste Lemma}, a variant of a technique
used in Refs.~\cite{JainRS03b, JainN14}, that
allows us to deal with easy distributions over inputs. The cut-and-paste lemma 
for randomized communication protocols connects the distance between transcripts obtained by 
running protocols on inputs chosen from a two-by-two rectangle $\{ x, x^\prime \} \times \{y, y^\prime \}$.
The cut-and-paste lemma is very powerful, and a direct quantum analogue does not hold. We can 
nevertheless obtain the following weaker variant, linking any four possible pairs of inputs in a two-by-two rectangle: 
if the states for a fixed input~$y$ to Bob are close up to a local unitary operator on Alice's side and the states for a fixed input~$x$ to Alice
are close up to a local unitary operator on Bob's side, then, up to local unitary
operators on Alice's and Bob's sides, the states for 
all pairs $(x^{\prime \prime}, y^{\prime \prime})$ of inputs in the rectangle $\{x, x^\prime \} \times \{y, y^\prime \}$ 
are close to each other.
This lemma allows us to link output states of protocols on inputs from an easy distribution, all mapping to the same output value, to an output state corresponding to a different output value. This
helps derive a contradiction to the assumption of low quantum
information cost, as states
corresponding to different outputs are distinguishable with constant
probability.

We go a step further with the quantum information cost trade-off.
We provide an alternative way to achieve a similar result, by using a
notion of information cost tailored to the Augmented Index problem.
An important stepping stone in this approach is the recently developed
\emph{Information Flow Lemma}
due to Lauri\`ere and Touchette~\cite{LT17-information-flow, LT17-information-flow-arxiv}.
The lemma allows us to track the transfer of information due to
interaction in quantum protocols, and provides insight into how
information might be leaked due to a superposition over inputs. By
conditioning on a suitable classical register, we avoid such leakage
of information. Pushing these ideas further, we are able to
bring the Average Encoding Theorem to bear in this context as well.
This helps us obtain a slightly better round-dependence in the
information cost trade-off.

\paragraph{Organization.}
Background and definitions related to quantum communication,
information theory, and streaming algorithms are presented in
Section~\ref{sec:prelim}. We then adapt, in Section~\ref{sec-reductions},
the known two-step reduction from Augmented Index to \dyck(2) to
the new notion for information cost due to Touchette~\cite{Touchette15}
and to the general model for streaming algorithms that we define.
The technical tools that we develop and use are presented in
Section~\ref{sec:tools}.
The main lower bound on the quantum information cost of the Augmented
Index function is derived in Section~\ref{sec:QIC-lb-AI}.
A lower bound with a slightly better dependence on the number of
rounds is presented in Section~\ref{sec:altlbqic}.

\paragraph{Acknowlegements.}
We are grateful to Mark Braverman, Ankit Garg, Young Kun Ko, and Jieming Mao for useful discussions related to the development of the Superposition-Average Encoding Theorem and Quantum Cut-and-Paste Lemma.

\section{Preliminaries}
\label{sec:prelim}

We refer the reader to text books such as~\cite{Watrous18,Wilde13} 
for standard concepts and the associated notation from quantum information.

\subsection{Quantum Communication Complexity}
\label{sec-qcc}

We use the following notation for interactive communication between two
parties, called Alice and Bob by convention. For a register~$S$, we denote the set of its possible quantum states by~$\D(S)$. An $M$-message protocol $\Pi$ 
for a task with input registers $A_{\mathrm{in}}B_{\mathrm{in}}$
and output registers $A_{\mathrm{out}}B_{\mathrm{out}}$ is defined by a sequence of isometries $U_1, \dotsc, U_{M + 1}$ 
along with a pure state $\psi \in \D (T^\rA T^\rB)$ shared between Alice and
Bob, for some arbitrary but finite 
dimensional registers $T^\rA T^\rB$. We refer to~$\psi$ as the
shared entanglement. We have $M+1$ isometries in 
an~$M$-message protocol, as one isometry is applied before each message,
and a final isometry is applied after the last message is 
received. We assume that Alice sends the first message. 
In the case of even $M$, the
registers $A_1, A_3, \dotsc, A_{M - 1}, A^\prime$ are held by Alice, the registers~$B_2, B_4, \dotsc, B_{M - 2}, B^\prime$ 
are held by Bob, and the registers~$C_1C_2C_3\cdots C_M$ represent the
quantum messages exchanged by Alice and Bob. The~$M+1$ isometries act on
these registers as indicated by the superscripts below (also see Figure~\ref{fig:int_mod}):
\begin{align*}
U_1^{A_{\mathrm{in}} T^\rA \rightarrow A_1 C_1},\qquad &U_2^{B_{\mathrm{in}}
T^\rB C_1 \rightarrow B_2 C_2},\qquad U_3^{A_1 C_2 \rightarrow A_3 C_3},\qquad U_4^{B_2 C_3 \rightarrow B_4 C_4},\notag\\
\;\cdots,\qquad & U_{M}^{B_{M - 2} C_{M - 1} \rightarrow B_{\mathrm{out}}
B^\prime C_{M}},\qquad U_{M + 1}^{A_{M - 1} C_M \rightarrow A_{\mathrm{out}}
A^\prime} \enspace.
\end{align*}
We adopt the convention that, at the outset, 
$A_0 = A_{\mathrm{in}} T^\rA$, $B_0 = B_{\mathrm{in}} T^\rB$; for odd $i$ 
with $1 \leq i < M$, $B_i = B_{i-1}$; for even $i$ with 
$1 < i \leq M$, $A_i = A_{i-1}$; also $B_M = B_{M + 1} = B_{\mathrm{out}} B^\prime$, 
and $A_{M+1} = A_{\mathrm{out}} A^\prime$. In this way, after the application of $U_i$, Alice 
holds register $A_i$, Bob holds register $B_i$ and the communication register is $C_i$. In the 
case of an odd number of messages~$M$, the registers corresponding to $U_M, U_{M+1}$ are 
changed appropriately. We slightly abuse notation and also write $\Pi$ to denote the channel 
from $A_{\mathrm{in}} B_{\mathrm{in}}$ to $A_{\mathrm{out}} B_{\mathrm{out}}$ 
implemented by the protocol. That is, for any $\rho \in \D(A_{\mathrm{in}} B_{\mathrm{in}})$,
\begin{align*}
\Pi (\rho) \quad \eqdef \quad \Tra{A^\prime B^\prime} \left[ U_{M + 1} U_M \cdots U_2 U_1 (\rho \otimes \psi)\right]\,.
\end{align*}
The registers~$A^\prime $ and~$ B^\prime $ that are 
discarded by Alice and Bob, respectively, are two of the registers
at the end of the protocol.
As on the RHS above, we sometimes omit the superscript if the system is clear from context.

We restrict our attention to protocols with classical inputs~$XY$,
with~$A_\rin B_\rin$ initialized to~$XY$, and to
so-called ``safe protocols''. Safe protocols
only use~$A_\rin B_\rin$ as control registers. As explained in
Section~\ref{sec-qic}, this does not affect the results presented in
this article.

\suppress{
Lauri\`ere and Touchette~\cite{LT17-information-flow,LT17-information-flow-arxiv}, show that 
in any protocol~$\Pi$ with classical inputs, if
the two parties encode their inputs into a 
quantum register and modify them, the associated quantum information 
cost is at least as much as in a protocol~$\Pi'$ that makes a 
local copy of the inputs at 
the outset (into registers that are never touched again) and then 
simulates the original protocol~$\Pi$.
(See Section~\ref{sec-qic} for the definition of quantum information
cost we use in this work.)
}

We imagine that the joint classical input $XY$ is purified by a
register~$R$. 
We often partition the purifying register as~$R = R_X R_Y$,
indicating that the classical input~$XY$, distributed as~$\nu$, and
represented by the quantum state $\rho_\nu$,
\begin{align*}
 \rho_\nu^{XY} \quad \eqdef \quad \sum_{x, y} \nu (x, y) \; \kb{x}{x}^X \otimes \kb{y}{y}^Y \enspace,
\end{align*}
is purified as
\begin{align*}
 \ket{\rho_\nu} \quad \eqdef \quad \sum_{x, y} \sqrt{\nu (x, y)} \;
\ket{xxyy}^{XR_X Y R_Y} \enspace.
\end{align*}
We also use other partitions more appropriate for our purposes, 
corresponding to particular preparations of the inputs $X$ and $Y$. 

We define the quantum communication cost of $\Pi$ from Alice to Bob as 
\begin{align*}
\QCC_{\rA \to \rB} (\Pi) \quad
    \eqdef \quad   \sum_{i \in [0, (M-1)/2]} \log \size{C_{2i +1}} \,,
\end{align*}
and the quantum communication cost of $\Pi$ from Bob to Alice as
\begin{align*}
\QCC_{\rB \to \rA} (\Pi) \quad 
    \eqdef \quad \sum_{i \in [1, M/2]} \log \size{C_{2i}} \,,
\end{align*}
where for a register~$D$, the notation~$\size{D}$ stands for the dimension
of the state space associated with the register.
The total communication cost of the protocol is then the sum of these two quantities.

\begin{figure}
\begin{overpic}[width=1\textwidth]{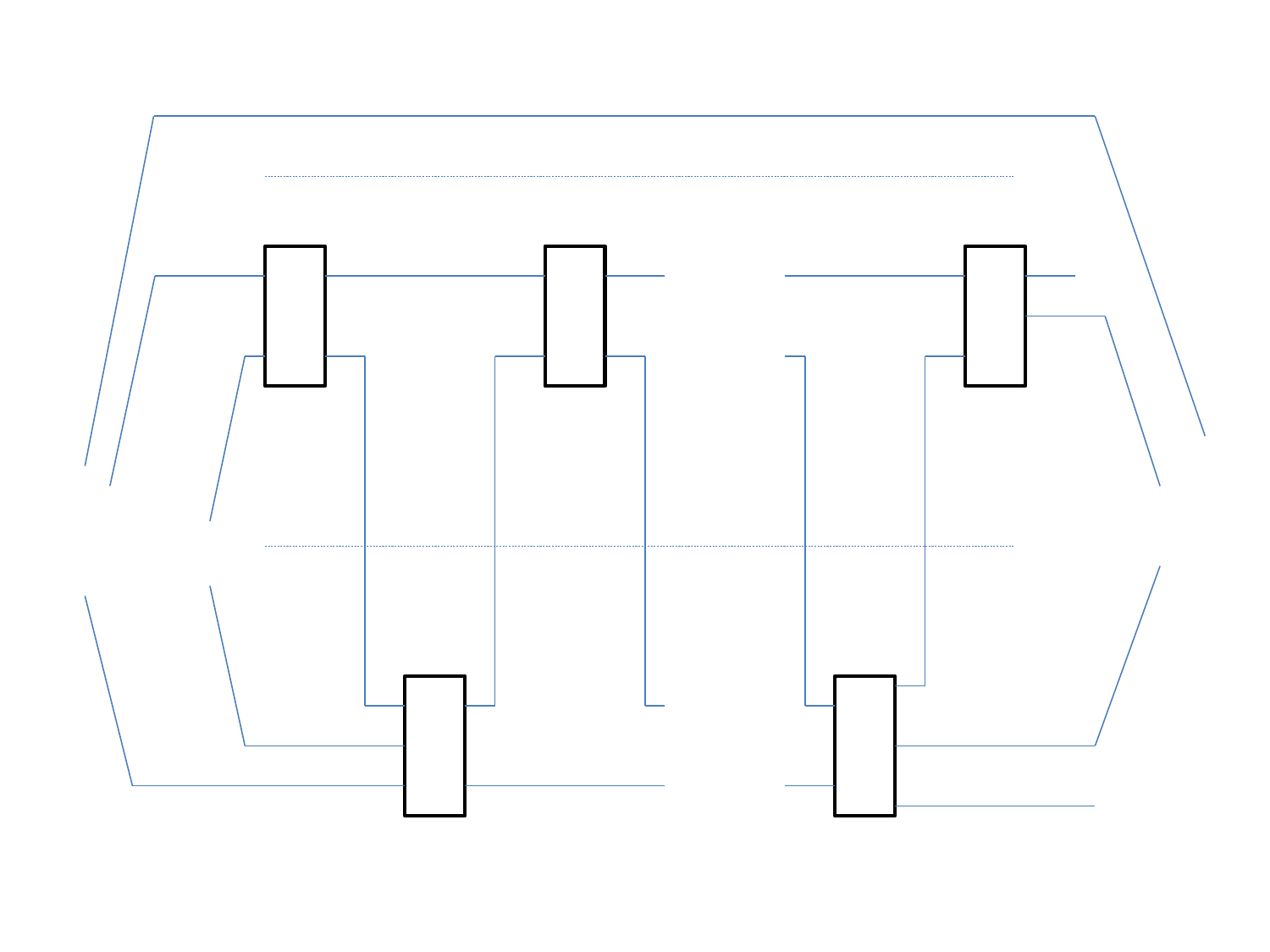}
  \put(0,65){Reference}
  \put(0,45){Alice}
  \put(0,22){Bob}
  \put(6,31){\footnotesize $\ket{\rho_\nu}$}
  \put(15,66.5){\footnotesize$R$}
  \put(15,54){\footnotesize$A_{\rin}$}
  \put(15,13.8){\footnotesize$B_{\rin}$}
  \put(15,44){\footnotesize$T^\rA$}
  \put(15,19){\footnotesize$T^\rB$}
  \put(22.1,49.5){\footnotesize$U_1$}
  \put(15,31){\footnotesize$\ket{\psi}$}
  \put(26.2,54){\footnotesize$A_1$}
  \put(26.2,48){\footnotesize$C_1$}
  \put(33,15.5){\footnotesize$U_2$}
  \put(37.2,54){\footnotesize$A_2$}
  \put(37.2,17.4){\footnotesize$C_2$}
  \put(37.2,13.7){\footnotesize$B_2$}
  \put(44.2,49.5){\footnotesize$U_3$}
  \put(48.2,54){\footnotesize$A_3$}
  \put(48.2,48){\footnotesize$C_3$}
  \put(48.2,13.7){\footnotesize$B_3$}
  \put(55,33){\footnotesize$\cdots$}
  \put(59.5,54){\footnotesize$A_{M-1}$}
  \put(59.5,48){\footnotesize$C_{M-1}$}
  \put(59.5,13.7){\footnotesize$B_{M-1}$}
  \put(66.5,15.5){\footnotesize$U_{M}$}
  \put(71.7,54){\footnotesize$A_M$}
  \put(73.5,23){\footnotesize$C_M$}
  \put(73.5,17.0){\footnotesize$B_{\rout}$}
  \put(73.5,12.0){\footnotesize$B^{\prime}$}
  \put(75.9,49.5){\footnotesize$U_{M+1}$}
  \put(82.3,54){\footnotesize$A^\prime$}
  \put(82.3,50.9){\footnotesize$A_{\rout}$}
  \put(92,33){\footnotesize$\Pi (\rho_\nu)$}
\end{overpic}
  \caption{Depiction of an interactive quantum protocol, adapted
from Ref.~\cite[Figure~1]{Touchette14}, the full version of
Ref.~\cite{Touchette15}.}
  \label{fig:int_mod}
\end{figure}


\subsection{Distance Measures}
\label{sec:distmeas}

In order to distinguish between quantum states, we use two related distance measures: trace distance and Bures distance.

\paragraph{Trace Distance.}

The trace distance between two quantum states $\rho$ and 
$\sigma$ in the same state space is denoted as $\| 
\rho - \sigma \|_1$ , where
\begin{align*}
	\| O \|_1 \quad \eqdef \quad \Tr{}{(O^\dagger O)^{\tfrac{1}{2}}}
\end{align*}
is 
the trace norm for operators. In operational terms,
the trace distance between the two states $\rho$ and $\sigma$ is 
four times the best possible bias with which we can distinguish 
between the two states, given a single copy of an unknown state out of the two.

We use the following properties of trace distance.
First, it is a metric, so it is symmetric in $\rho, \sigma$,
non-negative, evaluates to $0$ if and only if $\rho = \sigma$,
and it satisfies the triangle inequality. Moreover, it is monotone under
 the action of channels: for any $\rho_1, \rho_2 \in \D (A)$ and channel $\N^{A \rightarrow B}$ from system $A$ to system $B$,
\begin{align*}
	\| \N (\rho_1) - \N (\rho_2) \|_1 \quad \leq \quad \| \rho_1 - \rho_2 \|_1
\enspace.
\end{align*}
For isometries, the inequality is tight, a property called isometric
 invariance of the trace distance. Hence, for any $\rho_1$, $\rho_2 \in \D(A)$ and any isometry $U^{A \rightarrow B}$, we have
\begin{align*}
	\| U (\rho_1) - U (\rho_2) \|_1 \quad = \quad \| \rho_1 - \rho_2 \|_1
\enspace.
\end{align*}
Trace distance obeys a joint linearity property: 
for a classical system $X$ and two states
\begin{align}
\begin{split}
\rho_1^{XA} & \quad \eqdef \quad \sum_x p_X (x) \, \kb{x}{x}^X \otimes \rho_{1, x}^A \enspace, \qquad \text{and} \\
\rho_2^{XA} & \quad \eqdef \quad \sum_x p_X (x) \, \kb{x}{x}^X \otimes \rho_{2, x}^A \enspace,
\label{eq-cqstates}
\end{split}
\end{align}
we have
\begin{align}
\label{eq-tr-joint-linearity}
	\| \rho_1 - \rho_2 \|_1 \quad = \quad \sum_x p_X (x) \; \| \rho_{1,x} - \rho_{2, x}  \|_1
\enspace.
\end{align}

\paragraph{Bures Distance.}

Bures distance $\fh$ is a fidelity based distance measure, defined for $\rho, \sigma \in \D (A)$ as
\begin{align*}
\fh  \!\left(\rho, \sigma \right) \quad \eqdef \quad \sqrt{1 - \rF (\rho, \sigma)}\ ,
\end{align*}
where fidelity $\rF$ is defined as $\rF (\rho, \sigma) \eqdef \| \sqrt{\rho} \sqrt{\sigma} \|_1$.

We use the following properties of Bures distance. First, it is a metric, so it is symmetric in $\rho, \sigma$, 
non-negative, evaluates to $0$ if and only if $\rho = \sigma$, and satisfies the triangle inequality. Moreover, it is monotone
under
 the action of a channel: for any $\rho_1$, $\rho_2 \in \D (A)$ and quantum
channel $\N^{A \rightarrow B}$,
\begin{align*}
	\fh \!\left( \N (\rho_1) , \N (\rho_2) \right) \quad \leq \quad
\fh \!\left( \rho_1,  \rho_2  \right)
\enspace.
\end{align*}
For isometries, the inequality is tight, a property called isometric
 invariance of the Bures distance.

It is sometimes convenient to work with the square of the Bures distance.
In particular, the square obeys a joint linearity property: 
for a classical system $X$ and two states $\rho_1^{XA}, \rho_2^{XA}$ defined as in Eq.~(\ref{eq-cqstates}),
\begin{align}
\label{eq-bures-joint-linearity}
	\fh^2  \!\left( \rho_1 , \rho_2  \right) \quad = \quad \sum_x p_X (x) \,
\fh^2  \!\left( \rho_{1,x} , \rho_{2, x}   \right)
\enspace.
\end{align}
It also satisfies a weaker version of the triangle inequality: for any $\rho_1$, $\rho_2$ and $\sigma \in \D(A)$,
\begin{align}
\label{eq-weak-triangle}
\fh^2  \!\left(\rho_1 , \rho_2 \right) \quad \leq \quad 2 \fh^2
\!\left(\rho_1, \sigma \right) + 2 \fh^2  \!\left(\sigma , \rho_2 \right) \enspace.
\end{align}

\paragraph{Local Transition Lemma.}

The following lemma, a direct consequence of the Uhlmann theorem, is
called the local transition lemma~\cite{KNTZ07},
especially when expressed 
in terms of other metrics.

\begin{lemma}\label{lem:loctr}
Let $\rho_1$, $\rho_2 \in \D (A)$ have purifications $\rho_1^{A R_1}$, $\rho_2^{A R_2}$, with $|R_1| \leq |R_2|$. Then, there exists an isometry $V^{R_1 \rightarrow R_2}$ such that
\begin{align*}
\fh \big(\rho_1^{A}, \rho_2^{A}\big) \quad = \quad \fh \Big(V \big(\rho_1^{A R_1}\big), \rho_2^{A R_2} \Big)\ .
\end{align*}
\end{lemma}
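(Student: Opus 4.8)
The plan is to read off the lemma from Uhlmann's theorem after transporting both purifications onto the common register $AR_2$. Recall Uhlmann's theorem in the following form: for $\rho,\sigma\in\D(A)$ and any two purifications $\ket{\rho}^{AB}$, $\ket{\sigma}^{AB}$ defined on one and the same space $A\otimes B$, one has $\rF(\rho,\sigma)=\max_{U}\bigl|\bra{\rho}(\bI^A\otimes U)\ket{\sigma}\bigr|$, where the maximum ranges over unitaries $U$ on $B$ and is attained.

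First I would fix an arbitrary isometric embedding $W^{R_1\to R_2}$, which exists because $|R_1|\le|R_2|$, and put $\ket{\rho_1'}^{AR_2}:=(\bI^A\otimes W)\ket{\rho_1}^{AR_1}$. Since $W$ is an isometry, $\Tra{R_2}\bigl(\kb{\rho_1'}{\rho_1'}\bigr)=\rho_1^A$, so $\ket{\rho_1'}$ is again a purification of $\rho_1^A$, now living on the same register $AR_2$ as the given purification $\ket{\rho_2}^{AR_2}$ of $\rho_2^A$. Applying Uhlmann's theorem to $\rho_1^A$ and $\rho_2^A$ with these two purifications yields a unitary $U^{R_2}$ with $\bigl|\bra{\rho_2}(\bI^A\otimes U)\ket{\rho_1'}\bigr|=\rF(\rho_1^A,\rho_2^A)$. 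Setting $V:=U^{R_2}W\colon R_1\to R_2$ — a composition of an isometry with a unitary, hence an isometry — the state $V(\rho_1^{AR_1})$ is the pure state with state vector $(\bI^A\otimes V)\ket{\rho_1}$, whose overlap with $\ket{\rho_2}$ has modulus exactly $\rF(\rho_1^A,\rho_2^A)$. Because the fidelity of two pure states is the modulus of their overlap, $\fh\bigl(V(\rho_1^{AR_1}),\rho_2^{AR_2}\bigr)=\sqrt{1-\rF(\rho_1^A,\rho_2^A)}=\fh(\rho_1^A,\rho_2^A)$, which is the claim. (If one prefers to see the two inequalities separately: ``$\le$'' is the optimality in Uhlmann's theorem, while ``$\ge$'' holds for \emph{every} isometry $V^{R_1\to R_2}$ by monotonicity of $\fh$ under the channel $\Tra{R_2}$, together with $\Tra{R_2}\bigl(V(\rho_1^{AR_1})\bigr)=\rho_1^A$.)

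I do not expect any genuine obstacle; the one point deserving a moment's care is the legitimacy of invoking Uhlmann's equality on the common space $AR_2$, i.e.\ that $AR_2$ is ``large enough'' for the optimizing unitary. This is automatic here, since $\ket{\rho_1'}$ and $\ket{\rho_2}$ are bona fide purifications on $AR_2$, and $|R_2|\ge|R_1|\ge\mathrm{rank}(\rho_1^A)$ together with $|R_2|\ge\mathrm{rank}(\rho_2^A)$ leave enough room.
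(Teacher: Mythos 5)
Your argument is correct and is exactly the standard deduction from Uhlmann's theorem that the paper alludes to (the paper states the lemma as ``a direct consequence of the Uhlmann theorem'' without spelling out a proof). Embedding $R_1$ into $R_2$ by an arbitrary isometry $W$, invoking Uhlmann on the common space $AR_2$ to obtain the optimizing unitary $U$, and setting $V = UW$ is precisely the intended argument, and your closing remark that the ``$\ge$'' direction holds for \emph{every} isometry by monotonicity of $\fh$ under partial trace is a clean way to see the two halves separately.
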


Bures distance is related to trace distance through a generalization of the Fuchs-van de Graaf inequalities~\cite{FuchsG99}:
for any $\rho_1$, $\rho_2 \in \D (A)$ , it holds that
\begin{align}
	\label{eq:fvdg}
	\fh^2  \!\left(\rho_1, \rho_2 \right)
            \quad \leq \quad \frac{1}{2} \; \| \rho_1 - \rho_2 \|_1
            \quad \leq \quad \sqrt{2} \fh \!\left(\rho_1, \rho_2
\right) \enspace.
\end{align}

\subsection{Information Measures}
\label{app:infomeas}

In order to quantify the information content of a quantum state,
we use a basic measure, von Neumann entropy, defined as
\begin{align*}
	\rH(A)_\rho \quad \eqdef \quad - \Tr{}{\rho \log \rho}
\end{align*}
for any state $\rho \in \D(A)$.
Here, we follow the convention that $0 \log 0 = 0$, which is
justified by a continuity argument. The logarithm is in base $2$. 
Note that $\rH$ is invariant under isometries applied on $\rho$.
If the 
state in question is clear from the context, we may omit the subscript. 
We also note that if system $A$ is classical, then von Neumann entropy
reduces to Shannon entropy. 

For a state $\rho^{ABC} \in \D(A  B  C)$, the mutual information between
registers~$A,B$ is defined as
\begin{align*}
	\rI(A \!:\! B )_\rho \quad \eqdef \quad \rH(A)  + \rH(B) - \rH(AB)
\enspace,
\end{align*}
and the conditional mutual information between them, given~$C$, as
\begin{align*}
	\rI(A \!:\! B \,|\,  C )_\rho \quad \eqdef \quad \rI (A \!:\! BC) -
\rI(A  \!:\!  C)
\enspace.
\end{align*}
If $X$ is a classical system,
	$\rI(X  \!:\! B)$
is also called the Holevo information.

Mutual information and conditional mutual information are symmetric
in~$A,B$, and invariant under a local isometry applied to $A, B$ or $C$.
Since all purifications of a state are equivalent up to an isometry on the purification registers, we have that for any two pure states
$\ket{\phi}^{ABC R^\prime}$ and $\ket{\psi}^{ABC R}$ such that 
$\phi^{ABC} = \psi^{ABC}$,
\begin{align*}
	\rI(C \!:\!R'\,|\, B)_{\phi} \quad = \quad \rI(C \!:\!R\,|\, B)_{\psi} \enspace.
\end{align*}
For any state $\rho \in \D(A  B  C )$, we have the bounds
\begin{align*}
	0 \quad \leq \quad \rH(A) \quad \leq \quad \log |A| \enspace, \\
	0 \quad \leq \quad \rI(A \!:\! B \,|\,  C) \quad \leq  \quad 2 \; \rH(A)
\enspace.
\end{align*}
For a multipartite quantum system $ABCD$, conditional mutual information 
satisfies a chain rule: for any $\rho \in \D(A  B  C  D)$,
\begin{align*}
	\rI (AB  \!:\! C \,|\,  D) \quad = \quad \rI (A \!:\! C \,|\,
D) + \rI(B \!:\! C \,|\,  A D)
\enspace.
\end{align*}
For any product state $\rho^{A_1 B_1 A_2 B_2} \eqdef
\rho_1^{A_1 B_1} \otimes \rho_2^{A_2 B_2}$, entropy is additive across
the bi-partition, so that, for example,
\begin{align*}
	\rH(A_1 A_2) \quad = \quad \rH(A_1) + \rH(A_2) \enspace,
\end{align*}
and the conditional mutual information between product systems vanishes:
\begin{align*}
	\rI(A_1  \!:\! A_2 \,|\,  B_1 B_2 ) \quad = \quad 0 \enspace.
\end{align*}
Two important properties of the conditional mutual information are non-negativity and the data processing inequality, both equivalent to a deep result in quantum information theory known as strong subadditivity~\cite{LR73}.
For any state $\rho \in \D (A  B  C)$, channel $\N^{B \rightarrow B^\prime} $, and state $\sigma \eqdef \N (\rho)$,
we have
\begin{align*}
	\rI (A \!:\! B \,|\,  C)_\rho  & \quad \geq \quad 0, \\
	\rI (A \!:\! B \,|\,  C)_\rho  & \quad \geq \quad \rI (A \!:\! B^\prime \,|\,  C)_\sigma
\enspace.
\end{align*}
For classical systems, conditioning is equivalent to taking an average:
for any state
\[
\rho^{ABCX} \quad \eqdef \quad \sum_x p_X(x) \, \kb{x}{x}^X \otimes \rho_x^{ABC}
\]
with a classical system $X$ and some states $\rho_x \in \D (A  B  C)$,
\begin{align*}
\label{eq:cqmicond}
	\rI (A \!:\! B \,|\,  C X )_\rho & \quad = \quad \sum_x p_X (x) \rI(A \!:\! B \,|\,  C)_{\rho_x}
\enspace.
\end{align*}

\paragraph{Average Encoding Theorem.} The following lemma, known as the Average Encoding Theorem~\cite{KNTZ07,JainRS03b}, formalizes the intuition that if a classical and a quantum registers are weakly correlated, then they are nearly independent.

\begin{lemma}
\label{lem:avenc}
For any $\rho^{XA} \eqdef \sum_x p_X(x) \, \kb{x}{x}^X \otimes \rho_x^{A}$ with a classical system $X$ and
 states $\rho_x \in \D (A )$,
\begin{align*}
	\sum_x p_X (x) \fh^2  \!\left(\rho_x^A, \rho^A \right) &
\quad \leq \quad  \rI(X \!:\! A )_{\rho}
\enspace.
\end{align*}

\end{lemma}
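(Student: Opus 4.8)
\textbf{Proof proposal for the Average Encoding Theorem (Lemma~\ref{lem:avenc}).}

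The plan is to reduce the statement to a known relation between conditional quantum mutual information and Bures distance by introducing a purifying reference register. First I would write $\rho^{XA} = \sum_x p_X(x)\,\kb{x}{x}^X\otimes\rho_x^A$ and pass to the purification: for each $x$ let $\ket{\rho_x}^{AR}$ be a purification of $\rho_x^A$ on a common reference register $R$ (padding dimensions so a single $R$ works for all $x$), and set $\ket{\rho}^{XAR} = \sum_x \sqrt{p_X(x)}\,\ket{x}^X\otimes\ket{\rho_x}^{AR}$, which purifies $\rho^{XA}$. Because $X$ is classical, tracing out $R$ gives back $\rho^{XA}$, and $\rI(X\!:\!A)_\rho$ is unchanged; the point of this move is that it lets us relate the averaged Bures distance $\sum_x p_X(x)\,\fh^2(\rho_x^A,\rho^A)$ to an information quantity.

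The key step is to invoke the local transition lemma (Lemma~\ref{lem:loctr}): both $\rho_x^A$ (purified as $\ket{\rho_x}^{AR}$) and $\rho^A$ (purified as $\ket{\rho}^{AR}$, extending $R$ to absorb $X$ as well) are states on $A$ with purifications on a reference of matching size, so there is an isometry $V_x$ on the reference satisfying $\fh(\rho_x^A,\rho^A) = \fh\big(V_x\ket{\rho_x}^{AR},\ket{\rho}^{AR}\big)$. By isometric invariance, one can then bound $\sum_x p_X(x)\,\fh^2(\rho_x^A,\rho^A)$ by a Bures-type distance between the conditional purified states indexed by $X$ and a single fixed purified state. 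Concretely I would identify this averaged quantity with (half of, up to constants, via the Fuchs--van de Graaf inequalities~\eqref{eq:fvdg} if one prefers to route through trace distance) the distance between $\rho^{XA}$ and $\rho^X\otimes\rho^A$, or more directly bound it by $\rI(X\!:\!A)_\rho$ using the standard fact that the Holevo information upper-bounds the averaged squared Bures distance of the conditional states from their average. The cleanest route: apply the inequality $\rH(A)_{\rho_x}$-type chain together with the relation $\sum_x p_X(x)\,\fh^2(\rho_x,\bar\rho)\le \rI(X\!:\!A)$ that follows from concavity of fidelity and the Pinsker-type bound $\fh^2(\sigma,\tau)\le \frac12\|\sigma-\tau\|_1 \le$ (relative entropy terms), then sum over $x$ and recognize $\sum_x p_X(x)\, D(\rho_x\|\rho^A) = \rI(X\!:\!A)_\rho$.

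The main obstacle is the last identification: one needs that $\sum_x p_X(x)\,\fh^2(\rho_x^A,\rho^A)$ is genuinely bounded by $\rI(X\!:\!A)_\rho$ and not merely by something like $\sqrt{\rI(X\!:\!A)}$. This is where care is required — the naive route through trace distance and Pinsker loses a square root. The fix is to work directly with fidelity/relative entropy: using $-\log \rF^2(\rho_x,\rho^A)\le D(\rho_x\|\rho^A)$ and $1 - \rF(\rho_x,\rho^A) \le -\log \rF(\rho_x,\rho^A) = -\tfrac12\log \rF^2(\rho_x,\rho^A)$, so $\fh^2(\rho_x^A,\rho^A) = 1-\rF(\rho_x,\rho^A)\le \tfrac12 D(\rho_x\|\rho^A)$; averaging over $x$ with weights $p_X(x)$ and using $\sum_x p_X(x)\,D(\rho_x^A\|\rho^A) = \rI(X\!:\!A)_\rho$ (a standard identity for classical-quantum states) gives $\sum_x p_X(x)\,\fh^2(\rho_x^A,\rho^A)\le \tfrac12\rI(X\!:\!A)_\rho$, which in particular implies the claimed bound. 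I would double-check the constant and whether the factor $\tfrac12$ is what the paper wants, or whether they state the weaker $\le \rI(X\!:\!A)_\rho$ deliberately for robustness; in either case the argument above suffices, and the reference register / local transition lemma setup is what makes the fidelity manipulations legitimate.
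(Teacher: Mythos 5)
The paper states Lemma~\ref{lem:avenc} with a citation to Klauck, Nayak, Ta-Shma, and Zuckerman and gives no proof, so there is no in-paper argument to compare against. Your final paragraph is a correct, self-contained proof and is essentially the standard one: chain
$\fh^2(\rho_x^A,\rho^A)=1-\rF(\rho_x^A,\rho^A)\le-\log_2\rF(\rho_x^A,\rho^A)=-\tfrac12\log_2\rF^2(\rho_x^A,\rho^A)\le\tfrac12\,D(\rho_x^A\|\rho^A)$,
where the last step is the monotonicity of the sandwiched R\'enyi divergence at $\alpha=1/2$ (equivalently $D\ge-2\log_2\rF$), then average using the cq-state identity $\sum_x p_X(x)\,D(\rho_x^A\|\rho^A)=\rI(X\!:\!A)_\rho$; this even yields the sharper constant $\tfrac12$, which implies the stated bound. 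You are also right to flag and avoid the Pinsker-plus-Fuchs--van~de~Graaf detour, which loses a square root.

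One criticism: the opening two paragraphs about a purifying reference register and the local transition lemma (Lemma~\ref{lem:loctr}) do no work and should be cut. The isometries $V_x$ you introduce there are never used, and the remark that $\sum_x p_X(x)\,\fh^2(\rho_x^A,\rho^A)=\fh^2(\rho^{XA},\rho^X\otimes\rho^A)$ (joint linearity of $\fh^2$ over the classical register) is true but plays no role in the chain you actually run; if you wanted to use it, it would only repackage the same final inequality as $\fh^2(\rho^{XA},\rho^X\otimes\rho^A)\le\tfrac12 D(\rho^{XA}\|\rho^X\otimes\rho^A)=\tfrac12\rI(X\!:\!A)_\rho$. With those paragraphs removed, the proof is clean and complete.
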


\paragraph{Fawzi-Renner inequality.}

We use the following breakthrough result by Fawzi and Renner~\cite{FawziR15}.
It provides a lower bound on the quantum conditional mutual information in 
terms of the fidelity for the optimal recovery map.

\begin{lemma}
\label{lem:frineq}
For any tripartite quantum state $\rho^{ACR}$, there exists a recovery 
map $\R^{A \rightarrow AC}$ from register $A$ to registers $AC$ satisfying
\begin{align*}
\rI(C \!:\!R \,|\, A) \quad \geq \quad - 2 \, \log_2 \rF (\rho^{ACR}, \R (\rho^{AR})) \enspace.
\end{align*}
In particular, it follows that 
\begin{align*}
\rI(C \!:\!R \,|\, A) \quad \geq \quad \fh^2  \!\left(\rho^{ACR}, \R
(\rho^{AR}) \right) \enspace.
\end{align*}
\end{lemma}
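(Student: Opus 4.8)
The plan is short, because the first displayed inequality is, up to a relabelling of registers, exactly the main theorem of Fawzi and Renner~\cite{FawziR15}. Under the correspondence (their $A$, $B$, $C$) $\leftrightarrow$ (our $R$, $A$, $C$), their result furnishes, for every tripartite state $\rho^{ACR}$, a recovery channel $\R^{A\rightarrow AC}$ satisfying $\rI(C\!:\!R\,|\,A)_\rho \geq -2\log_2 \rF(\rho^{ACR},\R(\rho^{AR}))$. So for the first bound I would simply invoke \cite{FawziR15} and take $\R$ to be the recovery map it provides; no new argument is needed. The only content beyond \cite{FawziR15} is the passage to the Bures-distance form in the ``in particular'' clause --- and that is the form actually used downstream (for instance in the superposition-average encoding theorem) --- so that is where I would carry out a (very short) calculation.

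For the ``in particular'' part I would argue as follows. By definition $\fh^2(\rho^{ACR},\R(\rho^{AR})) = 1 - \rF(\rho^{ACR},\R(\rho^{AR}))$, so it suffices to show that the first bound implies $\rI(C\!:\!R\,|\,A)_\rho \geq 1 - \rF(\rho^{ACR},\R(\rho^{AR}))$; that is, writing $t := \rF(\rho^{ACR},\R(\rho^{AR})) \in (0,1]$, it is enough that $-2\log_2 t \geq 1 - t$ on $(0,1]$. This is elementary: from $\ln t \leq t - 1$ one gets $-\log_2 t = -(\ln t)/\ln 2 \geq (1-t)/\ln 2 \geq 1 - t$, the last step because $0 < \ln 2 < 1$ and $1 - t \geq 0$; and since $-\log_2 t \geq 0$ for $t \leq 1$, a fortiori $-2\log_2 t \geq 1-t$. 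Chaining this with the first inequality and the definition of $\fh$ gives the second bound.

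There is no genuine obstacle internal to this lemma: it is a restatement of the Fawzi--Renner theorem plus a one-line inequality. The only point worth flagging is the degenerate value $t = 0$, which cannot occur here --- the first inequality already forces $\rF(\rho^{ACR},\R(\rho^{AR})) > 0$ whenever $\rI(C\!:\!R\,|\,A)_\rho$ is finite, and in finite dimensions $\rI(C\!:\!R\,|\,A)_\rho \leq 2\,\rH(C) < \infty$. The substantive difficulty lies downstream: in leveraging this recovery-map bound to establish near-independence of arbitrary superpositions over inputs, which is the role this lemma plays in the proof of the superposition-average encoding theorem.
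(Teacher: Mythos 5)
Your proposal is correct and is essentially the only reasonable route: the paper itself gives no proof of this lemma, simply citing Fawzi and Renner~\cite{FawziR15} for the first displayed inequality; the ``in particular'' clause is left implicit as the standard consequence $-2\log_2 t \geq 1-t$ on $(0,1]$ combined with $\fh^2 = 1 - \rF$, which is exactly the one-line calculation you carry out. Your handling of the degenerate case $t = 0$ is a nice bit of extra care but, as you note, not needed in the finite-dimensional setting used throughout.
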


\subsection{Quantum Information Complexity}
\label{sec-qic}

We rely on the notion of quantum information cost of a two-party
communication protocol introduced by
Touchette~\cite{Touchette15}. We follow the notation and conventions
associated with a two-party quantum communication protocol introduced in
Section~\ref{sec-qcc}, and restrict ourselves to protocols
with classical inputs~$XY$ distributed as~$\nu$.

Quantum information cost is defined in terms of the purifying 
register $R$, but is independent of the choice of purification.
Given the asymmetric nature of the Augmented Index function, we consider 
the quantum information cost of messages from Alice to Bob and the ones 
from Bob to Alice separately. 
Such an asymmetric notion of quantum information cost was previously
considered in Refs.~\cite{KerenidisLLR16, LT17-information-flow, LT17-information-flow-arxiv}. 
\begin{definition}
Given a quantum protocol $\Pi$ with classical inputs distributed as~$\nu$,
the \emph{quantum information cost} (of the messages)
from Alice to Bob is defined as
\begin{align*}
\QIC_{\rA \rightarrow \rB} (\Pi, \nu) \quad \eqdef \quad &
\sum_{i~\mathrm{odd}} \rI (R  \!:\! C_i \,|\,  B_i) \enspace,
\end{align*}
and the \emph{quantum information cost} (of the messages) from Bob to Alice is defined as
\begin{align*}
\QIC_{\rB \rightarrow \rA} (\Pi, \nu) \quad \eqdef \quad &
\sum_{i~\mathrm{even}} \rI (R  \!:\! C_i \,|\,  A_i) \enspace.
\end{align*}
\end{definition}

It is immediate that quantum information cost is bounded above by 
quantum communication.
\begin{remark}
For any quantum protocol $\Pi$ with classical inputs distributed 
as $\nu$, the following hold:
\begin{align*}
\QIC_{\rA \rightarrow \rB} (\Pi, \nu) &
    \quad \leq \quad 2 \; \QCC_{A \rightarrow B} (\Pi) \enspace, \qquad \text{and} \\
\QIC_{\rB \rightarrow \rA} (\Pi, \nu) &
    \quad \leq \quad 2 \; \QCC_{B \rightarrow A} (\Pi) \enspace.
\end{align*}
\end{remark}
As a result, we may bound 
quantum communication complexity of a protocol from below by analysing
its information cost.

We further restrict ourselves to ``safe protocols'', in which the
registers~$A_\rin,B_\rin$ are only used as control registers in the
local isometries. This restriction does not affect the results in this
article, for the following reason. Let~$\Pi$ be any protocol with
classical inputs distributed as~$\nu$, in which the two parties may 
apply arbitrary isometries to their quantum registers. In particular,
these registers include~$A_\rin,B_\rin$ which are initialized to the input.
Let~$\Pi'$ be the protocol with the same registers as~$\Pi$ and two
additional quantum registers~$A'_\rin,B'_\rin$ of the same sizes 
as~$A_\rin,B_\rin \,$, respectively. In the protocol~$\Pi'$, the two
parties each make a coherent local copy of their inputs 
into~$A'_\rin,B'_\rin \,$, respectively, at the outset. That is, Alice applies an isometry that maps
\[
\ket{x}^{A_\rin} \quad \mapsto \quad \ket{x}^{A'_\rin} \ket{x}^{A_\rin} \enspace,
\]
for each possible classical input~$x$, and Bob applies a similar isometry to~$B_\rin$.
The registers~$A'_\rin,B'_\rin$ are never touched hereafter,
and the two parties simulate the original protocol~$\Pi$ on the
remaining registers. As Lauri\`ere and Touchette~\cite[Proposition~9]{LT17-information-flow-arxiv} show, the quantum information cost of~$\Pi$ is at least as much as that of the protocol~$\Pi'$:
\begin{align*}
\QIC_{\rA \rightarrow \rB}(\Pi',\nu)
    \quad & \leq \quad \QIC_{\rA \rightarrow \rB}(\Pi,\nu) \enspace,
    \qquad \textrm{and} \\
\QIC_{\rB \rightarrow \rA}(\Pi',\nu)
    \quad & \leq \quad \QIC_{\rB \rightarrow \rA}(\Pi,\nu) \enspace.
\end{align*}
Thus, the quantum information cost trade-off we show for safe protocols 
holds for arbitrary protocols as well.

We use another result due to Lauri\`ere and 
Touchette~\cite{LT17-information-flow,LT17-information-flow-arxiv}. The result states
that the total gain in (conditional) information by a party 
over all the messages is precisely the net (conditional) information 
gain at the end of the protocol. It allows us to keep track of the flow of 
information during interactive quantum protocols. 
For completeness, a proof is provided in Appendix~\ref{app-ifl}.
\begin{lemma}[Information Flow Lemma]
\label{thm-inf-flow}
	Given a protocol $\Pi$, an input state $\rho$ with purifying
register $R$ with 
arbitrary decompositions $R = R_a^\rA R_b^\rA R_c^\rA = R_a^\rB R_b^\rB
R_c^\rB$,
the following hold:
\begin{align*}
\MoveEqLeft \sum_{i \geq 0} \rI (R_a^\rB  \!:\! C_{2i + 1} \,|\,  R_ b^\rB B_{2i + 1} ) -
\sum_{i \geq 1} \rI (R_a^\rB  \!:\! C_{2i} \,|\,  R_b^\rB B_{2i}) \qquad \\
    & = \quad \rI (R_a^\rB  \!:\! B_{\rout} B^\prime \,|\,  R_b^\rB)
          - \rI (R_a^\rB  \!:\! B_{\rin} \,|\,  R_b^\rB) \enspace, \quad
          \textrm{and} \\
 & \quad \\
\lefteqn{ \sum_{i \geq 0} \rI (R_a^\rA  \!:\! C_{2i + 2} \,|\,  R_ b^\rA A_{2i + 2} ) -
\sum_{i \geq 0} \rI (R_a^\rA  \!:\! C_{2i + 1} \,|\,  R_b^\rA A_{2i + 1}) }
\qquad \\
    & = \quad \rI (R_a^\rA  \!:\! A_{\rout} A^\prime \,|\,  R_b^\rA)
          - \rI (R_a^\rA  \!:\! A_{\rin} \,|\,  R_b^\rA) \enspace.
\end{align*}
\end{lemma}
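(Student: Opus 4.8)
The key identity to establish is that the "information gained" by a party about the purifying subsystem $R_a$ over the course of the protocol telescopes to the net difference between the final and initial conditional mutual information with that party's register. Let me focus on Bob's identity; Alice's follows by the symmetric argument.

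The plan is to set up a telescoping sum. For $j = 0, 1, \dotsc, M$, consider the quantity $\rI(R_a^\rB : B_j \,|\, R_b^\rB)_{\rho_j}$, where $\rho_j$ denotes the global pure state after the application of $U_j$ (with $\rho_0$ the input state together with the pre-shared entanglement). The endpoints are $\rI(R_a^\rB : B_0 \,|\, R_b^\rB) = \rI(R_a^\rB : B_\rin \,|\, R_b^\rB)$ (using that $B_0 = B_\rin T^\rB$ and that the pre-shared entanglement $T^\rB$ is in tensor product with $R$, so it contributes nothing by the vanishing of conditional mutual information between product systems) and $\rI(R_a^\rB : B_M \,|\, R_b^\rB) = \rI(R_a^\rB : B_\rout B' \,|\, R_b^\rB)$. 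So it suffices to show that each single-step difference $\rI(R_a^\rB : B_j \,|\, R_b^\rB)_{\rho_j} - \rI(R_a^\rB : B_{j-1} \,|\, R_b^\rB)_{\rho_{j-1}}$ equals $+\rI(R_a^\rB : C_j \,|\, R_b^\rB B_j)$ when $j$ is odd (a message arriving at Bob) and $-\rI(R_a^\rB : C_{j-1} \,|\, R_b^\rB B_{j-1})$ — or rather, after reindexing, contributes the corresponding even term with a minus sign — when Alice acts.

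The step-by-step analysis breaks into the two cases. When Alice applies $U_j$ (for even $j$, say), Bob's register is unchanged: $B_j = B_{j-1}$, but $C_{j-1}$ (which Bob held as part of the incoming message, or rather which was just about to leave — I need to be careful about who holds $C$ when) is handed off. Using isometric invariance of von Neumann entropy for $U_j$ acting on Alice's side, and the identity $B_{j} = B_{j-1}$, one gets that the relevant mutual informations are related by exactly one conditional-mutual-information term, which is the one associated with the communication register being transferred; this uses the chain rule $\rI(R_a^\rB : C B \,|\, R_b^\rB) = \rI(R_a^\rB : B \,|\, R_b^\rB) + \rI(R_a^\rB : C \,|\, R_b^\rB B)$ together with the fact that applying a local isometry on the complement of $R_a^\rB R_b^\rB B_{j-1} C_{j-1}$ preserves $\rI(R_a^\rB : B_{j-1} C_{j-1} \,|\, R_b^\rB)$. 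When Bob applies $U_j$ (for odd $j$ after receiving $C_{j-1}$, then sending $C_j$), the same bookkeeping applies on Bob's side: $U_j$ is a local isometry acting on $B_{j-1} C_{j-1} \to B_j C_j$, so $\rI(R_a^\rB : B_{j-1}C_{j-1} \,|\, R_b^\rB) = \rI(R_a^\rB : B_j C_j \,|\, R_b^\rB)$, and expanding both sides with the chain rule gives the two adjacent terms. Summing all the single-step contributions, the "$B$-only" terms telescope and the "$C$" terms survive with the signs claimed, once one matches up the indexing (odd $C_{2i+1}$ conditioned on $B_{2i+1}$ arriving at Bob with a $+$ sign, even $C_{2i}$ leaving Bob with a $-$ sign). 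For Alice's identity, one runs the identical argument tracking $\rI(R_a^\rA : A_j \,|\, R_b^\rA)$; note the signs are reversed relative to Bob's because Alice sends the odd messages and receives the even ones, and because she also applies the very first isometry $U_1$.

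The main obstacle I expect is purely organizational: getting the who-holds-$C_j$-when bookkeeping exactly right at each step, so that the chain-rule expansions line up and the telescoping is clean — in particular handling the endpoints ($U_1$ on Alice's side, the final $U_{M+1}$, and the parity of $M$) and confirming that the pre-shared entanglement registers $T^\rA, T^\rB$ drop out because they are independent of $R$ at the outset (conditional mutual information between product systems vanishes). There is no deep inequality needed here — strong subadditivity is not invoked, only the chain rule and isometric invariance of entropy — so the content is entirely in the careful per-message accounting. One should also be slightly careful that the decompositions $R = R_a^\rB R_b^\rB R_c^\rB$ are arbitrary: $R_c^\rB$ simply rides along untouched and is traced out (or kept) consistently throughout, so it never enters any of the mutual-information terms, which is why it can be chosen freely.
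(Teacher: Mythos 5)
Your high-level plan — track a conditional-mutual-information quantity step by step through the protocol, telescope, and use only the chain rule, isometric invariance, and independence of the pre-shared entanglement — is exactly the paper's approach (the paper phrases it as an induction on the round index $k$, tracking $\rI(R_a^\rB : B_{2k+1} C_{2k+1} \,|\, R_b^\rB)$, which is the same telescoping in disguise). There is no deep inequality involved, as you say.

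However, the per-step accounting as written contains a genuine error, not just an organizational worry. First, a parity slip: in this paper's convention Alice applies the \emph{odd}-indexed isometries ($U_1, U_3, \dotsc$), not the even ones. Second, and more importantly, you claim that when Alice acts, ``applying a local isometry on the complement of $R_a^\rB R_b^\rB B_{j-1} C_{j-1}$ preserves $\rI(R_a^\rB : B_{j-1} C_{j-1}\,|\,R_b^\rB)$.'' That is false: Alice's isometry is $U_j^{A_{j-1} C_{j-1} \to A_j C_j}$, which \emph{does} touch $C_{j-1}$, so $\rI(R_a^\rB : B_{j-1} C_{j-1}\,|\,R_b^\rB)$ is not preserved by her step. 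What an Alice-step preserves is only $\rI(R_a^\rB : B_{j-1}\,|\,R_b^\rB)$ (the marginal on $R B_{j-1}$ is untouched, and $B_j = B_{j-1}$), so the increment of the quantity you track, $\rI(R_a^\rB : B_j\,|\,R_b^\rB)$, is \emph{zero} at odd $j$, not $+\rI(R_a^\rB : C_j\,|\,R_b^\rB B_j)$. The $C$-terms appear at the \emph{Bob}-steps (even $j$), where the isometry $U_j^{B_{j-1}C_{j-1}\to B_j C_j}$ gives $\rI(R_a^\rB : B_{j-1}C_{j-1}\,|\,R_b^\rB) = \rI(R_a^\rB : B_j C_j\,|\,R_b^\rB)$, and expanding both sides by the chain rule yields the increment $\rI(R_a^\rB : C_{j-1}\,|\,R_b^\rB B_{j-1}) - \rI(R_a^\rB : C_j\,|\,R_b^\rB B_j)$. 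Both $C$-terms come from a single Bob-step; they are not distributed across the two parties' steps as you describe. Summing these correct increments does recover the desired identity, so the framework is sound, but the identification of where the $\pm\rI(\,\cdot\,;C_i\,|\,\cdot\,)$ terms arise must be fixed, and you also need to handle the final isometry $U_{M+1}$ and the parity of $M$ (the paper does this explicitly before the induction begins).
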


\subsection{Quantum Streaming Algorithms}
\label{sec-qsa}

We refer the reader to the text~\cite{Muthukrishnan05} for an
introduction to classical streaming algorithms. Quantum streaming
algorithms are similarly defined, with restricted access to the input, and
with limited workspace. 

In more detail, an input~$x\in\Sigma^n \,$, where~$\Sigma$
is some alphabet, arrives as a \emph{data stream\/}, i.e., letter by
letter in the order~$x_1, x_2, \dotsc, x_n$. An algorithm is said to
make a \emph{pass\/} on the input, when it reads the data stream once
in this order, processing it as described next.
For an integer~$T \ge 1$, a~$T$-pass (unidirectional) \emph{quantum
streaming algorithm\/}~$\cA$ with space~$s(n)$ and time~$t(n)$ is a
collection of quantum channels~$\set{ \cA_{i\sigma} : i \in [T],
\sigma \in \Sigma}$. Each operator~$\cA_{i\sigma}$ is a channel defined on a
register of~$s(n)$-qubits, and can be implemented
by a uniform family of circuits of size at most~$t(n)$. On input stream~$x \in
\Sigma^n \,$,
\begin{enumerate}
\item
The algorithm starts with a register~$W$ of~$s(n)$ qubits, all 
initialized to a fixed state, say~$\ket{0}$.

\item
$\cA$ performs~$T$ sequential passes, $i = 1, \dotsc, T$, 
on~$x$ in the order~$x_1, x_2, \dotsc, x_n$.

\item
In the~$i$th pass, when symbol~$\sigma$ is read, channel~$\cA_{i\sigma}$
is applied to~$W$.

\item
The output of the algorithm is the state in a designated
sub-register~$W_\rout$ of~$W$, at the end of the~$T$ passes.
\end{enumerate}
We may allow for some pre-processing before the input is read, 
and some post-processing at the end of the~$T$ passes, each with 
time complexity different from~$t(n)$. As our work applies to streaming
algorithms with any time complexity, we do not consider this
refinement.

The probability of correctness of a streaming algorithm is defined in
the standard way.  If we wish to compute a family of Boolean 
functions~$g_n : \Sigma^n \rightarrow \set{0,1}$, the output 
register~$W_\rout$ consists of a single qubit. On input~$x$, 
let~$\cA(x)$ denote the random variable corresponding to the outcome when 
the output register is measured in the standard
basis. We say~$\cA$ computes~$g_n$ with (worst-case)
error~$\epsilon \in [0,1/2]$
if for all~$x$, $\Pr[\cA(x) = g_n(x)] \ge 1 - \epsilon$.

In general, the implementation of a quantum channel used by a
streaming algorithm with unitary operations involves one-time use of 
ancillary qubits (initialized to a fixed, known quantum state,
say~$\ket{0}$). These ancillary qubits are in addition to the~$s(n)$-qubit
register that is maintained by the algorithm. Fresh qubits may be an 
expensive resource in practice, for example, in NMR implementations, and
one may argue that they be included in the space complexity of the
algorithm. The lack of ancillary qubits
severely restricts the kind of computations space-bounded algorithms can
perform; see, for example, Ref.~\cite{ANTV02}.
We choose the definition above so as to present the results
we derive in the strongest possible model. Thus, the results
also apply to implementations in which the ``flying qubits'' needed for
implementing non-unitary quantum channels are relatively easy
to prepare.

In the same vein, we may provide a quantum streaming algorithm arbitrary 
read-only access to a sequence of random bits. In other words,
we may also provide the algorithm with a register~$S$ of size at
most~$nT t(n)$ initialized to random bits from some distribution. Each quantum 
channel~$\cA_{i\sigma}$ now operates on registers~$SW$, while 
using~$S$ only as a control register.  The bounds we prove hold in 
this model as well.

\section{Reduction from Augmented Index to \dyck(2)}
\label{sec-reductions}

The connection between low-information protocols for Augmented Index
and streaming algorithms for \dyck(2) contains two steps. The first is a
reduction from an intermediate multi-party communication problem \Asc, and the
second is the relationship of the latter with Augmented Index.

\subsection{Reduction from \Asc\ to \dyck(2)}
\label{sec-ascension}

In this section, we describe the connection between multi-party
quantum communication protocols for the 
problem \Asc$(m,n)$, and quantum streaming algorithms for \dyck(2).
The reduction is an immediate generalization of
the one in the classical case discovered by Magniez, Mathieu,
and Nayak~\cite{MagniezMN14}, which also works with appropriate 
modifications for multi-pass classical streaming 
algorithms~\cite{ChakrabartiCKM10,JainN14}. For the sake of 
completeness, we describe the reduction below.

Multi-party quantum communication
protocols involving point-to-point communication may be defined as in
the two-party case. In particular, we follow the convention that the initial joint state of all the parties is pure, and the local operations are isometries. So the joint state is pure throughout.
As it is straightforward, and detracts from the thrust of this section, we omit a formal definition.

Let~$m,n$ be positive integers.
The~$(2m)$-party communication problem~\Asc$(m,n)$ consists of computing
the logical OR of~$m$ independent instances of~$f_n \,$, the Augmented
Index function.
Suppose we denote the~$2m$ parties by~$\sA_1,\sA_2,\dotsc,\sA_m$ and
$\sB_1,\sB_2,\ldots,\sB_m$. Player~$\sA_i$ is given~$x^i\in\{0,1\}^n$,
player~$\sB_i$ is given $k^i\in [n]$, a bit~$z^i$, and the
prefix~$x^i[1, k^i-1]$ of $x^i$. Let $\mathbf{x} \eqdef (x^1,x^2,\ldots,x^m)$,
$\mathbf{k} \eqdef (k^1,k^2,\ldots,k^m)$, and
$\mathbf{z} \eqdef (z^1,z^2,\ldots,z^m)$.
The goal of the communication protocol is to compute
\[
F_{m,n}(\mathbf{x},\mathbf{k},\mathbf{z})
    \quad \eqdef \quad \bigvee_{i=1}^m f_n(x^i,k^i,z^i)
    \quad = \quad \bigvee_{i=1}^m (x^i[k^i] \xor z^i) \enspace,
\]
which is $0$ if $x^i[k^i]=z^i$ for all $i$, and $1$ otherwise.

The communication between the~$2m$ parties is required to be~$T$
sequential
iterations of communication in the following order, for some positive integer~$T$:
\begin{align}
\nonumber
\sA_1 \rightarrow \sB_1 \rightarrow
\sA_2 \rightarrow \sB_2 \rightarrow
\dotsb
\sA_m \rightarrow \sB_m \\
\label{eqn-commn}
\qquad \rightarrow \sA_m \rightarrow \sA_{m-1} \rightarrow
\dotsb \rightarrow \sA_2 \rightarrow \sA_1 \enspace.
\end{align}
In other words, for~$t = 1, 2, \dotsc, T$,
\begin{itemize}
\item[\ --] for $i$ from $1$ to $m-1$, player~$\sA_i$ sends a register
  $C_{\sA_i, t}$ to $\sB_i$, then $\sB_i$ sends register~$C_{\sB_i,t}$ to
$\sA_{i+1}$,
\item[\ --] $\sA_m$ sends register $C_{\sA_m, t}$ to $B_m$, then
  $B_m$ sends register $C_{B_m,t}$ to $\sA_{m}$,
\item[\ --] for $i$ from $m$ down to 2, $\sA_i$ sends register
  $C'_{\sA_i,t}$ to $\sA_{i-1}$.
\end{itemize}
At the end of the~$T$ iterations, $\sA_1$ computes the output.

There is a bijection between instances of \Asc$(m,n)$ and a subset of
instances of \dyck(2) that we describe next. For any string $z \eqdef z_1
\dotsb z_n \in\set{a,b}^n$, let $\overline{z}$
denote the matching string~$\overline{z_n} \, \overline{z_{n-1}} \,
\dotsb \overline{z_1}$
corresponding to $z$. Let~$z[i,j]$ denote the substring~$z_i z_{i+1}
\dotsb z_j$ if~$1 \leq i \leq j \leq n$, and the empty string~$\epsilon$
otherwise. We abbreviate~$z[i,i]$ as~$z[i]$ if~$1 \le i \le n$.
Consider a string
\begin{align}
w  \quad \eqdef \quad   (x^1 \, \overline{y^1} \, \overline{z^1} \, z^1 \, y^1) \,
        (x^2 \, \overline{y^2} \, \overline{z^2} \, z^2 \, y^2)
       \dotsb \ (x^m \, \overline{y^m} \, \overline{z^m} \, z^m \, y^m) \,
        \overline{x^m}  \ \dotsb\ \overline{x^{2}} \ \overline{x^1}
        \enspace,
\label{eqn-instance}
\end{align}
where for every $i$, $x^i\in \{ a,b\}^n$, and~$y^i$ is a suffix
of~$x^i$, i.e., $y^i={x^i[n-k^i+2,n]}$ for some
$k^i\in\{1,2,\ldots,n\}$, and $z^i\in \{ a,b\}$. The string $w$ is in
$\dyck(2)$ if and only if, for every $i$, $z^i=x^i[n-k^i+1]$. Note that
such instances have length in the interval~$[2m(n+1), 4mn]$.

The bijection between instances of \Asc$(m,n)$ and \dyck(2) arises from a
partition of the string~$w$ amongst the~$2m$ players: player~$\sA_i$ is
given~$x^i$ (and therefore~$\overline{x^i}$), and player~$\sB_i$ is
given~$y^i,z^i$ (and therefore~$\overline{y^i}, \overline{z^i}$).
See Figure~\ref{fig-asc} for a pictorial representation of the
partition.
\begin{figure}[!t]
\centering
\includegraphics[width=350pt]{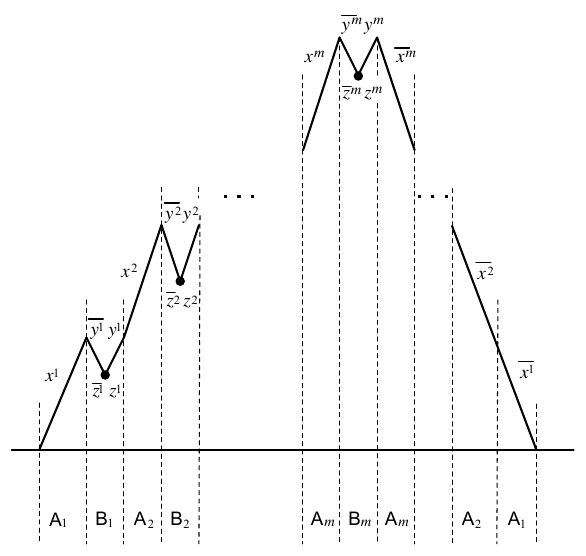}
\caption{An instance of the form described in~(\ref{eqn-instance}), as
depicted in~\cite{MagniezMN14,JainN14}. A line
segment with positive slope denotes a string over~$\set{a,b}$, and a
segment with negative slope denotes a string over~$\set{\overline{a},
\overline{b}}$. A solid dot depicts a pair of the form~$\overline{z} z$
for some~$z \in \set{a,b}$. The entire string is distributed
amongst~$2m$
players~$\sA_1, \sB_1, \sA_2, \sB_2,  \dotsc, \sA_m, \sB_m$ in
a communication protocol for~$\Asc(m,n)$ as shown.}
\label{fig-asc}
\end{figure}
For ease of notation, the strings~$x^i$ in~$\Asc(m,n)$ are taken to be the
ones in $\dyck(2)$ with the bits \emph{in reverse order\/}. This converts
the suffixes~$y^i$ into prefixes of the same length.

As a consequence of the bijection above, any quantum streaming algorithm
for \dyck(2) results in a quantum protocol for \Asc$(m,n)$, as stated in 
the following lemma.
\begin{lemma}
\label{lem:reddycktoasc}
For any $\epsilon \in [0,1/2]$, $ n, m \in \mathbb{N}$, 
and for any $\epsilon$-error (unidirectional) $T$-pass quantum 
streaming algorithm $\cA$ for \dyck(2) that on instances of
size $N \in \Theta(mn)$ uses~$s(N)$ qubits of memory, 
there exists an $\epsilon$-error, $T$-round
sequential $(2m)$-party quantum communication
protocol for \Asc$(m,n)$ in which each message is of length~$s(N)$.
The protocol may use public randomness, but does not use shared 
entanglement between any of the parties. Moreover, the local operations
of the parties are memory-less, i.e., do not require access to
the qubits used in generating the previous messages sent by that party.
\end{lemma}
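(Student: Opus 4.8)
The plan is to take a streaming algorithm $\cA$ for $\dyck(2)$ and simulate it in the multi-party setting by passing the $s(N)$-qubit work register $W$ as the message. Fix $N \in \Theta(mn)$ so that instances of the form~(\ref{eqn-instance}) have length in $[2m(n+1),4mn] \subseteq \Theta(mn)$, hence can be padded (if needed, using a fixed $\dyck(2)$-neutral block, or simply noting the length is already in the required range) to length $N$. Given inputs $(\mathbf{x},\mathbf{k},\mathbf{z})$ to $\Asc(m,n)$, each player reconstructs its portion of the string $w$ as dictated by the bijection: $\sA_i$ holds $x^i$ (reading it left-to-right contributes the block $x^i$, reading it right-to-left contributes $\overline{x^i}$), and $\sB_i$ holds $y^i = x^i[n-k^i+2,n]$ and $z^i$, hence also $\overline{y^i}$ and $\overline{z^i}z^i$. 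The substring $w$ read in left-to-right order visits the players in exactly the order~(\ref{eqn-commn}): $\sA_1$ (reading $x^1$), $\sB_1$ (reading $\overline{y^1}\,\overline{z^1}z^1 y^1$), $\sA_2$, \ldots, $\sB_m$, and then the suffix $\overline{x^m}\cdots\overline{x^1}$ revisits $\sA_m, \sA_{m-1},\ldots,\sA_1$ in that order.

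The simulation of one pass is then immediate: the player currently holding $W$ applies the channels $\cA_{i\sigma}$ for the symbols $\sigma$ in its portion of $w$ (in the correct order), then forwards $W$ to the next player in the sequence~(\ref{eqn-commn}); this is one round of communication, and each message is the register $W$ of $s(N)$ qubits. Repeating this for $t = 1,\ldots,T$ realizes all $T$ passes of $\cA$ on $w$, with $\sA_1$ holding $W$ at the end and measuring $W_\rout$ to produce the output. Because the $i$th pass's channels $\cA_{i\sigma}$ depend only on the symbol and the pass index, and the register $W$ is the only state carried between players, the local operations are memory-less — a player need not retain any qubits from an earlier round, since when $W$ returns to it in the next round it simply resumes applying the appropriate $\cA_{i\sigma}$. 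Public randomness in the streaming model (the register $S$) is simulated by public randomness in the protocol, and no pre-shared entanglement is needed. Correctness and the error bound transfer verbatim: the protocol computes $\dyck(2)$-membership of $w$, which by the bijection equals $F_{m,n}(\mathbf{x},\mathbf{k},\mathbf{z})$ (since $w \in \dyck(2)$ iff $z^i = x^i[n-k^i+1]$ for all $i$, i.e. iff $f_n(x^i,k^i,z^i)=0$ for all $i$), so the protocol has the same worst-case error $\epsilon$.

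The only genuinely non-routine points to check are bookkeeping: (i) that the players can be ordered so that reading $w$ left-to-right both on the "ascending" blocks $x^i\overline{y^i}\,\overline{z^i}z^iy^i$ and on the "descending" suffix $\overline{x^m}\cdots\overline{x^1}$ respects the communication pattern~(\ref{eqn-commn}) — this is exactly why $\Asc$ is defined with that particular order, so it matches by construction; and (ii) that each player indeed possesses every symbol it must feed to $\cA$, which follows since $\sA_i$ knows $x^i$ (hence $\overline{x^i}$) and $\sB_i$ knows $y^i, z^i$ (hence $\overline{y^i}, \overline{z^i}$) — note $\sB_i$ does not need the full $x^i$ because only the suffix $y^i$ of $x^i$ appears between $\sA_i$'s two blocks. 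I do not anticipate a real obstacle here; the lemma is an "immediate generalization" of the classical reduction of Magniez, Mathieu, and Nayak, and the argument is the standard one, with the quantum content being simply that $W$ is now a quantum register and the $\cA_{i\sigma}$ are quantum channels.
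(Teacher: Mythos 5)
Your proposal is correct and follows essentially the same approach as the paper, which gives a much terser version of the same simulation (pass the $s(N)$-qubit work register along the communication order that mirrors left-to-right streaming, with randomness provided publicly). The one technicality the paper mentions that you omit is replacing the non-unitary channels $\cA_{i\sigma}$ with isometric extensions via Stinespring dilation to match the isometry-based protocol formalism of Section~\ref{sec-qcc}, but this does not affect message lengths or memory-lessness.
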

\begin{proof}
Any random sequence of bits used by the streaming algorithm is provided
as shared randomness to all the~$2m$ parties in the 
communication protocol for \Asc$(m,n)$. Each input for the communication
problem corresponds to an instance of \dyck(2), as described above.
In each of the~$T$ iterations, a player simulates the quantum streaming
algorithm on appropriate part of the input for \dyck(2), and sends the 
length~$s(N)$ workspace to the next player in the sequence. 
(If needed, non-unitary quantum operations may be replaced with an
isometry, as follows from the Stinespring Representation 
theorem~\cite{Watrous18}.)
The output of the protocol is the output of the algorithm, and is contained
in the register held by the final party~$\sA_1$.
\end{proof}

\subsection{Reduction from Augmented Index to \Asc}
\label{sec-ai2asc}

Recall that \Asc$(m,n)$ is composed of~$m$ instances of Augmented 
Index on~$n$ bits. Magniez, Mathieu, and Nayak~\cite{MagniezMN14} showed how
we may derive a low-information classical protocol for Augmented
Index~$f_n$ from one for \Asc$(m,n)$ through a direct sum argument (see
Refs.~\cite{ChakrabartiCKM10,JainN14} for the details of its working in the multi-pass case).
This is not as straightforward to execute as it might first appear; it
entails deriving a sequence of protocols for Augmented Index in which 
the communication from Alice to Bob corresponds to messages from 
different parties in the original multi-party protocol.
We show that the same kind of construction, suitably adapted to the notion 
of quantum information cost we use, also works in the quantum case.

Let~$\mu_0$ be the uniform distribution on the~$0$-inputs of the
Augmented Index function~$f_n$. If $X$ is a uniformly random~$n$-bit
string, $K$ is a uniformly random index from~$[n]$ independent of~$X$,
and the random variable~$B$ is set as $B = X_K$, the joint 
distribution of~$X,K,X[1,K-1],B$ is~$\mu_0$. We denote the
random variables~$K,X[1,K-1],B$ given as input to Bob by~$Y$. Since~$X_K
= B$ under this distribution, we abbreviate Bob's input as~$K,X[1,K]$.
Let~$\mu$ be the uniform distribution over all inputs.
Under this distribution, the bit~$B$ is uniformly 
random, independent of~$XK$, while~$XK$ are as above.

\begin{lemma}
\label{lem:redaugind}
Suppose~$\epsilon \in [0,1/2]$, $ n, m \in \mathbb{N}$ and
that there is an~$\epsilon$-error, $T$-round sequential quantum protocol~$\Pi_{\asc}$ 
for \Asc$(m,n)$, that is memory-less, does not have shared 
entanglement between any of the parties (but might use public randomness), 
and only has messages of length 
at most~$s$ (cf.\ Lemma~\ref{lem:reddycktoasc}).
Then, there exists an $\epsilon$-error, $2T$-message, two-party quantum
protocol~$\Pi_{\ai}$ for the Augmented Index function~$f_n$ that satisfies
\begin{align*}
\QIC_{\rA \rightarrow \rB} (\Pi_\ai, \mu_0)
    & \quad \leq \quad 2 s T \enspace, \\
\QIC_{\rB \rightarrow \rA} (\Pi_\ai, \mu_0) 
    & \quad \leq \quad 2 s T/ m \enspace.
\end{align*}
\end{lemma}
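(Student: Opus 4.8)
The plan is to build the two-party protocol $\Pi_\ai$ for Augmented Index by a direct-sum-style embedding of a single copy of $f_n$ into one of the $m$ coordinates of $\Asc(m,n)$, and then to average over which coordinate is used so as to distribute Bob's information cost over $m$ slots. Concretely: Alice and Bob, given input $(X,Y)$ distributed as $\mu_0$, pick a uniformly random index $J \in [m]$ (from the public randomness), plant $(X,Y)$ into coordinate $J$ of an $\Asc(m,n)$ instance, and fill the remaining $m-1$ coordinates with self-generated $0$-inputs of $f_n$ drawn from $\mu_0$ as well. Since on $\mu_0$ every coordinate is a $0$-input, the OR $F_{m,n}$ equals $f_n$ on the planted coordinate; more importantly, because only $0$-inputs are ever fed to the other coordinates, the protocol $\Pi_\ai$ is still $\epsilon$-correct on the actual Augmented Index input (this is exactly why a lower bound on the \emph{easy} distribution $\mu_0$ is what the reduction needs, as the paper stresses). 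The first thing I would do is make this embedding precise, being careful that the fillers in coordinates $\neq J$ are generated \emph{locally} by Alice and Bob from shared randomness — Alice generates the $x^i$'s, Bob generates the $k^i, x^i[1,k^i]$'s, consistently — so no extra communication or pre-shared entanglement is introduced, and the memory-less property of $\Pi_\asc$ is inherited.

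Next I would set up the simulation of $\Pi_\asc$ by the two players. The key observation, forced by the communication pattern \eqref{eqn-commn}, is that in each of the $T$ iterations the messages of $\Asc(m,n)$ naturally split into an ``$\rA$-side'' stretch — the forward sweep $\sA_1 \to \sB_1 \to \sA_2 \to \cdots \to \sA_m \to \sB_m$ up to and including the turn at $\sB_m$, restricted to the part at or before coordinate $J$ — and a ``$\rB$-side'' stretch. Alice in $\Pi_\ai$ will play the roles of all parties $\sA_1,\dots,\sA_J$ (and, on the return sweep, $\sA_{J-1},\dots,\sA_1$) together with the $\sB$-parties $\sB_1,\dots,\sB_{J-1}$; Bob will play $\sB_J,\dots,\sB_m$ and $\sA_{J+1},\dots,\sA_m$. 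One must check that the split is along a ``cut'' of the communication order so that messages crossing the Alice/Bob boundary in $\Pi_\ai$ are exactly the $\Asc$-messages between $\sA_J$ and $\sB_J$ (and between $\sB_{J-1}$ and $\sA_J$ on the way in) — this makes $\Pi_\ai$ a legitimate $2T$-message two-party protocol with each message of length $\le s$, since each crossing carries one $\Asc$-message of length $\le s$ and there are $2$ crossings per iteration, over $T$ iterations. I would then observe that the planted input registers $XY$ of $\Pi_\ai$ sit inside the $J$-th coordinate's input registers of $\Asc$, so the purifying register $R$ for $\mu_0$ is a sub-register of the global purification.

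For the information-cost bounds I would argue as follows. For the Alice-to-Bob direction, $\QIC_{\rA\to\rB}(\Pi_\ai,\mu_0) = \sum_{i \text{ odd}} \rI(R : C_i \,|\, B_i)$, and each message $C_i$ carries $\le s$ qubits, so $\rI(R : C_i \,|\, B_i) \le 2\log\size{C_i} \le 2s$ by the bound $\rI(A:B\,|\,C) \le 2\rH(A) \le 2\log\size{A}$ applied to the message register (using monotonicity / the fact that conditional mutual information through the $\le s$-qubit channel $C_i$ is at most $2s$); summing over the $2T$ messages, half of which are from Alice, gives $\le 2sT$. For the Bob-to-Alice direction, the naive per-message bound again gives $2sT$, which is a factor $m$ too weak; the improvement comes from averaging over $J$. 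The point is that conditioned on the public coin $J$, the register $R$ (the purification of the planted copy) is correlated only with what happens in coordinate $J$, and as $J$ ranges uniformly over $[m]$ the ``Bob-to-Alice'' messages of $\Pi_\ai$ that actually see $R$ are, on average, only a $1/m$ fraction of the total $\Asc$ traffic. Making this rigorous is the main obstacle: I would treat $J$ as a classical register included in the conditioning, use \eqref{eq:cqmicond} to write $\QIC_{\rB\to\rA}(\Pi_\ai,\mu_0) = \frac1m \sum_{j=1}^m \QIC_{\rB\to\rA}(\Pi_\ai \mid J{=}j, \mu_0)$, and then bound each term $\QIC_{\rB\to\rA}(\Pi_\ai \mid J{=}j,\mu_0)$ by the total conditional information of all the Bob-to-Alice $\Asc$-messages \emph{in coordinate $j$'s stretch} about $R$; crucially one must show the $\Asc$-messages lying in coordinates $\neq j$ carry \emph{no} information about $R$ because those coordinates' inputs are generated by local independent randomness and $R$ purifies only coordinate $j$ — here I would use that $\rI(R : C_i \,|\, A_i) = 0$ whenever $C_i$ and the conditioning/complement factorize away from $R$, together with the chain rule and data processing. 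Summing $\sum_j (\text{traffic in coordinate } j) \le (\text{total } \sB\text{-to-}\sA \text{ traffic}) \le 2sT m$ — wait, more carefully, the relevant Bob-to-Alice $\Asc$-messages total at most $2sT$ per \emph{iteration-slot} but the per-coordinate contribution, summed over coordinates and over the $m$ values of $j$, is bounded by $2sT$ once the off-coordinate terms vanish — would yield the claimed $\QIC_{\rB\to\rA}(\Pi_\ai,\mu_0) \le 2sT/m$. The bookkeeping of exactly which $\Asc$-message index maps to which $C_i$ of $\Pi_\ai$, and verifying the vanishing of cross-coordinate information terms, is where the real care is needed; everything else is routine application of the chain rule, non-negativity and data processing from Section~\ref{app:infomeas}.
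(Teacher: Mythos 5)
Your embedding and the $\QIC_{\rA\to\rB}\le 2sT$ bound are fine, but the mechanism you propose for the $1/m$ improvement in $\QIC_{\rB\to\rA}$ does not work, and the cut you place between Alice and Bob is different from the one that does.

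The cut. You have Bob play $\sA_{J+1},\dots,\sA_m$ (and the turn at $\sB_m$), so the Bob-to-Alice crossing in iteration~$t$ is the message $\sA_{J+1}\to\sA_J$ — a \emph{different} register for each value of $J$. In the paper's construction Alice simulates the \emph{entire} return sweep $\sA_m\to\cdots\to\sA_1$, so the unique Bob-to-Alice crossing is always the $\sB_m\to\sA_m$ message $C_t$, the \emph{same} register for every~$j$. That choice, together with the assignment that Bob holds the purifications $R_i$ for $i>j$ and Alice holds them for $i<j$, is what lets the final step go through.

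The $1/m$ factor. You assert that the off-coordinate $\Asc$-messages from Bob to Alice ``carry no information about $R$.'' But in $\Pi_j$ (either cut), the Bob-to-Alice messages counted in $\QIC_{\rB\to\rA}$ are not ``messages lying in coordinate $i\ne j$'' — the internal messages among the players a single party simulates never cross the boundary and are never counted. The message that \emph{is} counted (the $\sB_m\to\sA_m$ message) accumulates dependence on \emph{all} coordinates, so there is no vanishing to exploit. The actual saving comes from a different place: one builds all $m$ protocols $\Pi_1,\dots,\Pi_m$, observes that they all simulate $\Pi_\asc$ on $\mu_0^{\otimes m}$ and therefore at the moment Alice receives $C_t$ the global state $\rho_t$ is the same for every~$j$ — only the Alice/Bob partition of registers differs. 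Writing $G_j := R_j (D_{rj})_{r\le t}$ and loosening each conditioning term, the chain rule gives
\begin{align*}
\sum_{j=1}^m \rI\bigl(G_j : C_t \,\big|\, X_1\dotsb X_m\, S_\rA\, (E_{r-1})_{r}\, G_1\dotsb G_{j-1}\bigr)
  \;=\; \rI\bigl(G_1\dotsb G_m : C_t \,\big|\, X_1\dotsb X_m\, S_\rA\, (E_{r-1})_{r}\bigr)
  \;\le\; 2s \enspace,
\end{align*}
and summing over $t\le T$ yields $\sum_j \QIC_{\rB\to\rA}(\Pi_j,\mu_0)\le 2sT$, from which the existence of $j^*$ with $\QIC_{\rB\to\rA}(\Pi_{j^*},\mu_0)\le 2sT/m$ follows. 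This chain rule is unavailable in your cut because the crossing message itself changes with $j$. (Your randomization over $J$ versus fixing $j^*$ is a cosmetic difference and not the issue; the issue is the missing chain-rule-across-protocols idea and the wrong vanishing claim.)
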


\begin{proof}
Starting from the~$(2m)$-party protocol~$\Pi_\asc$ for \Asc$(m,n)$, we 
construct a protocol~$\Pi_j$ for~$f_n \,$, for each~$j \in [m]$. 

Fix one such~$j$.
Suppose Alice and Bob get inputs $x$ and $y$, respectively, 
where~$y \eqdef(k, x [1, k-1], b)$.
They embed these into an instance of \Asc$(m,n)$: they
set $x_j = x$, and $y_j = y$. They sample the inputs for the
remaining~$m-1$ coordinates independently, according to $\mu_0$. 
Let $X_i Y_i$, with~$Y_i \eqdef (K_i, X_i[1, K_i])$, be registers 
corresponding to inputs drawn from $\mu_0$ in 
coordinate $i$. Let $R_i$ be a purification register for these, which 
we may decompose as $R_i^X R_i^Y$, denoting the standard purification
of the $X_i Y_i$ registers. Let $S_\rA S_\rB$ be registers initialized
to~$\sum_s \sqrt{\nu_s} \ket{ss}$, so as to simulate the
public random string~$S \sim \nu$ used in the protocol~$\Pi_\asc$.

For each $i \not= j$, we give $X_i$ to Alice, and $(K_i, X_i [1, K_i])$ to Bob. For $i >j$, 
we give $R_i$ to Bob, and for $i < j$, we give $R_i$ to Alice. Then Alice and Bob simulate 
the roles of the~$2m$ parties~$(\sA_i, \sB_i)_{i \in [m]}$ in the following way for each
of the $T$ rounds in~$\Pi_\asc$. For~$t = 1, 2, \dotsc, T$:
\begin{enumerate} 
\item Alice simulates $\sA_1 \rightarrow \sB_1 \rightarrow \sA_2
\rightarrow \cdots \rightarrow \sA_j$, accessing the inputs for~$\sB_i$,
for~$i < j$, in the register~$R_i$. We denote the ancillary register she
uses to implement~$\sA_1$'s local isometry via unitary operators by~$D_{t1}$, and for all~$i \in [2,j]$, the ancillary registers she uses for~$\sB_{i-1}$ and~$\sA_i$ together
by~$D_{ti}$.

\item Alice transmits the message from $\sA_j$ to $\sB_j$ to Bob. 

\item Bob simulates $\sB_j \rightarrow \sA_{j+1} \rightarrow \cdots
\rightarrow \sB_m$, accessing the input for~$\sA_i$, for~$i > j$, in the
register~$R_i$.
For all~$i$ such that~$j \le i < m$ we denote the ancillary registers 
Bob uses for implementing~$\sB_i$ and~$\sA_{i+1}$'s local isometry together 
by~$D_{t(i+1)}$, and the ancillary register he uses for $\sB_m$ by~$D_{t(m+1)}$.

\item Bob transmits the message from $\sB_m$ to $\sA_m$ to Alice.

\item Alice simulates $\sA_m \rightarrow \sA_{m-1} \rightarrow \cdots
\rightarrow \sA_1$.
We denote the ancillary registers Alice uses for implementing the
local isometries of~$\sA_m, \sA_{m-1}, \dotsc, \sA_2$ together by~$E_{t}$.

\end{enumerate}
We let~$E_0$ denote a dummy register initialized to a fixed state,
say~$\ket{0}$.

Since the inputs for Augmented Index for~$i \neq j$ are distributed
according to $\mu_0$, the protocol~$\Pi_j$ computes Augmented Index
for the instance~$(x,y)$ with error at most~$\epsilon$.

The quantum information cost from Alice to Bob $\QIC_{\rA \rightarrow
\rB}(\Pi_j,\lambda)$ is bounded by~$2sT$, for any distribution~$\lambda$
over the inputs, as each of her~$T$ messages has
at most~$s$ qubits.

The bound on quantum information cost from Bob to Alice arises from the
following direct sum result. Suppose that the inputs for the
protocol~$\Pi_j$ for Augmented Index are drawn from the distribution~$\mu_0$. 
Denote these inputs by~$X_j Y_j$, with~$Y_j \eqdef (K_j, X_j [1, K_j])$, and
the corresponding purification register by~$R_j$.
We are interested in the quantum information cost
$\QIC_{\rB \rightarrow \rA} (\Pi_j, \mu_0)$.

For~$t \in [T]$, let~$C_t$ denote the~$t$th message from Bob to Alice in
the protocol~$\Pi_j$.
At the time Alice receives message~$C_t$, her other registers are~$X_1
\dotsb X_m$, $S_\rA$, $R_1 \dotsb R_{j-1}$, $(E_{r-1} D_{r1} D_{r2}
\dotsb D_{rj})_{r \in [t]}$.
Note that the corresponding state~$\rho_t$ at that point on registers
\[
X_1 \dotsb X_m \; S_\rA \; (E_{r-1} D_{r1} D_{r2} \dotsb D_{rm} D_{r (m+1)} )_{r \in
[t]} \; R_1 \dotsb R_m \; C_t
\]
is the same for all derived protocols~$\Pi_j$, as all of them
simulate~$\Pi_\asc$ on the same input distribution, namely~$\mu_0^{\otimes m}$,
using the above registers.

We have
\begin{align*}
\MoveEqLeft \QIC_{\rB \rightarrow \rA} (\Pi_j, \mu_0)  \\
    & = \quad \sum_{t \in [T]} \rI(R_j  \!:\! C_t \;|\; X_1 \dotsb X_m S_\rA
              (E_{r-1} D_{r1} D_{r2} \dotsb D_{rj})_{r \in [t]}
              R_1 \dotsb R_{j-1} )_{\rho_t} \\
    & \le \quad \sum_{t \in [T]} \rI(R_j ( D_{rj})_{r \in [t]}  :
              C_t \;|\; X_1 \dotsb X_m S_\rA
              (E_{r-1} D_{r1} D_{r2} \dotsb D_{r(j-1)})_{r \in [t]}
              R_1 \dotsb R_{j-1} )_{\rho_t} \enspace.
\end{align*}
Using the chain rule, we get
\begin{align*}
\MoveEqLeft \sum_{j \in [m]} \QIC_{\rB \rightarrow \rA} (\Pi_j, \mu_0)  \\
    & \le \quad \sum_{t \in [T]} \rI(R_1 \dotsb R_m 
                (D_{r1} D_{r2} \dotsb D_{rm})_{r \in [t]} 
                 :  C_t \;|\; X_1 \dotsb X_m S_\rA ( E_{r-1})_{r \in [t]})_{\rho_t}
            \enspace.
\end{align*}
Since each summand in the expression above is bounded by~$2 \log|C_t|$ and~$\log|C_t| \le s$, the sum is bounded by~$2sT$.
It follows that there exists an index $j^*$ such that
\begin{align*}
\QIC_{\rB \rightarrow \rA} (\Pi_{j^*}, \mu_0) \quad \leq \quad 2sT / m \enspace,
\end{align*}
as desired. As noted before, $\QIC_{\rA \rightarrow \rB} (\Pi_{j^*}, \mu_0)
\leq 2sT$. This completes the reduction.
\end{proof}

\section{Key Technical Tools}
\label{sec:tools}

In this section, we develop some tools needed to analyze the quantum
information cost of protocols.

In analyzing safe quantum protocols with classical inputs in the rest of
the paper, we deviate slightly from the notation for the registers used
in the definition of two-party protocols in Section~\ref{sec-qcc}.
We refer to the 
input registers~$A_\rin, B_\rin$ by~$X,Y$, respectively. Since we 
focus on safe protocols, the registers~$XY$ are only used as 
control registers. We express Alice's local registers after the~$i$th
message is generated as~$X A_i$, and the local registers of Bob by~$Y B_i$.
As before, the message register is not included in any of the local
registers, and is denoted by~$C_i$.

\subsection{Superposition-Average Encoding Theorem}

We first generalize the Average Encoding
Theorem~\cite{KNTZ07}, to relate the quality of approximation of 
any intermediate state in a two-party quantum communication protocol to
its information cost. This also allows us to analyze states arising from
arbitrary superpositions over inputs in such protocols.
Informally, the resulting statement is that if the incremental
information contained in the messages received by a party is
``small'', then she can approximate her part of the joint state 
 ``well'', \emph{without any assistance from the other party\/}.
The main technical ingredient of its proof is 
the Fawzi-Renner inequality~\cite{FawziR15}.

\begin{theorem}[Superposition-Average Encoding Theorem]
Given any safe quantum protocol $\Pi$ with input registers $XY$
initialized according to distribution~$\nu$, let
\begin{align*}
\ket{\rho_{i}} \quad \eqdef \quad  \sum_{x, y} \sqrt{ \nu (x, y)} \ket{xxyy}^{X R_X Y R_Y}  \ket{\rho_i^{xy}}^{A_i B_i C_i}
\end{align*}
be the state of the registers $X  Y R A_i B_i C_i$ in round $i$ 
with the register $R$ initially purifying the registers $XY$, with a
decomposition $R_X R_Y$ into coherent copies of $X$ and $Y$, respectively.
Let~$\epsilon_i \eqdef \rI (R \!:\! C_i \,|\,  Y B_i)$ for odd $i$, and
$\epsilon_i \eqdef \rI(R \!:\! C_i \,|\,  X A_i)$ for even $i$.
There exist registers $E_i$, isometries $V_i$ and states~$\ket{\theta_i}$ and~$\ket{\theta^y_i}$ with
\begin{align*}
\ket{\theta_i} \quad = \quad  \sum_{x, y} \sqrt{ \nu (x, y)} \ket{xxyy}^{X R_X Y R_Y}   \ket{\theta_i^{y}}^{B_i C_i E_i}
\end{align*}
for odd $i$ satisfying
\begin{align}
\label{eq-hb}
\fh  \!\left(\rho_i^{R Y B_i C_i}, \theta_i^{R Y B_i C_i} \right)
    \quad & \leq \quad \sum_{p \leq i,~p~\mathrm{odd} }
    \sqrt{\epsilon_p} \enspace, \qquad \textrm{and} \\
\nonumber
V_i \ket{y}^Y 
    \quad & = \quad \ket{y}^Y \otimes \ket{\theta_i^y}^{B_i C_i E_i} \enspace,
\end{align}
and states~$\ket{\sigma_i}$ and~$\ket{\sigma^x_i}$ with
\begin{align*}
\ket{\sigma_i} \quad = \quad  \sum_{x, y} \sqrt{ \nu (x, y)} \ket{xxyy}^{X R_XY R_Y}   \ket{\sigma_i^{x}}^{A_i C_i E_i}
\end{align*}
for even $i$ satisfying
\begin{align}
\label{eq-ha}
\fh  \!\left(\rho_i^{R X A_i C_i}, \sigma_i^{R X A_i C_i} \right) 
    \quad & \leq \quad \sum_{p \leq i,~p~\mathrm{even} }
    \sqrt{\epsilon_p} \enspace, \qquad \textrm{and} \\
\nonumber
V_i \ket{x}^X
    \quad & = \quad \ket{x}^X \otimes \ket{\sigma_i^x}^{A_i C_i E_i}
    \enspace.
\end{align}

\end{theorem}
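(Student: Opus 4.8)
The plan is to prove the two families of statements (odd $i$, even $i$) by induction on $i$, using the Fawzi--Renner inequality (Lemma~\ref{lem:frineq}) to produce the approximating state one message at a time, and the Local Transition Lemma (Lemma~\ref{lem:loctr}) together with the weak triangle inequality for $\fh$ to combine the per-message errors. I focus on odd $i$; even $i$ is symmetric with the roles of Alice and Bob interchanged. For the base of the induction it is cleanest to also carry along the ``zeroth'' state $\ket{\theta_0}$: before any message, Bob's state is determined by $y$ alone (the protocol is safe and he holds only $Y$ and his part of the pre-shared entanglement $T^\rB$, which is input-independent up to a fixed isometry), so we may take $E_0 = T^\rB$ and $V_0\ket{y}^Y = \ket{y}^Y\otimes\ket{\theta_0^y}$ with $\fh(\rho_0^{RYB_0},\theta_0^{RYB_0}) = 0$, and the empty sum on the right-hand side is $0$, matching the claimed bound.

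For the inductive step from $i-2$ to $i$ (both odd), I would proceed in two stages. First, \emph{recovery on the newly received message}: the state $\ket{\rho_i}$ is obtained from $\ket{\rho_{i-1}}$ by Bob applying the isometry $U_i$ to $Y B_{i-1} C_{i-1}$ — but since $\epsilon_i := \rI(R\!:\!C_i\,|\,YB_i)_{\rho_i}$ measures exactly the information in $C_i$ about the reference conditioned on Bob's holdings, apply Lemma~\ref{lem:frineq} with the conditioning system $A = YB_i$, the ``$C$'' system $= C_i$, and the ``$R$'' system $= R$ (the purifying register, which also carries $X$). This yields a recovery map $\R^{YB_i \to YB_iC_i}$ with $\fh^2(\rho_i^{RYB_iC_i}, \R(\rho_i^{RYB_i})) \le \epsilon_i$. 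Because the protocol is safe, $Y$ is classical and untouched, so $\R$ can be taken to act only on $B_i$ controlled on $Y$; and $\rho_i^{RYB_i} = \rho_{i-1}^{RYB_{i-1}}$ (the isometry $U_i$ producing $C_i$ doesn't change Bob's reduced state on $RYB_{i-1}$ when $C_i$ is traced out — wait, $U_i$ maps $B_{i-1}C_{i-1}\to B_iC_i$, so more precisely $\rho_i^{RYB_iC_i}$ is a Stinespring dilation of the same underlying data, and I should phrase the recovery as producing $C_i$ from the post-$U_{i-1}$-reception state). Second, \emph{propagating the previous approximation}: by the inductive hypothesis there is an isometry $V_{i-2}$ with $V_{i-2}\ket{y}^Y = \ket{y}^Y\otimes\ket{\theta_{i-2}^y}^{B_{i-2}C_{i-2}E_{i-2}}$ and $\fh(\rho_{i-2}^{RYB_{i-2}C_{i-2}},\theta_{i-2}^{RYB_{i-2}C_{i-2}}) \le \sum_{p\le i-2,\,p\ \mathrm{odd}}\sqrt{\epsilon_p}$. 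I would define the candidate $\ket{\theta_i}$ by: starting from $V_{i-2}\ket{y}$, apply Alice's message-generating isometry $U_{i-1}$ \emph{simulated locally on the $E$-register} (this is possible since on Bob's side the state after $U_{i-1}$ differs from before only in registers Bob will hold; but Alice's $U_{i-1}$ acts on $A_{i-2}C_{i-2}$ — here is where one must be careful, and I'd route Alice's action through the $E$-register as in the statement's allowance for a growing ancilla $E_i$), then apply Bob's $U_i$, then the recovery map $\R$ to regenerate the purifying correlations. The new ancilla $E_i$ absorbs $E_{i-2}$ together with whatever extra workspace these simulated isometries require, and $V_i$ is the composition.

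Combining the errors: $\rho_i^{RYB_iC_i}$ is within $\sqrt{\epsilon_i}$ of $\R(\rho_i^{RYB_i})$ by Fawzi--Renner, and $\R(\rho_i^{RYB_i})$ is within $\sum_{p\le i-2,\,p\ \mathrm{odd}}\sqrt{\epsilon_p}$ of $\theta_i^{RYB_iC_i}$ by monotonicity of $\fh$ under the channel (recovery map followed by isometries) applied to the inductive-hypothesis inequality, together with isometric invariance. The triangle inequality for $\fh$ then gives $\fh(\rho_i^{RYB_iC_i},\theta_i^{RYB_iC_i}) \le \sum_{p\le i,\,p\ \mathrm{odd}}\sqrt{\epsilon_p}$, as required. (I use the ordinary triangle inequality here, not the weak $\fh^2$ version, so no factor of $2$ is lost.) Finally, the form $\ket{\theta_i} = \sum_{x,y}\sqrt{\nu(x,y)}\ket{xxyy}^{XR_XYR_Y}\ket{\theta_i^y}^{B_iC_iE_i}$ with $V_i\ket{y}^Y = \ket{y}^Y\otimes\ket{\theta_i^y}$ is exactly the statement that $\theta_i$ is prepared by Bob acting only on a copy of $y$; this is preserved at every stage of the construction because $V_{i-2}$ has this form, $\R$ and the simulated isometries act only on Bob-side registers controlled on the classical $Y$, and $X, R_X, R_Y$ are never touched.

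\textbf{Main obstacle.} The delicate point is Stage~two: the approximating state $\theta_i$ must be \emph{producible by Bob alone from $y$}, yet the true evolution from round $i-2$ to round $i$ includes Alice's message isometry $U_{i-1}$, which Bob cannot run because it depends on $x$. The resolution — and the crux of why the $E_i$ ancilla must be allowed to grow — is that by the time we are approximating $\theta_{i-2}$, the state of the message register $C_{i-2}$ together with Bob's holdings already carries (approximately) everything Bob will ever learn; Alice's subsequent isometry $U_{i-1}$ acts on $A_{i-2}C_{i-2}$ and, from Bob's perspective, only its effect on $C_{i-2}\to C_{i-1}$ (the part he receives) matters, and that part can be folded into Bob's local simulation precisely because the inductive approximation $\theta_{i-2}$ already decouples the reference from Alice's private side. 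Making this routing rigorous — identifying exactly which registers $E_i$ must contain, and checking that the isometry $V_i$ so defined indeed satisfies $V_i\ket{y}^Y = \ket{y}^Y\otimes\ket{\theta_i^y}$ — is the step that requires the most care, and is where the ``safe protocol'' hypothesis and the Local Transition Lemma (to match purifications of Bob's approximate reduced states) do the real work.
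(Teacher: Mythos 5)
Your overall strategy --- induction on the round index, Fawzi--Renner to recover the freshest message, monotonicity plus the triangle inequality to accumulate errors --- matches the paper's. The gap is that you have the roles of Alice's and Bob's isometries reversed: in the paper's indexing, $U_{i-1}$ (even $i-1$, acting on $B_{i-2}C_{i-2}\to B_{i-1}C_{i-1}$) is \emph{Bob's} isometry and $U_i$ (odd $i$, acting on $A_{i-2}C_{i-1}\to A_iC_i$) is \emph{Alice's}, whereas you call $U_{i-1}$ ``Alice's message-generating isometry'' and $U_i$ ``Bob's'' throughout. This error manufactures the ``main obstacle'' you agonize over and makes your construction of $V_i$ incorrect: you propose to simulate an Alice isometry on the $E$ register, then apply a Bob isometry, then a recovery map --- three operations --- when the correct construction applies exactly two and never simulates any isometry of Alice's at all.

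Once the roles are straightened out, the obstacle disappears. Between rounds $i-2$ and $i$ (both odd), Bob applies his own $U_{i-1}$ directly (it depends only on $y$, which he has) and then the isometric extension $U_{\R_i}$ of the recovery map, which \emph{replaces} Alice's $U_i$ rather than being applied after it. Nobody needs to simulate Alice's isometry: since $B_i=B_{i-1}$ and $U_i$ touches only Alice's registers and the message, $\rho_i^{RYB_i}=\rho_{i-1}^{RYB_{i-1}}$, so the recovery map alone regenerates an approximation of $C_i$ from Bob's holdings. The paper defines $V_i$ in closed form --- for even $i$, $V_i=U_{\R_i}U_{i-1}\cdots U_{\R_2}U_1V_0$; for odd $i$ the mirror image $V_i=U_{\R_i}U_{i-1}\cdots U_{\R_3}U_2U_{\R_1}V_0$ --- with $V_0$ locally preparing the pre-shared entanglement, and verifies the error bound by the two-step recursion $V_i=U_{\R_i}U_{i-1}V_{i-2}$. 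Your Fawzi--Renner application with conditioning system $YB_i$, the identity $\rho_i^{RYB_i}=\rho_{i-1}^{RYB_{i-1}}$, your use of the ordinary (not weak) triangle inequality, and your base case are all correct; the missing idea is that the approximating party runs \emph{only her own} protocol isometries exactly and replaces each of the \emph{other party's} with a recovery map, which is precisely what makes the difficulty you flagged a non-issue.
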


\begin{proof}
The proofs for odd and even $i$'s are similar; we focus on even $i$'s.
Given $\epsilon_p \eqdef \rI(R \!:\! C_p \,|\,  X A_p)$ for even $p$, let $\R_p^{X A_p \rightarrow X A_p C_p}$ 
be the recovery map given by the Fawzi-Renner inequality, Lemma~\ref{lem:frineq}, such that
\begin{align*}
\fh  \!\left(\rho_p^{R X A_p C_p}, \R_p (\rho_p^{R X A_p}) \right) \quad \leq \quad
\sqrt{\epsilon_p} \enspace,
\end{align*}
and take $U_{\R_p}^{X A_p \rightarrow X A_p C_p F_p}$ to be an isometric extension of~$\R_p$.
Since register $R$ contains a coherent copy of~$X$, we can assume without loss of generality 
that $U_{\R_p}^{X A_p \rightarrow X A_p C_p F_p}$ uses~$X$ as a control register, i.e., acts as
\begin{align*}
U_{\R_p}^{X A_p \rightarrow X A_p C_p F_p} \ket{x}^X \ket{\phi}^{A_p G_p} \quad
= \quad \ket{x}^X \ket{\phi_x}^{A_p C_p F_p G_p} \enspace,
\end{align*}
for any state~$\ket{\phi}$ of~$A_p$ and ancillary register~$G_p$, and any input~$x$.
Consider an isometry $V_0^{X \rightarrow XT_\rA T_\rB}$ such that for all $x$,
\begin{align*}
V_0 \ket{x}^X \quad = \quad \ket{x}^X \ket{\psi}^{T_\rA T_\rB} \enspace,
\end{align*}
i.e.,~$V_0$ is an isometry that locally creates the same state
$\ket{\psi}^{T_\rA T_\rB}$, used as shared entanglement in~$\Pi$,
for any input~$x$. Let~$\rho_\nu$ denote the purified initial state of
the protocol:
\begin{align*}
\ket{\rho_\nu}
\quad \eqdef \quad \sum_{x, y} \sqrt{\nu (x, y)} \ket{xxyy}^{X R_X Y R_Y}
\enspace.
\end{align*}
We show that the isometry~$V_i$, state~$\sigma_i$, and register~$E_i$
defined as follows
\begin{align*}
V_i \quad & \eqdef \quad U_{\R_i} U_{i-1} \cdots U_{\R_4} U_{3} U_{\R_2} U_{1}
V_0 \enspace, \\
\ket{\sigma_i}
    \quad & \eqdef \quad V_i \ket{\rho_\nu} \enspace, \qquad \textrm{and} \\
E_i \quad & \eqdef \quad T_\rB \otimes C_1 \otimes F_2 \otimes C_3 \otimes F_4 \otimes \cdots \otimes C_{i-1} \otimes F_i
\end{align*}
satisfy the conditions of the theorem.
We show this by induction on $i$. 

First, note that $V_i$ is of the desired form, and uses $X$ as a control register.
For the base case, $i=2$, we start with
\begin{align*}
\ket{\rho_0}^{XYR_X R_Y T_\rA T_\rB} \quad = \quad V_0
\ket{\rho_\nu}^{XY R_X R_Y} \enspace,
\end{align*}
apply $U_1^{X T_\rA \rightarrow X A_1 C_1}$ to obtain $\ket{\rho_1}^{XY
R_X R_Y A_1 C_1 T_\rB}$,
and furthermore apply
$U_2^{Y C_1 T_\rB \rightarrow Y C_2 B_2}$ to obtain 
$\ket{\rho_2} = U_2 U_1 V_0 \ket{\rho_\nu}^{XY R_X R_Y}$.
It holds that 
\begin{align*}
\fh  \!\left(\rho_2^{R X A_2 C_2} , \R_2^{X A_2 \rightarrow X A_2 C_2}
(\rho_2^{R X A_2}) \right) & \quad
= \quad 
	\fh  \!\left(\rho_2^{R X A_2 C_2} , \R_2^{X A_1 \rightarrow X
A_2 C_2} (\rho_1^{R X A_1}) \right) \\
			& \quad \leq \quad \sqrt{\epsilon_2} \enspace,
\end{align*}
in which we used that  $\rho_2^{R X A_2} = \rho_1^{R X A_1}$ since the registers $Y C_2 B_2$ on which $U_2$ acts have 
been traced out, and $A_2 = A_1$.
Since it also holds that $\R_2  (\rho_1^{R X A_1}) = \Tr{ Y T_\rB C_1
F_2 }{ U_{\R_2} U_1 V_0 \ket{\rho_\nu}} = \sigma_2^{R X A_2 C_2}$, 
Eq.~(\ref{eq-ha}) follows.

For the induction step, we note that for even $i  > 2$,
$V_i = U_{\R_i} U_{i-1} V_{i-2}$, $E_i = F_i \otimes C_{i-1} \otimes
E_{i-2}$, and $\ket{\sigma_i} = U_{\R_i} U_{i-1} \ket{\sigma_{i-2}}$. 
Eq.~(\ref{eq-ha}) results from the following chain of inequalities, which are explained below:
\begin{align*}
\fh  \!\left(\rho_i^{R X A_i C_i} , \sigma_i^{R X A_i C_i} \right)  \quad \leq \quad &
\fh  \!\left(\rho_i^{R X A_i C_i} , \R_i^{X A_i \rightarrow X A_i C_i}
(\rho_i^{R X A_i}) \right)  \\
	& \mbox{} + \fh  \!\left( \R_i^{X A_i \rightarrow X A_i C_i}
(\rho_i^{R X A_i}), \sigma_i^{R X A_i C_i} \right)  \\
	\quad \leq \quad & \sqrt{\epsilon_i} + \fh  \!\left(\rho_i^{R X
A_i}, \Tr{C_{i-1} Y E_{i-2}}{ U_{i-1} \ket{\sigma_{i-2}}} \right) \\
	\quad \leq \quad & \sqrt{\epsilon_i} + \fh
\!\left(\rho_{i-2}^{R X A_{i-2} C_{i-2}}, \sigma_{i-2}^{R X A_{i-2}
C_{i-2}} \right) \\
	\quad \leq \quad & \sqrt{\epsilon_i} + \sum_{p \leq
i-2,~p~\mathrm{even}} \sqrt{\epsilon_p} \enspace.
\end{align*}
The first step is an application of the triangle inequality, and the
second follows by the definition of $\R_i$ 
and monotonicity of $h$ under the CPTP map $\R_i = \Tra{F_i} \circ \,
U_{\R_i}$. The third inequality holds because 
$\rho_i^{R X A_i} = \rho_{i-1}^{R X A_{i-1}} =  
\Tr{C_{i-1} Y B_{i-2}}{ U_{i-1} \ket{\rho_{i-2}}}$, the
isometry~$U_{i-1}$
does not act on registers~$E_{i-2}$ or~$YB_{i-2}$,
and by the monotonicity of $h$ under the map $\Tra{C_{i-1}} \circ \, U_{i-1}$. 
The last inequality holds by the induction hypothesis. This completes the proof of the theorem.
\end{proof}

\subsection{Quantum Cut-and-Paste Lemma}

The direct quantum analogue to the Cut-and-Paste Lemma~\cite{Bar-YossefJKS04} 
from classical communication complexity does not hold.
We can nevertheless obtain the following weaker property, linking the
states in a two-party protocol corresponding to any four possible pairs 
of inputs in a two-by-two rectangle. 
The result implies that if the states corresponding to two inputs~$x,x'$
to Alice and a fixed input~$y$ to Bob are close up to 
a local unitary operation on Alice's side, and the states for two
inputs~$y,y'$ to Bob and a fixed input~$x$ to Alice
are close up to a local unitary operation on Bob's side, then, up to
local unitary operations on Alice's and Bob's sides, the states for 
all pairs $(x^{\prime \prime}, y^{\prime \prime})$ of inputs in the 
rectangle $\{x, x^\prime \} \times \{y, y^\prime \}$ 
are close.  The lemma is a variant of the hybrid argument developed 
in Refs.~\cite{JainRS03b, JainN14}, and is proven along the same lines.
A similar, albeit slightly weaker statement may be derived
from the corresponding lemmas in these articles. For example,
Lemma~IV.10 from
Ref.~\cite{JainN14}, when adapted to the setting described above and
combined with a triangle inequality, implies bounds similar to those
in Eqs.~(\ref{eq:indodd}) and~(\ref{eq:indeven}) below. However, in
the notation of the lemma below, the bounds so derived are both larger by
the additive term~$2 h_{i-1}$.
\begin{lemma}[Quantum Cut-and-Paste]
Given any safe quantum protocol $\Pi$ with classical inputs, 
consider distinct inputs $x, x^\prime$ for Alice, and~$y, y^\prime$ for
Bob. Let $\ket{\rho_0}^{A_0 B_0}$ be 
the shared initial state of Alice and Bob for any pair of inputs
$(x^{\prime \prime}, y^{\prime \prime}) \in \{x, x^\prime \} \times \{y, y^\prime \}$. (The state~$\rho_0$ may depend on the set $\{x, x^\prime\} \times \{y, y^\prime\}$.)
Let $\ket{\rho_{i, x^{\prime \prime} y^{\prime \prime}}}^{A_i B_i C_i}$ be the state of 
registers $A_i B_i C_i$ after the~$i$th message is sent, when the input
is~$(x^{\prime \prime}, y^{\prime \prime})$.
For odd $i$, let
\begin{align*}
h_i \quad \eqdef \quad \fh  \!\left(\rho_{i, xy}^{B_i C_i}, \rho_{i,x^\prime y}^{B_i
C_i} \right)
\end{align*}
and $V_{i, x \rightarrow x^\prime}^{A_i}$ denote the unitary operation
acting on $A_i$ given by the Local Transition Lemma
(Lemma~\ref{lem:loctr})  such that 
\begin{align*}
h_i \quad = \quad \fh  \!\left(V_{i, x \rightarrow x^\prime}^{A_i} \ket{\rho_{i,
xy}} \enspace,
\ket{\rho_{i, x^\prime y}}\right) \enspace.
\end{align*}
For even $i$, let
\begin{align*}
h_i \quad \eqdef \quad \fh  \!\left(\rho_{i, xy}^{A_i C_i}, \rho_{i,x y^\prime}^{A_i
C_i} \right)
\end{align*}
and $V_{i, y \rightarrow y^\prime}^{B_i}$ denote the unitary operation acting on $B_i$ given by the Local Transition Lemma  such that 
\begin{align*}
h_i \quad = \quad \fh  \!\left(V_{i, y \rightarrow y^\prime}^{B_i} \ket{\rho_{i, xy}},
\ket{\rho_{i, x y^\prime}}\right) \enspace.
\end{align*}
Define $V_{0, y \rightarrow y^\prime}^{B_0} \eqdef \bI^{B_0}$ and $h_0 \eqdef
0$. Recall that $B_i = B_{i-1}$ for odd~$i$ and $A_i = A_{i-1}$ for even $i$.
It holds that for odd $i$,
\begin{align}
\label{eq:eqodd}
\fh  \!\left(V_{i-1, y \rightarrow y^\prime}^{B_i} \ket{\rho_{i, xy}},
\ket{\rho_{i, x y^\prime}}\right) & \quad
= \quad h_{i-1} \enspace, \\
\label{eq:indodd}
\fh  \!\left(V_{i, x \rightarrow x^\prime}^{A_i} V_{i-1, y \rightarrow
y^\prime}^{B_i} \ket{\rho_{i, xy}}, \ket{\rho_{i, x^\prime
y^\prime}}\right) & \quad
\leq \quad h_i + h_{i-1 } + 2 \sum_{j=1}^{i-2} h_j \enspace,
\end{align}
and for even $i$, 
\begin{align}
\label{eq:eqeven}
\fh  \!\left(V_{i-1, x \rightarrow x^\prime}^{A_i} \ket{\rho_{i, xy}},
\ket{\rho_{i, x^\prime y}}\right) & \quad
= \quad h_{i-1} \enspace, \\
\label{eq:indeven}
\fh  \!\left(V_{i, y \rightarrow y^\prime}^{B_i} V_{i-1, x \rightarrow
x^\prime}^{A_i} \ket{\rho_{i, xy}}, \ket{\rho_{i, x^\prime
y^\prime}}\right) & \quad
\leq \quad h_i + h_{i-1} +  2 \sum_{j=1}^{i-2} h_j \enspace.
\end{align}
\end{lemma}

\begin{proof}
We have $\ket{\rho_{0, x^{\prime \prime} y^{\prime \prime}}} =
\ket{\rho_0}$, and define $C_0$ to be a trivial register.
For odd $i$, let $U_{i, x^{\prime \prime}}$ be the protocol isometry $U_i$ conditional on the state of $X$ being $\ket{x^{\prime \prime}}$. Then we have
\begin{align*}
U_{i, x^{\prime \prime}}^{A_{i-1} C_{i-1} \rightarrow A_i C_i} \ket{\rho_{i-1, x^{\prime \prime} y^{\prime \prime}}}^{A_{i-1} B_{i-1} C_{i-1}} \quad
= \quad \ket{\rho_{i, x^{\prime \prime} y^{\prime \prime}}}^{A_{i} B_{i}
C_{i}} \enspace.
\end{align*}
It follows by the isometric invariance of $h$ and because $V_{i-1, y \rightarrow y^\prime}$ and $U_{i, x}$ act on disjoint sets of registers that
\begin{align*}
\fh  \!\left(V_{i-1, y \rightarrow y^\prime}^{B_i} \ket{\rho_{i, xy}},
\ket{\rho_{i, x y^\prime}}\right) & \quad
= \quad 
	\fh  \!\left(V_{i-1, y \rightarrow y^\prime}^{B_{i-1}} \ket{\rho_{i-1,
xy}}, \ket{\rho_{i-1, x y^\prime}}\right) \\
		& \quad = \quad h_{i-1} \enspace.
\end{align*}
Similarly, for even $i$,
\begin{align*}
U_{i, y^{\prime \prime}}^{B_{i-1} C_{i-1} \rightarrow B_i C_i} \ket{\rho_{i-1, x^{\prime \prime} y^{\prime \prime}}}^{A_{i-1} B_{i-1} C_{i-1}} \quad
= \quad \ket{\rho_{i, x^{\prime \prime} y^{\prime \prime}}}^{A_{i} B_{i} C_{i}}
\end{align*}
and
\begin{align*}
\fh  \!\left(V_{i-1, x \rightarrow x^\prime}^{A_i} \ket{\rho_{i, xy}},
\ket{\rho_{i, x^\prime y}}\right) 
		& \quad = \quad h_{i-1} \enspace.
\end{align*}
This proves Eqs.~(\ref{eq:eqodd}) and~(\ref{eq:eqeven}).

We show by induction on $i$ that Eqs.~(\ref{eq:indodd}) and~(\ref{eq:indeven}) 
hold for odd and even $i$'s, respectively. For the base case ($i=1$), we
have that
\begin{align*}
\ket{\rho_{1, x^{\prime \prime} y^{\prime \prime}}}^{A_1 B_1 C_1} 
    \quad = \quad U_{x^{\prime \prime}}^{A_0 \rightarrow A_1 C_1} \ket{\rho_0}^{A_0 B_0} \enspace,
\end{align*}
and is independent of $y^{\prime \prime}$. So
\begin{align*}
\ket{\rho_{1, x y}} \quad = \quad \ket{\rho_{1, x y^{\prime}}} \enspace,
\qquad \mathrm{and} \qquad
\ket{\rho_{1, x^\prime y}}
    \quad = \quad \ket{\rho_{1, x^\prime y^{\prime}}} \enspace,
\end{align*}
and Eq.~(\ref{eq:indodd}) follows by taking $V_{0, y \rightarrow y^\prime}^{B_0} \eqdef
\bI^{B_0}$:
\begin{align*}
\fh  \!\left(V_{1, x \rightarrow x^\prime}^{A_1} \ket{\rho_{1, xy}},
\ket{\rho_{1, x^\prime y^\prime}}\right) & \quad
= \quad h_1 \enspace.
\end{align*}

For the induction step (with~$i \ge 2$), the case of even and odd $i$'s are proven similarly; we focus on even $i$'s. 
Assuming that Eq.~(\ref{eq:indodd}) holds for $i-1$, we show Eq.~(\ref{eq:indeven}) holds for $i$ by the following chain of inequalities which are explained below:
\begin{align*}
\fh  \!\left(V_{i, y \rightarrow y^\prime}^{B_i} V_{i-1, x \rightarrow
x^\prime}^{A_i} \ket{\rho_{i, xy}}, \ket{\rho_{i, x^\prime
y^\prime}}\right)
  \quad \leq \quad & \fh \!\left(V_{i, y \rightarrow y^\prime}^{B_i} V_{i-1, x
\rightarrow x^\prime}^{A_i} \ket{\rho_{i, xy}}, V_{i-1, x \rightarrow
x^\prime}^{A_i} \ket{\rho_{i, x y^\prime}}\right) \\
& + \fh \!\left(V_{i-1, x \rightarrow x^\prime}^{A_i} \ket{\rho_{i, x
y^\prime}}, \ket{\rho_{i, x^\prime y^\prime}}\right) \\
 \quad = \quad & h_i + \fh \!\left(V_{i-1, x \rightarrow x^\prime}^{A_{i-1}}
\ket{\rho_{i-1, x y^\prime}}, \ket{\rho_{i-1, x^\prime y^\prime}}\right) \\
\quad \leq \quad & h_i + \fh \!\left(V_{i-1, x \rightarrow x^\prime}^{A_{i-1}}
\ket{\rho_{i-1, x y^\prime}}, V_{i-1, x \rightarrow x^\prime}^{A_{i-1}}
V_{i-2, y \rightarrow y^\prime}^{B_{i-1}} \ket{\rho_{i-1, x y}}\right) \\
& + \fh \!\left(V_{i-1, x \rightarrow x^\prime}^{A_{i-1}} V_{i-2, y \rightarrow
y^\prime}^{B_{i-1}} \ket{\rho_{i-1, x y}}, \ket{\rho_{i-1, x^\prime
y^\prime}}\right) \\
  \quad \leq \quad & h_i + h_{i-2} + h_{i-1} + h_{i-2} +  2
\sum_{j=1}^{i-3} h_j \enspace.
\end{align*}
The first step is by the triangle inequality. In the second step, we
used the unitary invariance of $\fh$ along with the definition of $h_i$
for the first term, and with the property that $V_{i-1}^{A_i}$ 
and $U_{i, y^{\prime }}^{B_{i-1} C_{i-1} \rightarrow B_i C_i}$ act on disjoint sets of registers 
for the second term. The next step follows by the triangle inequality. The last step is
by Eq.~(\ref{eq:eqodd}) for the second term, and by the induction hypothesis (Eq.~(\ref{eq:indodd})) for the third term.
\end{proof}

\section{QIC Lower Bound for Augmented Index}
\label{sec:QIC-lb-AI}

In this section, we establish a lower bound for the quantum information
cost of protocols for Augmented Index.

\subsection{Relating Alice's states to $\QIC_{\rB \rightarrow A}$}
\label{sec:genalice}

We study the quantum information cost of protocols for Augmented Index
on input distribution $\mu_0$ (the uniform distribution
over~$f_n^{-1}(0)$),
and relate it to the distance between the states on two different inputs.
We first focus on the quantum information cost from Bob to Alice,
arising from the messages with even~$i$'s. We show that if this cost is
low, then Alice's reduced states on different inputs for Bob are close
to each other. (This high level intuition is the same as that described
in Ref.~\cite{JainN14}.)

We state and prove our results for inputs with even length $n$; a similar result can be shown
for odd $n$ by suitably adapting the proof.

We consider the following purification of the input registers, corresponding to a particular
preparation method for the $K$ register, and to a preparation of the $X$
register also depending on the preparation of register~$K$. Recall that
the content $k$ of register $K$ is uniformly distributed in $[n]$. 
The following registers 
are each initialized to uniform superpositions over the domain
indicated: $R_S^1$ over~$\set{0,1}$ (with a coherent copy in $R_S^2$),
register~$R_J^1$ over indices $j \in [n/2]$ (with a coherent copy in $R_J^2$),
register $R_L^1$ over $\ell \in [n/2+1, n]$ (with a coherent copy in $R_L^2$).
Register~$R_K$ holds a coherent copy of register~$K$, whose content $k$
is set to the value~$j$ in~$R_J^1$ when $R_S^1$ 
is~$0$, and  to~$\ell$ when~$R_S^1$ is~$1$.  Depending on the
value~$\ell$ of~$R_L^1$, 
the following registers are initialized to uniform superpositions to prepare the $X$ register, 
itself uniform over $\{0, 1 \}^n$: register 
$R_Z^1$ over $z \in \{0, 1 \}^\ell$, and register $R_W^1$ over~$w \in \{0, 1 \}^{n - \ell}$. The register $X$ 
is set to $x= zw$, so together $R_Z^1 R_W^1$ hold a coherent copy of $X$, 
and a second coherent copy is held in  $R_Z^2 R_W^2$. 
If $\ell$ is clear from the context, we sometimes use the notation $Z$ and $W$ to 
refer to the parts of the $X$ register holding $z$ and $w$, respectively.
Depending on the value~$j$ of~$R_J^1$, we also refer to a further 
decomposition $z = z^{\prime} z^{\prime \prime}$ with $z^\prime \in \{0, 1\}^j$ and 
$z^{\prime \prime} \in \{0, 1 \}^{\ell - j}$. We denote by $X_{1K}$ the register 
held by Bob, containing the first $k-1$ bits of $x$ and the verification bit~$b$. The bit~$b$ is always equal to $x_k$ under $\mu_0$; $X_{1K}$ thus contains the first $k$ bits of $x$ in this case. The register~$X_{1K}$ is set to 
$z^\prime$ when $R_S^1$ is $0$, to $z$ when $R_S^1$ is 1, 
and register $R_{X_{1K}}$ holds a coherent copy of~$X_{1K}$.

In summary, the resulting input state $\rho_{\mu_0}^{XKX_{1K}}$ 
distributed according to $\mu_0$ is purified by register $R$, which 
decomposes as
\[
R \quad \eqdef \quad
    R_J^1 R_J^2 R_L^1 R_L^2 R_Z^1 R_Z^2 R_W^1 R_W^2 R_S^1 R_S^2 R_K R_{X_{1K}}
    \enspace.
\]
Using the normalization factor $c \eqdef
1 / \sqrt{ (n/2) \cdot (n/2)  \cdot 2^{\ell} \cdot 2^{n - \ell} \cdot
2}$, the purified state is:
\begin{align}
\label{eq:purif_alice}
\ket{\rho_0}^{R X K X_{1K}} \quad \eqdef \quad c \sum_{j, \ell, z, w} \ket{jj \ell \ell zz ww} \left(
\ket{00} \ket{j z^\prime} \ket{zw}^{X} \ket{j z^\prime}^{K X_{1K}}  +
\ket{11} \ket{\ell z} \ket{zw}^X \ket{\ell z}^{K X_{1K}} \right)
\enspace.
\end{align}

Starting with the above purification and using shared entanglement
$\ket{\psi}^{T_\rA T_\rB}$ in the initial state, 
the state $\rho_i$ over the registers~$R X K X_{1K} A_i B_i C_i$ 
after round $i$ in the protocol is
\begin{align}
\label{eq:purif_alice_rdi}
\ket{\rho_i} \quad \eqdef \quad c \sum_{j, \ell, z, w} \ket{jj \ell \ell zz ww} \left(
\ket{00}  \ket{j z^\prime} \ket{zw} \ket{j z^\prime} \ket{\rho_i^{zw,
(j, z^\prime)}}  + \ket{11}  \ket{\ell z} \ket{zw} \ket{\ell z}
\ket{\rho_i^{zw, ( \ell, z)}} \right) \enspace,
\end{align}
where~$\ket{\rho_i^{x, (k, x[1, k])}}$ denotes
the pure state in registers $A_i B_i C_i$ conditional on joint input~$\left(
x, (k, x[1, k])\right)$.

Define~$R_\rA \eqdef  R_J^1 R_L^1 R_S^1 R_K  R_W^1 R_W^2$. All of $R_\rA$'s
sub-registers except $R_W^1 R_W^2$ are classical in $\rho_i^{R_A X A_i
C_i}$, since one
of their coherent copies is traced out from the global purification register $R$. The $Z$ part of 
the $X$ register is also classical.
We can write the reduced state of~$\rho_i$ on registers $R_A X A_i C_i$ as
\begin{align*}
\rho_i^{R_A X A_i C_i} \quad = \quad& c^\prime \sum_{j, \ell, z} \kb{j \ell }{j
\ell } \otimes \left( \kb{0j}{0j} \otimes \kb{z}{z}^Z \otimes \rho_{i,
\ell  z  j z^\prime}  + \kb{1 \ell }{1 \ell } \otimes \kb{z}{z}^Z
\otimes \rho_{i,  \ell z \ell z} \right) \enspace,
\end{align*}
in which we used normalization $c^\prime \eqdef 1 / ( (n/2) \cdot (n/2) \cdot 2^\ell \cdot 2)$ and the shorthands 
\begin{align}
\label{eq:rhoilzk}
\rho_{i,  \ell z  k x[1, k]} \quad \eqdef \quad& \Tr{B_i}{\kb{\rho_i^{ \ell z
k x[1, k]}}{\rho_i^{\ell z k x[1, k]}}} \enspace, \qquad \textrm{where} \\
\label{eq:purerhoilzk}
\ket{\rho_i^{ \ell z  k x[1, k]}} \quad \eqdef \quad& 1/ \sqrt{2^{n - \ell}}
\sum_w \ket{www}^{R_W^1 R_W^2 W} \ket{\rho_i^{zw, (k, x[1, k])}}^{A_i
B_i C_i} \enspace.
\end{align}
The indices $\ell z k x[1, k]$ have the following meaning: $\ell$ and $z$ indicate that 
Alice's input register~$X$ is in superposition after the length $\ell$
prefix $z$ (with~$z = x[1, \ell]$); $k$ and $x[1, k]$ tell us
the index $k$ in Bob's input, the prefix $x[1, k-1]$ 
of $x$ given as input to Bob, and Bob's verification bit $b$
(which is equal to $x_k$ under $\mu_0$), respectively.
Using this notation along with the superposition-average encoding theorem, we show the following result.

\begin{lemma}
\label{lem:qicvsalice}
Given any even $n \ge 2$, let $J$ and $L$ be random variables uniformly
distributed in $[n/2]$ and $[n] \setminus [n/2]$, respectively. Conditional on some 
value $\ell$ for $L$, let $Z$ be a random variable chosen uniformly at random in $\{0, 1 \}^\ell$. The following 
then holds for any $M$-message safe quantum protocol $\Pi$ for Augmented
Index~$f_n \,$, for any even $i \leq M$:
\begin{align*}
 \QIC_{\rB \rightarrow A} (\Pi, \mu_0) \quad \geq \quad \frac{1}{2 M} \; \mathbb{E}_{j
\ell z\sim JLZ} \left[ \fh^2 \!\left(\rho_{i, \ell z j z^\prime}^{  R_W^1 R_W^2
W A_i C_i}, \rho_{i, \ell z \ell z}^{  R_W^1 R_W^2 W A_i C_i} \right)\right]
\enspace.
\end{align*}
\end{lemma}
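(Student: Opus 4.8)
The goal is to lower-bound $\QIC_{\rB \rightarrow \rA}(\Pi,\mu_0) = \sum_{i \text{ even}} \rI(R \!:\! C_i \,|\, X A_i)$ by the average squared Bures distance between Alice's states $\rho_{i, \ell z j z'}$ and $\rho_{i, \ell z \ell z}$. The natural route is to apply the Superposition-Average Encoding Theorem: for even $i$, it gives a state $\sigma_i$ of the form $\sum_{x,y}\sqrt{\nu(x,y)}\ket{xxyy}\ket{\sigma_i^x}^{A_iC_iE_i}$ that depends on Alice's input $x$ only, with $\fh(\rho_i^{RXA_iC_i}, \sigma_i^{RXA_iC_i}) \le \sum_{p \le i,\ p \text{ even}} \sqrt{\epsilon_p}$, where $\epsilon_p = \rI(R\!:\!C_p\,|\,XA_p)$. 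The key point is that, in the purification~(\ref{eq:purif_alice}), the ``input to Alice'' that $\sigma_i$ is allowed to depend on is the $X$ register together with whatever classical registers in $R$ are held coherently on Alice's side; crucially, it must \emph{not} depend on $R_S^1$ or on the distinction between the branch $(0, j, z')$ and $(1, \ell, z)$, since these encode Bob's index $K$, not Alice's input.

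\textbf{Key steps.} First I would identify, inside the global purified state $\rho_i$ of~(\ref{eq:purif_alice_rdi}), the conditional states $\ket{\rho_i^{\ell z k x[1,k]}}$ from~(\ref{eq:purerhoilzk}): conditioning the $R_A$-registers on $j,\ell$ and the $Z$-part of $X$ on $z$, the two branches $R_S^1 = 0$ and $R_S^1 = 1$ carry states $\rho_{i,\ell z j z'}^{R_W^1 R_W^2 W A_i C_i}$ and $\rho_{i,\ell z \ell z}^{R_W^1 R_W^2 W A_i C_i}$ respectively. Since $\sigma_i$ depends on Alice's input only — and Alice's input data (the $X$ register $= zw$, and the coherently-held $R_W, R_J, R_L$ registers) is \emph{identical} in the two branches after fixing $j, \ell, z$ — the corresponding reduced state of $\sigma_i$ on $R_W^1 R_W^2 W A_i C_i$, call it $\sigma_{i, \ell z j z'}$ vs.\ $\sigma_{i, \ell z \ell z}$, must be the \emph{same} state in both branches. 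Then by the triangle inequality for $\fh$,
\begin{align*}
\fh\!\left(\rho_{i,\ell z j z'}^{R_W^1 R_W^2 W A_i C_i}, \rho_{i,\ell z \ell z}^{R_W^1 R_W^2 W A_i C_i}\right)
\quad \le \quad \fh\!\left(\rho_{i,\ell z j z'}, \sigma_{i,\cdot}\right) + \fh\!\left(\sigma_{i,\cdot}, \rho_{i,\ell z \ell z}\right),
\end{align*}
and each term on the right is controlled, in expectation over $j,\ell,z$, by the distance between $\rho_i^{RXA_iC_i}$ and $\sigma_i^{RXA_iC_i}$. The precise bookkeeping: applying the joint-linearity of $\fh^2$ over the classical registers $(R_S^1, R_J^1, R_L^1, Z)$ in the decompositions of $\rho_i^{RXA_iC_i}$ and $\sigma_i^{RXA_iC_i}$ (both are block-diagonal in these), one gets
$
\mathbb{E}_{j\ell z}\big[\tfrac12\,\fh^2(\rho_{i,\ell z j z'}, \sigma_{i,\cdot}) + \tfrac12\,\fh^2(\rho_{i,\ell z\ell z},\sigma_{i,\cdot})\big] \le \fh^2(\rho_i^{RXA_iC_i}, \sigma_i^{RXA_iC_i}).
$
Combining with the weak triangle inequality $\fh^2(\rho,\rho') \le 2\fh^2(\rho,\sigma)+2\fh^2(\sigma,\rho')$ and the Superposition-Average Encoding bound, one arrives at
\begin{align*}
\mathbb{E}_{j\ell z}\!\left[\fh^2\!\left(\rho_{i,\ell z j z'}^{R_W^1 R_W^2 W A_i C_i}, \rho_{i,\ell z \ell z}^{R_W^1 R_W^2 W A_i C_i}\right)\right]
\quad \le \quad 4\,\Big(\sum_{p \le i,\ p \text{ even}} \sqrt{\epsilon_p}\Big)^2
\quad \le \quad 4 (M/2) \sum_{p \text{ even}} \epsilon_p,
\end{align*}
using Cauchy--Schwarz in the last step (there are at most $M/2$ even indices). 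Since $\sum_{p \text{ even}} \epsilon_p = \QIC_{\rB\to\rA}(\Pi,\mu_0)$, rearranging gives exactly $\QIC_{\rB\to\rA}(\Pi,\mu_0) \ge \frac{1}{2M}\,\mathbb{E}_{j\ell z}[\fh^2(\cdots)]$.

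\textbf{Main obstacle.} The delicate point is the claim that the encoding state $\sigma_i$ produced by the Superposition-Average Encoding Theorem genuinely \emph{cannot} depend on the branch index $R_S^1$ (i.e.\ on Bob's index $K$). This requires looking at what counts as ``Alice's input $X$'' in the theorem's hypothesis: the theorem is stated for a protocol with input registers $XY$, and in our setup $Y = (K, X[1,K])$ is Bob's input while $X$ (Alice's input) is the same string $x = zw$ regardless of which branch we are in. One must check that in the purification~(\ref{eq:purif_alice}) the registers $R_S^1, R_J^1, R_L^1, R_K$ are all properly part of the $R_Y$ (Bob-side) purifying register rather than $R_X$, so that $\sigma_i^x$ is indexed only by $x$ — hence the reduced states $\sigma_{i,\ell z j z'}$ and $\sigma_{i, \ell z \ell z}$ coincide once $\ell, z$ (hence $w$-superposition, hence $x$) are fixed. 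A secondary technical point is verifying that all the relevant states are simultaneously block-diagonal in the classical registers $(R_S^1, R_J^1, R_L^1, Z)$ so that joint linearity of $\fh^2$ applies cleanly; this follows from the structure of~(\ref{eq:purif_alice_rdi}) and safeness of the protocol, but should be spelled out.
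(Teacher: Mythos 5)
Your proposal is correct and follows essentially the same route as the paper's proof: apply the Superposition-Average Encoding Theorem to obtain a $\sigma_i$ whose $A_iC_iE_i$ component depends only on Alice's input $x$ (hence is identical across the two branches $R_S^1 = 0,1$ once $j,\ell,z$ are fixed), trace down to the classical registers, use joint linearity of $\fh^2$, the weak triangle inequality, and Cauchy--Schwarz to convert the sum of $\sqrt{\epsilon_p}$'s into $M/2$ times the QIC. The one place where your phrasing is slightly loose — taking joint linearity over ``the classical registers $(R_S^1,R_J^1,R_L^1,Z)$'' — implicitly presumes that you have traced $R$ down to $R_\rA = R_J^1 R_L^1 R_S^1 R_K R_W^1 R_W^2$ (discarding the coherent copies $R_J^2,R_L^2,R_S^2,\ldots$) so that these registers really are classical, which is exactly the extra monotonicity step the paper takes; your ``main obstacle'' paragraph correctly identifies the key dependence of $\sigma_i^x$ on $x$ only and resolves it.
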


\begin{proof}
Considering the same purification of the input state as in Eq.~(\ref{eq:purif_alice}), we get the following states from the superposition-average 
encoding theorem 
\begin{align*}
\ket{\sigma_i} \quad \eqdef \quad & c \sum_{j, \ell, z, w} \ket{jj \ell \ell zz ww}
\left( \ket{00}  \ket{j z^\prime} \ket{zw} \ket{j z^\prime}
\ket{\sigma_i^{zw}}^{A_i C_i E_i}  + \ket{11}  \ket{\ell z} \ket{zw}
\ket{\ell z} \ket{\sigma_i^{zw}}^{A_i C_i E_i} \right) \enspace, 
\end{align*}
satisfying
\begin{align*}
\fh (\rho_i^{R XA_iC_i}, \sigma_i^{R XA_iC_i}) \quad \leq \quad & \sum_{p \leq
i,~p~\mathrm{even}} \sqrt{\rI (C_p\!:\!R\,|\, X A_p)} \enspace.
\end{align*}
The reduced state of~$\sigma_i$ on registers $R_\rA X A_i C_i$ is
\begin{align*}
\sigma_i^{R_\rA X A_i C_i} \quad = \quad& c^\prime \sum_{j, \ell, z} \kb{j \ell }{j
\ell } \otimes \left( \kb{0j}{0j} \otimes \kb{z}{z}^Z \otimes \sigma_{i,
\ell  z }  + \kb{1 \ell }{1 \ell } \otimes \kb{z}{z}^Z \otimes
\sigma_{i,  \ell z } \right) \enspace,
\end{align*}
in which we use the shorthands 
\begin{align*}
\sigma_{i, \ell z} \quad \eqdef \quad& \Tr{E_i}{\kb{\sigma_i^{ \ell
z}}{\sigma_i^{\ell z}}} \enspace, \quad \textrm{where} \\
\ket{\sigma_i^{ \ell z}} \quad \eqdef \quad& \frac{1}{\sqrt{2^{n - \ell}}}
\; \sum_w
\ket{www}^{R_W^1 R_W^2 W} \ket{\sigma_i^{  zw}}^{A_i C_i E_i} \enspace.
\end{align*}
The lemma then follows from the next chain of inequalities, as explained below:
\begin{align*}
\frac{i}{2} \sum_{p \leq i,~p~\mathrm{even}} \rI (C_p\!:\!R\,|\, X A_p)
\quad \geq \quad & \left(\sum_{p \leq i,~p~\mathrm{even}} \sqrt{\rI
(C_p\!:\!R\,|\, X A_p)}\right)^2 \\
			\quad \geq \quad & \fh^2 \!\left(\rho_i^{ R_\rA Z
W A_i C_i}, \sigma_i^{R_\rA Z W A_i C_i} \right) \\
			\quad = \quad &  \frac{1}{2} \; \mathbb{E}_{j \ell z \sim JLZ}
\left[ \fh^2 \!\left(\rho_{i, \ell z j z^\prime}^{  R_W^1 R_W^2 W A_i
C_i}, \sigma_{i,  \ell z}^{  R_W^1 R_W^2 W A_i C_i} \right) \right]  \\
			& + \frac{1}{2} \; \mathbb{E}_{j \ell z \sim JLZ}
\left[ \fh^2 \!\left(\rho_{i, \ell z \ell z}^{  R_W^1 R_W^2 W A_i C_i},
\sigma_{i,  \ell z}^{  R_W^1 R_W^2 W A_i C_i} \right) \right] \\
			\quad \geq \quad & \frac{1}{4} \;  \mathbb{E}_{j \ell z \sim
JLZ} \left[ \fh^2 \!\left(\rho_{i, \ell z j z^\prime}^{  R_W^1 R_W^2 W A_i
C_i}, \rho_{i, \ell z \ell z}^{  R_W^1 R_W^2 W A_i C_i} \right) \right]
\enspace.
\end{align*}
The first inequality is by the concavity of the square root function and
the Jensen inequality, and the second by the superposition-average encoding 
theorem along with the monotonicity of $\fh$ under tracing out a part of register~$R$.
The equality is by the joint linearity of $\fh^2$ (cf. Eq.~\eqref{eq-bures-joint-linearity}), by expanding the expectation over $R_S^1$ and by fixing $k$ 
accordingly. The last inequality is by the weak triangle inequality
of $\fh^2$ (cf. Eq.~\eqref{eq-weak-triangle}).
\end{proof}

\subsection{Relating Bob's states to $\QIC_{\rA \rightarrow \rB}$}

We continue with the notation from the previous section, and now focus
on the quantum information cost from Alice to Bob, arising from messages with odd $i$'s. 
We go via an alternative notion of information cost used by
Jain and Nayak~\cite{JainN14}, and studied further by Lauri{\`e}re and
Touchette~\cite{LT17-information-flow,LT17-information-flow-arxiv}.
This notion is similar to the internal information cost
of classical protocols (see, e.g., Refs.~\cite{Bar-YossefJKS04,BarakBCR13}),
and is called the Holevo information cost in Ref.~\cite{LT17-information-flow-arxiv}.
\begin{definition}
Given a safe quantum protocol $\Pi$ with classical inputs, and
distribution $\nu$ over inputs, the \emph{Holevo information cost\/}  
(of the messages) from Alice to Bob in round $i$ is defined as
\begin{align*}
\widetilde{\QIC}_{\rA \rightarrow \rB}^i (\Pi, \nu) \quad = \quad & \rI
(X \!:\! B_i C_i \,|\, 
Y ) \enspace,
\end{align*}
and the cumulative \emph{Holevo information cost\/} from Alice to Bob is defined as
\begin{align*}
\widetilde{\QIC}_{\rA \rightarrow \rB} (\Pi, \nu) \quad = \quad &
\sum_{i~\mathrm{odd}} \widetilde{\QIC}_{\rA \rightarrow \rB}^i (\Pi,
\nu) \enspace.
\end{align*}
\end{definition}

Given a bit string $z$ of length at least~$\ell \geq 1$, let
$z^{(\ell)}$ denote the string obtained by flipping the~$\ell$th bit of~$z$. The following result can be inferred from the proof of Lemma~4.9 in
Ref.~\cite{JainN14}.
\begin{lemma}
\label{lem:qicvsbob}
Given any even $n \geq 2$, let $J$ and $L$ be random variables uniformly
distributed in $[n/2]$ and $[n] \setminus  [n/2]$, respectively. Conditional on some 
value $\ell$ for $L$, let $Z$ be a random variable chosen uniformly at random in $\{0, 1 \}^\ell$. The following 
holds for any $M$-message safe quantum protocol $\Pi$ for the Augmented
Index function~$f_n \,$, for any odd $i \leq M$:
\begin{align*}
\frac{1}{n} \; \widetilde{\QIC}_{\rA \rightarrow \rB}^i (\Pi, \mu_0) \quad
\geq \quad
\frac{1}{16 } \; \mathbb{E}_{j \ell z\sim JLZ} \left[ \fh^2 \!\left(\rho_{i, \ell
z j z^\prime}^{  B_i C_i}~,~\rho_{i, \ell z^{(\ell)} j z^\prime}^{  B_i
C_i} \right)\right] \enspace,
\end{align*}
with $\rho_{i, \ell z j z^\prime}$ defined by Eqs.~(\ref{eq:rhoilzk}) and~(\ref{eq:purerhoilzk}).
\end{lemma}
For completeness, we provide a proof of this lemma 
in Appendix~\ref{app:jnbobqic} using our notation.

Lauri{\`e}re and Touchette~\cite{LT17-information-flow,LT17-information-flow-arxiv} prove that Holevo
information cost is a lower bound on quantum information cost. 
\begin{lemma}
\label{lem:qictilvsqicatob}
Given any $M$-message quantum protocol $\Pi$ and any input distribution $\nu$, the following holds for any odd $i \leq M$:
\begin{align*}
\widetilde{\QIC}_{\rA \rightarrow \rB}^i (\Pi, \nu) \quad \leq \quad \QIC_{\rA
\rightarrow \rB} (\Pi, \nu) \enspace.
\end{align*}
\end{lemma}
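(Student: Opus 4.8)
The plan is to prove the following stronger, cumulative bound: for every odd $i \le M$,
\begin{align*}
\widetilde{\QIC}^i_{\rA\rightarrow\rB}(\Pi,\nu)
 \;=\; \rI(X\!:\!B_iC_i\,|\,Y)
 \;\le\; \sum_{j\le i,\ j\ \mathrm{odd}} \rI(R\!:\!C_j\,|\,YB_j)
 \;\le\; \QIC_{\rA\rightarrow\rB}(\Pi,\nu),
\end{align*}
where the final step is merely non-negativity of conditional mutual information (strong subadditivity). The middle inequality is in the same spirit as the Information Flow Lemma (Lemma~\ref{thm-inf-flow}), but it is cleanest to establish it directly by tracking how much information about the purifying register $R$ is contained in Bob's total register as the protocol runs.

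First I would pass from $X$ to $R$. Since $R$ purifies the classical input state $\rho_\nu^{XY}$, it carries a copy of the classical register $X$: there is a channel from $R$ onto a fresh register that is jointly distributed with $(Y,B_i,C_i)$ exactly as $X$ is. Hence the data processing inequality gives $\rI(X\!:\!B_iC_i\,|\,Y)\le \rI(R\!:\!B_iC_i\,|\,Y)$; adjoining the (Bob-held) register $Y$ to the left argument is free, so this equals $\rI(R\!:\!YB_iC_i\,|\,Y)$, the mutual information between $R$ and \emph{all} of Bob's registers after message $i$, conditioned on $Y$.

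Next, let $\beta_j$ denote the collection of all of Bob's registers just after message $j$ is sent --- so $\beta_j=YB_jC_j$ for odd $j$, $\beta_j=YB_j$ for even $j$, and $\beta_0=YB_0$ --- and set $\iota_j:=\rI(R\!:\!\beta_j\,|\,Y)$. I would verify three facts. (a) $\iota_0=0$: the initial global pure state is $\ket{\rho_\nu}^{RXY}\otimes\ket{\psi}^{T^\rA T^\rB}$, a product across $R$ and $B_0=T^\rB$. (b) For odd $j$, $\iota_j=\iota_{j-1}+\rI(R\!:\!C_j\,|\,YB_j)$: Alice's isometry $U_j$ acts only on her own registers (using $X$ purely as a control), hence leaves the marginal on $RYB_{j-1}$ untouched, and then Bob's receipt of $C_j$, combined with the chain rule and $B_j=B_{j-1}$, yields exactly the per-message term. (c) For even $j$, $\iota_j\le\iota_{j-1}$: Bob's isometry $U_j$ uses $Y$ only as a control, so conditioned on each value of $Y$ it is a fixed isometry on $B_{j-1}C_{j-1}$, and by invariance of mutual information under local isometries together with Eq.~(\ref{eq:cqmicond}) this preserves $\rI(R\!:\!B_{j-1}C_{j-1}\,|\,Y)=\rI(R\!:\!B_jC_j\,|\,Y)$; afterwards, handing $C_j$ to Alice is simply a partial trace on Bob's side, which can only decrease the conditional mutual information.

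Telescoping (a)--(c) gives $\iota_i\le\sum_{j\le i,\ j\ \mathrm{odd}}\rI(R\!:\!C_j\,|\,YB_j)$ for odd $i$; combined with the observation of the second paragraph, $\iota_i=\rI(R\!:\!YB_iC_i\,|\,Y)\ge \rI(X\!:\!B_iC_i\,|\,Y)$, the lemma follows. I expect the only delicate point to be the register bookkeeping in steps (b)--(c): keeping straight which party applies $U_j$, using safeness to ensure that Bob's local operations (being controlled on $Y$) do not disturb the mutual information conditioned on $Y$, and recognizing that a message sent to Alice is just traced out from Bob's side. Once the registers are correctly identified, each step is a one-line application of the chain rule, data processing, or isometric invariance already recorded in Section~\ref{app:infomeas}.
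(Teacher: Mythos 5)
Your proof is correct and is essentially the paper's argument worked out in full: the paper's one-line proof points to the Information Flow Lemma applied with $R_a^\rB$ a coherent copy of $X$ and $R_b^\rB$ a coherent copy of $Y$, and your telescoping of $\iota_j := \rI(R\!:\!\beta_j\,|\,Y)$ is precisely the induction that establishes that lemma, specialized to the rounds up to $i$. The only cosmetic variation is that you condition on the Bob-held classical register $Y$ rather than on its reference-side copy $R_Y$; this makes the even-round isometric-invariance step lean explicitly on the safeness of the protocol (since $Y$ both sits in the conditioning and controls Bob's isometry), a hypothesis the paper's decomposition sidesteps at that step, but one that is needed anyway for $\widetilde{\QIC}^i_{\rA\rightarrow\rB}$ to be well-defined.
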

This may be derived from the Information Flow Lemma
(Lemma~\ref{thm-inf-flow}) 
by initializing the purification register~$R$ so 
that $R_a^\rB$ is a coherent copy of $X$ and $R_b^\rB$ is a coherent copy 
of $Y$, and~$R_c^\rB$ is a coherent copy of both~$X,Y$.

\subsection{Lower bound on QIC}
\label{sec:lbqic}

We are now ready to prove a slightly weaker variant of our main lower
bound (Theorem~\ref{th:formainlbaugind}) on the quantum information cost of Augmented Index.
\begin{theorem}
Given any even $n$, the following 
holds for any $M$-message safe quantum protocol $\Pi$ computing the
Augmented Index function $f_n$ with error at most $\epsilon$
on any input:
\begin{align*}
\frac{1}{4} (1 - 2 \epsilon) \quad \leq \quad &  \left( \frac{2 (M+1)^2}{n} \cdot \QIC_{\rA \rightarrow
\rB} (\Pi, \mu_0) \right)^{1/2} +  \left(  \frac{M^3}{4} \cdot \QIC_{\rB
\rightarrow \rA} (\Pi, \mu_0) \right)^{1/2} \enspace.
\end{align*}
\end{theorem}
The stronger version is proven similarly in Section~\ref{sec:altlbqic} using a strengthening of Lemma~\ref{lem:qicvsalice}.
Our argument follows ideas similar to those in Ref.~\cite{JainN14}.
Using the notation from the two previous sections, we start by fixing
values of $j, \ell, z$ in their respective domains,
 and defining $\hat{A}_i \eqdef R_W^1 R_W^2 W A_i$. Define, for odd $i$, 
\begin{align*}
h_i (j, \ell, z) & \quad \eqdef \quad \fh \!\left(\rho_{i, \ell z j z^\prime}^{  B_i C_i}~,~\rho_{i, \ell z^{(\ell)} j z^\prime}^{  B_i C_i}\right) \\
		& \quad = \quad \fh \!\left(V_{i, z \rightarrow z^{(\ell)}}^{\hat{A}_i}
\ket{\rho_{i}^{ \ell z j z^\prime}}^{ \hat{A}_i B_i C_i} \enspace,
\ket{\rho_{i}^{ \ell z^{(\ell)} j z^\prime}}^{ \hat{A}_i B_i C_i}\right)
\enspace,
\end{align*}
 and  for even $i$,
\begin{align*}
h_i (j, \ell, z) & \quad \eqdef \quad \fh \!\left(\rho_{i, \ell z j z^\prime}^{  \hat{A}_i C_i}, \rho_{i, \ell z \ell z}^{  \hat{A}_i C_i}\right) \\
		& \quad = \quad \fh \!\left(V_{i, (j, z^\prime) \rightarrow (\ell, z)}^{B_i}
\ket{\rho_{i}^{ \ell z j z^\prime}}^{  \hat{A}_i B_i C_i} \enspace,
\ket{\rho_{i}^{ \ell z \ell z}}^{  \hat{A}_i B_i C_i}\right) \enspace,
\end{align*}
where the unitary operations~$V_{i, z \rightarrow
z^{(\ell)}}^{\hat{A}_i}$ and~$V_{i, (j, z^\prime) \rightarrow (\ell,
z)}^{B_i}$ are given by the Local Transition Lemma.
We define the following states, analogous to the states 
consistent with $\mu_0$ in Eqs.~(\ref{eq:rhoilzk})
and~(\ref{eq:purerhoilzk}):
\begin{align*}
\rho_{i,  \ell z^{(\ell)}  \ell z} \quad \eqdef \quad& \Tr{B_i}{\kb{\rho_i^{
\ell z^{(\ell)} \ell z}}{\rho_i^{\ell z^{(\ell)} \ell z}}} \enspace, \\
\ket{\rho_i^{ \ell z^{(\ell)} \ell z}} \quad \eqdef \quad& \frac{1}{ \sqrt{2^{n -
\ell}}} \sum_w \ket{www}^{R_W^1 R_W^2 W} \ket{\rho_i^{z^{(\ell)} w,
(\ell, z)}}^{A_i B_i C_i} \enspace.
\end{align*}

Given protocol $\Pi$, we define a protocol $\Pi (j, \ell, z)$ that 
behaves as $\Pi$ but starts with
shared entanglement $\ket{\psi}^{T_\rA T_\rB} \otimes \ket{\psi_W}$, 
with $\ket{\psi_W} \eqdef \sqrt{1/2^{n- \ell}} \sum_w \ket{www}^{R_W^1 R_W^2
W}$ given to Alice. 
We consider runs of $\Pi (j, \ell, z)$ on four pairs of inputs: 
Alice gets inputs $u = z$ or $u^\prime = z^{(\ell)}$, and Bob gets
inputs $ y =(j, z^\prime)$ or $y^\prime = (\ell, z)$.
On these inputs~$u,u'$ of length $\ell$ for Alice, $\Pi (j, \ell, z)$ uses the content $w$ of register $W$ to obtain 
an input of length $n$ for Alice in order to run $\Pi$.
Note that regardless of $w$, the only input pair for $\Pi (j, \ell, z)$ for which Augmented Index evaluates to $1$ is $(u^\prime, y^\prime) = (z^{(\ell)}, (\ell, z))$.

If $M$ is even, denote by $\rho_{M, \ell z^{(\ell)} w k x[1, k]}^{    A_M C_M}$ the reduced state of $\rho_{M, \ell z^{(\ell)} k x[1, k]}^{    W A_M C_M}$ for a particular content $w$ of register~$W$. 
The function~$f_n$ has different values on inputs $(z^{(\ell)}w, (j,
z^\prime))$ and $(z^{(\ell)}w, (\ell, z))$.
Since the protocol $\Pi$ has error at most $\epsilon$ on any input, 
we can distinguish between these two values with probability at least $1
- \epsilon$ for any~$w$, by applying $U_{M+1, z^{(\ell)} w}$ to the
  corresponding states 
$\rho_{M, \ell z^{(\ell)} w j z^\prime}^{    A_M C_M}$ and $\rho_{M,
\ell z^{(\ell)} w \ell z}^{   A_M C_M}$.
By the relationship of trace distance with distinguishability of
states, and its monotonicity under quantum operations, we get that
\begin{align*}
\left\|\rho_{M, \ell z^{(\ell)}  j z^\prime}^{  W  A_M C_M} - \rho_{M,
\ell z^{(\ell)}  \ell z}^{  W A_M C_M} \right\|_1 
	& \quad = \quad  \frac{1}{2^{n - \ell}} \sum_w  \left\| \rho_{M,
\ell z^{(\ell)} w j z^\prime}^{    A_M C_M} - \rho_{M, \ell z^{(\ell)} w
\ell z}^{   A_M C_M} \right\|_1 \\
	& \quad \geq \quad 2 - 4 \epsilon \enspace.
\end{align*}
Here we also used the joint linearity of the trace distance (cf. Eq.~\eqref{eq-tr-joint-linearity}) to expand over
the values~$W$ takes.
We now link the quantities~$h_i (j, \ell, z)$ to the above inequality, as explained below:
\begin{align*}
\MoveEqLeft \frac{1}{\sqrt{2}}(1 - 2 \epsilon) \\
    & \leq \quad \frac{1}{2 \sqrt{2}} \left\|\rho_{M, \ell
z^{(\ell)}  j z^\prime}^{  W  A_M C_M} - \rho_{M, \ell z^{(\ell)}  \ell
z}^{  W A_M C_M} \right\|_1  \\
    & \leq \quad \fh  \!\left(\rho_{M, \ell z^{(\ell)}  j z^\prime}^{  W  A_M C_M}~,~\rho_{M, \ell z^{(\ell)}  \ell z}^{  W A_M C_M}\right) \\
    & \leq \quad \fh  \!\left(V_{M, (j, z^\prime) \rightarrow (\ell, z)}^{B_M} \ket{\rho_{M, \ell z^{(\ell)}  j z^\prime}}^{ \hat{A}_M B_M C_M}, \ket{\rho_{M, \ell z^{(\ell)}  \ell z}}^{ \hat{A}_M B_M C_M}\right) \\
    & \leq \quad \fh  \!\left(V_{M, (j, z^\prime) \rightarrow (\ell, z)}^{B_M} \ket{\rho_{M, \ell z^{(\ell)}  j z^\prime}}^{ \hat{A}_M B_M C_M}, V_{M, (j, z^\prime) \rightarrow (\ell, z)}^{B_M} V_{M - 1, z \rightarrow z^{(\ell)}}^{\hat{A}_M} \ket{\rho_{M, \ell z j z^\prime}}^{ \hat{A}_M B_M C_M}\right) \\
    & \qquad \mbox{} + \fh  \!\left(V_{M, (j, z^\prime) \rightarrow (\ell, z)}^{B_M} V_{M - 1, z \rightarrow z^{(\ell)}}^{\hat{A}_M} \ket{\rho_{M, \ell z j z^\prime}}^{ \hat{A}_M B_M C_M}, \ket{\rho_{M, \ell z^{(\ell)}  \ell z}}^{ \hat{A}_M B_M C_M}\right) \\
    & \leq \quad h_{M-1} (j, \ell, z) + h_{M} (j, \ell, z) + h_{M-1} (j, \ell, z) + 2 \sum_{i=1}^{M-2} h_{i} (j, \ell, z) \\
    & \leq \quad 2 \sum_{i=1}^M h_{i} (j, \ell, z) \enspace.
\end{align*}
The second inequality follows from Eq.~(\ref{eq:fvdg}),
and the third from monotonicity of $\fh$ under partial trace.
The fourth inequality follows by the triangle inequality, and the fifth by the Quantum Cut-and-Paste Lemma using $\hat{A}_i = R_W^1 R_W^2 W A_i$ as Alice's local register in round $i$.

In order to relate this to quantum information cost, we use Lemma~\ref{lem:qicvsalice} together with the concavity of the square root function and Jensen's inequality to obtain, for any even $i$,
\begin{align}
\label{eq:qicbavshi}
\sqrt{2 M \cdot \QIC_{\rB \rightarrow \rA} (\Pi, \mu_0)} \quad \geq
\quad \mathbb{E}_{j \ell z\sim JLZ} [h_i (j, \ell, z)] \enspace.
\end{align}
Similarly, using Lemma~\ref{lem:qicvsbob} for any odd $i$,
\begin{align*}
\sqrt{\frac{16 }{n} \cdot \widetilde{\QIC}_{\rA \rightarrow \rB}^i (\Pi, \mu_0)} \quad
\geq \quad \mathbb{E}_{j \ell z\sim JLZ} [h_i (j, \ell, z)] \enspace.
\end{align*}
Combining the above and Lemma~\ref{lem:qictilvsqicatob}, we get
\begin{align}
\label{eq:lbqicfirst}
\frac{1}{4} (1 - 2 \epsilon) \quad \leq \quad & \sum_{i \textrm{ odd}} \left( \frac{8 }{n}
\cdot \widetilde{\QIC}_{\rA \rightarrow \rB}^i (\Pi, \mu_0)
\right)^{1/2} + \sum_{i \textrm{ even}}
\left(  M \cdot \QIC_{\rB \rightarrow \rA} (\Pi, \mu_0) \right)^{1/2} \\
		\quad \leq \quad & \left( \frac{4 (M+1) }{n} 
\cdot \sum_{i \textrm{ odd}} \widetilde{\QIC}_{\rA \rightarrow \rB}^i (\Pi, \mu_0) \right)^{1/2} +
\left( \frac{ M^3}{4} \cdot \QIC_{\rB \rightarrow \rA} (\Pi, \mu_0) \right)^{1/2} \\
\label{eq:lbqiclast}
		\quad \leq \quad &  \left( \frac{2 (M+1)^2}{n} \cdot \QIC_{\rA \rightarrow
\rB} (\Pi, \mu_0) \right)^{1/2} +  \left(  \frac{M^3}{4} \cdot \QIC_{\rB
\rightarrow \rA} (\Pi, \mu_0) \right)^{1/2} \enspace,
\end{align}
which completes the proof in the case that $M$ is even.

The proof for odd $M$ is similar, and follows by comparing states
$\rho_{M, \ell z  \ell z}^{   B_M C_M}$ and $\rho_{M, \ell z^{(\ell)}
\ell z}^{   B_M C_M}$. The different outputs can then be generated by applying $U_{M+1, (\ell, z)}$ to these states.

\section{A Stronger QIC Trade-off for Augmented Index}
\label{sec:altlbqic}

In this section, we consider a different notion of quantum information cost, more specialized to the Augmented Index function, 
for which we obtain better dependence 
on $M$ for the information lower bound, from $M^3$ to $M$.
We also show that this notion is at least $1/M$ times $\QIC_{\rB
\rightarrow \rA}$, and thus we get an overall improvement by a 
factor of $M$ for the $M$-pass streaming lower bound.
The following is a precise statement of Theorem~\ref{th:infmainlbaugind}.
\begin{theorem}
\label{th:formainlbaugind}
Given any even $n$, the following 
holds for any $M$-message quantum protocol $\Pi$ computing the Augmented Index function $f_n$ with error $\epsilon$ on any input:
\begin{align*}
\frac{1}{4} (1 - 2 \epsilon) \quad \leq \quad &  \left( \frac{2 (M+1)^2}{n} \cdot \QIC_{\rA \rightarrow
\rB} (\Pi, \mu_0) \right)^{1/2} +  \left(  \frac{M^2}{2} \cdot \QIC_{\rB
\rightarrow \rA} (\Pi, \mu_0) \right)^{1/2} \enspace.
\end{align*}
\end{theorem}

Our lower bound on quantum streaming algorithms for $\dyck(2)$,
Theorem~\ref{th:infmainlbstream}, follows by combining this with
Lemmas~\ref{lem:reddycktoasc} and~\ref{lem:redaugind}, and taking $m=n$ so that $N \in \Theta (n^2)$.

We consider the same purification of the input registers as in Section~\ref{sec:genalice}, and the following alternative notion 
of quantum information cost.
\begin{definition}
Given a safe quantum protocol $\Pi$ for Augmented Index, the
\emph{superposed-Holevo information cost}  (of the messages)
from Bob to Alice in round $i$ is defined as
\begin{align*}
\widetilde{\QIC}_{\rB \rightarrow \rA}^i (\Pi, \mu_0) \quad \eqdef \quad &  \rI (R_K R_J^1
R_S^1 \!:\! R_W^1 R_W^2 W A_i C_i \,|\,  R_L^1 Z)_{\rho_i} \enspace,
\end{align*}
with $\rho_i$ as defined in Eq.~(\ref{eq:purif_alice_rdi}), and the
cumulative \emph{superposed-Holevo information cost} (of the messages) from Bob to Alice is defined as
\begin{align*}
\widetilde{\QIC}_{\rB \rightarrow \rA} (\Pi, \mu_0) \quad \eqdef \quad &
\sum_{i~\mathrm{even}} \widetilde{\QIC}_{\rB \rightarrow \rA}^i (\Pi,
\mu_0) \enspace.
\end{align*}

\end{definition}

We first show the following.
\begin{lemma}
\label{lem:impqicvsalice}
Given any even $n \ge 2$, let $J$ and $L$ be random variables uniformly
distributed in $[n/2]$ and $[n] \setminus [n/2]$, respectively. Conditional on some 
value $\ell$ for $L$, let $Z$ be a random variable chosen uniformly at
random from $\{0, 1 \}^\ell$. The following 
then holds for any $M$-message safe quantum protocol $\Pi$ for the
Augmented Index function~$f_n \,$, for even $i \leq M$:
\begin{align*}
\widetilde{\QIC}_{\rB \rightarrow \rA}^i (\Pi, \mu_0) \quad \geq \quad
\frac{1}{4 } \;
\mathbb{E}_{j \ell z\sim JLZ} \left[ \fh^2 \!\left(\rho_{i, \ell z j
z^\prime}^{  R_W^1 R_W^2 W A_i C_i}, \rho_{i, \ell z \ell z}^{  R_W^1
R_W^2 W A_i C_i} \right)\right] \enspace,
\end{align*}
with $\rho_{i, \ell z k x[1, k]}$ defined by Eqs.~(\ref{eq:rhoilzk})
and~(\ref{eq:purerhoilzk}).
\end{lemma}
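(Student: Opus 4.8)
The plan is to follow the skeleton of the proof of Lemma~\ref{lem:qicvsalice}, but to exploit that $\widetilde{\QIC}_{\rB \rightarrow \rA}^i$ is a single conditional mutual information rather than a sum over rounds. Consequently a direct appeal to the Average Encoding Theorem (Lemma~\ref{lem:avenc}) replaces the superposition-average encoding theorem, and no factor of~$M$ is lost; this is exactly what produces the improvement advertised in Section~\ref{sec:altlbqic}.

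First I would unpack the structure of $\rho_i$ from Eq.~(\ref{eq:purif_alice_rdi}). The registers $R_S^1, R_J^1, R_L^1$ each have a coherent copy ($R_S^2, R_J^2, R_L^2$) that is traced out in the quantity under consideration, the register $R_K$ is a deterministic function of $R_S^1 R_J^1 R_L^1$, and the register $Z$ is decohered since its copies $R_Z^1 R_Z^2$ are traced out; hence $R_K, R_J^1, R_S^1, R_L^1, Z$ are simultaneously classical in the relevant reduced state. By Eq.~(\ref{eq:cqmicond}), conditioning on $R_L^1 Z$ therefore amounts to averaging over $\ell \sim L$ and, given $\ell$, over $z \sim Z$, and for each fixed $(\ell, z)$ the quantity $\rI(R_K R_J^1 R_S^1 \!:\! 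M)$ with $M := R_W^1 R_W^2 W A_i C_i$ is a Holevo information whose classical register takes value $(k,j,s)$ with $k$ determined by $(j,\ell,s)$: the distribution is $s$ uniform in $\set{0,1}$ and $j$ uniform in $[n/2]$, independently, while the accompanying state on $M$ is $\rho_{i,\ell z j z'}^M$ when $s=0$ and $\rho_{i,\ell z \ell z}^M$ when $s=1$. The key point, which I would verify directly from Eqs.~(\ref{eq:rhoilzk})--(\ref{eq:purerhoilzk}), is that the $s=1$ state is the \emph{same} for every $j$, since $\ket{\rho_i^{zw,(\ell,z)}}$ does not depend on~$j$; so half of the probability mass in the Average Encoding Theorem sits on the single state $\rho_{i,\ell z \ell z}^M$.

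Next, for each fixed $(\ell, z)$ I would apply Lemma~\ref{lem:avenc} with the classical register $R_K R_J^1 R_S^1$ and the quantum register~$M$. Writing $\bar\rho_i^M := \tfrac12\,\mathbb{E}_j[\rho_{i,\ell z j z'}^M] + \tfrac12\,\rho_{i,\ell z \ell z}^M$ for the averaged state, this gives
\begin{align*}
\tfrac12\,\mathbb{E}_j\!\left[\fh^2\!\left(\rho_{i,\ell z j z'}^M, \bar\rho_i^M\right)\right]
+ \tfrac12\,\fh^2\!\left(\rho_{i,\ell z \ell z}^M, \bar\rho_i^M\right)
\quad \leq \quad \rI\!\left(R_K R_J^1 R_S^1 \!:\! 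M\right),
\end{align*}
the mutual information being evaluated on $\rho_i$ conditioned on $R_L^1 = \ell$, $Z = z$. I would then invoke the weak triangle inequality for $\fh^2$ with $\bar\rho_i^M$ as the midpoint, average it over $j$, and combine with the displayed bound to get $\mathbb{E}_j[\fh^2(\rho_{i,\ell z j z'}^M, \rho_{i,\ell z \ell z}^M)] \le 4\,\rI(R_K R_J^1 R_S^1 \!:\! M)$ for each $(\ell,z)$. Finally, taking the expectation over $\ell \sim L$, $z \sim Z$ and recognising the right-hand side as $4\,\widetilde{\QIC}_{\rB \rightarrow \rA}^i(\Pi, \mu_0)$ (again by Eq.~(\ref{eq:cqmicond}) and the definition) yields the claim.

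The main obstacle is bookkeeping rather than anything deep: one must (i) confirm that $R_K, R_J^1, R_S^1, R_L^1, Z$ are all decohered in the reduced state used to evaluate $\widetilde{\QIC}_{\rB \rightarrow \rA}^i$, so that the conditional mutual information genuinely decomposes as an average of Holevo informations to which Lemma~\ref{lem:avenc} applies, and (ii) correctly account for the $s=1$ branch, where $R_J^1$ still carries $j$ but the state on $M$ collapses to $\rho_{i,\ell z \ell z}^M$. Tracking the constants — the $\tfrac12$ from the uniform split over $s$ and the factor $2$ from the weak triangle inequality — is precisely what delivers the final constant $\tfrac14$.
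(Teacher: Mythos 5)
Your argument is correct and matches the paper's proof in its essentials: condition on the classical registers $R_L^1 Z$ via Eq.~(\ref{eq:cqmicond}), apply the Average Encoding Theorem (Lemma~\ref{lem:avenc}) to the classical register $R_K R_J^1 R_S^1$ for each fixed $(\ell,z)$, and then invoke the weak triangle inequality for $\fh^2$ through the averaged state $\rho_{i,\ell z}$. Your bookkeeping of which registers are decohered (and why $R_K$ contributes nothing beyond $R_J^1 R_S^1 R_L^1$), along with the observation that the $s=1$ branch collapses to the single $j$-independent state $\rho_{i,\ell z\,\ell z}$, is exactly what the paper's compressed one-line justification is implicitly relying on, and the constants $\tfrac12 \cdot 2 = 1$, yielding $\tfrac14$, come out the same.
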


\begin{proof}
The Average Encoding Theorem along with monotonicity of conditional
mutual information gives us the desired bound, with $\rho_{i, \ell z }$ the state $\rho_{i, \ell z k x[1, k]} $ 
in registers $R_W^1 R_W^2 W A_i C_i$ averaged over registers $R_K R_J^1 R_S^1$:
\begin{align*}
\MoveEqLeft
\rI (R_K R_J^1 R_S^1 \!:\! R_W^1 R_W^2 W A_i C_i \,|\,  R_L^1 Z) \\
    \geq \quad & \frac{1}{2} \; \mathbb{E}_{j \ell z \sim JLZ}
         \left[ \fh^2 \!\left(
         \rho_{i,  \ell z j z^\prime }^{R_W^1 R_W^2 W A_i C_i},
         \rho_{i, \ell z }^{R_W^1 R_W^2 W A_i C_i}
         \right) \right] \\
    &    \mbox{ } + \frac{1}{2} \; \mathbb{E}_{ j \ell z \sim JLZ}
         \left[ \fh^2 \!\left(
         \rho_{i,  \ell z  \ell z }^{R_W^1 R_W^2 W A_i C_i},
         \rho_{i, \ell z }^{R_W^1 R_W^2 W A_i C_i}
         \right) \right] \\
    \geq \quad & \frac{1}{4} \; \mathbb{E}_{j \ell z \sim JLZ }
         \left[ \fh^2 \!\left(
         \rho_{i,  \ell z j z^\prime }^{R_W^1 R_W^2 W A_i C_i},
         \rho_{i,  \ell z  \ell z}^{R_W^1 R_W^2 W A_i C_i}
         \right) \right] \enspace.
\end{align*}
\end{proof}

We now show that this notion of information cost  is a lower bound on
$\QIC_{\rB \rightarrow \rA} (\Pi, \mu_0)$.
\begin{lemma}
Given any $M$-message safe quantum protocol $\Pi$ for Augmented Index and any even $i \leq M$, the following holds:
\begin{align*}
\widetilde{\QIC}_{\rB \rightarrow \rA}^i (\Pi, \mu_0) \quad \leq \quad \QIC_{\rB
\rightarrow \rA} (\Pi, \mu_0) \enspace.
\end{align*}
\end{lemma}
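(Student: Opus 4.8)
The plan is to avoid the Information Flow Lemma altogether and instead bound $\widetilde{\QIC}_{\rB \rightarrow \rA}^i (\Pi, \mu_0)$ by a telescoping sum of per‑message terms $\rI(R \!:\! C_{i'} \,|\, A_{i'})$ over even $i' \le i$, using only the chain rule, non‑negativity of conditional mutual information, the data processing inequality, and the product structure of the purification in Eq.~\eqref{eq:purif_alice}.

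First I would introduce the shorthand $P := R_K R_J^1 R_S^1$ and, for each even $i'$, the quantity $G_{i'} := \rI(P \!:\! A_{i'} C_{i'} \,|\, R_L^1 R_W^1 R_W^2 X)$, where the conditioning is on $R_L^1$, on Alice's whole input $X = ZW$, and on its coherent copies $R_W^1 R_W^2$. The first claim is that $\widetilde{\QIC}_{\rB \rightarrow \rA}^i (\Pi, \mu_0) = G_i$. To see this, apply the chain rule to the definition, splitting $R_W^1 R_W^2 W A_i C_i$ into $R_W^1 R_W^2 W$ and $A_i C_i$. Since $\Pi$ is safe, the reduced state of $\rho_i$ on the reference registers together with the classical inputs coincides with that of $\rho_0$, and there — conditioned on the values $(\ell,z)$ of $R_L^1 Z$ — the three copies $R_W^1, R_W^2, W$ of the uniformly random string $w \in \{0,1\}^{n-\ell}$ are independent of $P = R_K R_J^1 R_S^1$; hence $\rI(P \!:\! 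R_W^1 R_W^2 W \,|\, R_L^1 Z) = 0$, and since $ZW = X$ the remaining term is exactly $G_i$.

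Next I would set up a one‑step recursion for $G_i$. By the chain rule, $G_i = \rI(P \!:\! A_i \,|\, R_L^1 R_W^1 R_W^2 X) + \rI(P \!:\! C_i \,|\, R_L^1 R_W^1 R_W^2 X A_i)$. For even $i \ge 2$ the register $A_i = A_{i-1}$ is produced from $A_{i-2} C_{i-2}$ by an $X$‑controlled channel — Alice's isometry $U_{i-1}$ followed by discarding the outgoing message $C_{i-1}$ — which acts as the identity on the reference $R$; decomposing the conditioning on the classical register $X$ as an average via Eq.~\eqref{eq:cqmicond} and applying data processing for each fixed value of $X$ gives $\rI(P \!:\! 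A_i \,|\, R_L^1 R_W^1 R_W^2 X) \le \rI(P \!:\! A_{i-2} C_{i-2} \,|\, R_L^1 R_W^1 R_W^2 X) = G_{i-2}$. Together with the base case $G_0 = \rI(P \!:\! A_0 \,|\, R_L^1 R_W^1 R_W^2 X) = 0$, which holds since $A_0$ is independent of $P$ given the conditioning (it carries only input‑independent pre‑shared entanglement, and possibly a copy of $X$ already conditioned on), unrolling yields
\begin{align*}
\widetilde{\QIC}_{\rB \rightarrow \rA}^i (\Pi, \mu_0) \quad = \quad G_i \quad \le \quad \sum_{\substack{2 \le i' \le i \\ i'~\mathrm{even}}} \rI\!\left(P \!:\! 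C_{i'} \,\middle|\, R_L^1 R_W^1 R_W^2 X A_{i'}\right) \enspace.
\end{align*}

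Finally, each summand is at most $\rI(R \!:\! C_{i'} \,|\, A_{i'})$ (in the convention of Section~\ref{sec-qic}, where Alice's register $A_{i'}$ includes $X$): move $R_L^1 R_W^1 R_W^2$ out of the conditioning into the first argument using the chain rule and non‑negativity, then enlarge $P R_L^1 R_W^1 R_W^2 \subseteq R$ to all of $R$ by data processing. Summing over every even $i'$ gives $\widetilde{\QIC}_{\rB \rightarrow \rA}^i (\Pi, \mu_0) \le \QIC_{\rB \rightarrow \rA} (\Pi, \mu_0)$, as claimed. The one step that needs care is the register bookkeeping — keeping straight which registers are part of the untouched reference $R$ (so that data processing applies with an $R$‑trivial map), which are the classical controls $X,Y$, and which are the protocol workspace $A_{i'} B_{i'} C_{i'}$ — and, relatedly, verifying that $A_i$ genuinely is a channel image of $A_{i-2} C_{i-2}$ and that the copies $R_W^1, R_W^2, W$ of $w$ are redundant and independent of Bob's index data under $\mu_0$; everything else is routine.
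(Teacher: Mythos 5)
Your proof is correct, and it takes a genuinely different route from the paper's. The paper rewrites $\widetilde{\QIC}_{\rB \rightarrow \rA}^i (\Pi, \mu_0)$ via the chain rule and then invokes the Information Flow Lemma (Lemma~\ref{thm-inf-flow}), which is itself proved by an induction yielding an exact two-sided telescoping identity (with both positive and negative per-round terms); the unwanted negative terms are then dropped by non-negativity. You instead inline a one-sided version of that telescoping directly: you show $\widetilde{\QIC}_{\rB \rightarrow \rA}^i (\Pi, \mu_0) = G_i := \rI(P \!:\! A_i C_i \,|\, R_L^1 R_W^1 R_W^2 X)$ using the fact that $\rI(P \!:\! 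R_W^1 R_W^2 W \,|\, R_L^1 Z) = 0$ in $\rho_i$ (which holds since for a safe protocol the reduced state of $\rho_i$ on $R X$ equals that of $\rho_0$), then set up the recursion $G_i \leq G_{i-2} + \rI(P \!:\! C_i \,|\, R_L^1 R_W^1 R_W^2 X A_i)$ via the chain rule, Eq.~\eqref{eq:cqmicond}, and data processing under the $X$-controlled channel $\Tra{C_{i-1}} \circ U_{i-1}$ that sends $A_{i-2}C_{i-2}$ to $A_i$, with base case $G_0 = 0$ because $A_0 = T^\rA$ is in tensor product with $R X Y$. Unrolling and then enlarging $P R_L^1 R_W^1 R_W^2$ to $R$ by the chain rule plus data processing gives the claim. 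This avoids the Information Flow Lemma entirely in favor of a self-contained inequality-only induction; the trade-off is that you obtain only an upper bound rather than the exact flow identity, but that is all the lemma requires. Your register bookkeeping is sound throughout — in particular the key facts that $X$ stays classical and unchanged in a safe protocol, that $\rho_i^{RXA_i}$ is a channel image of $\rho_{i-2}^{RXA_{i-2}C_{i-2}}$ under an $R$-trivial map, and that all of $P, R_L^1, R_W^1, R_W^2$ lie inside $R$.
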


\begin{proof}
The lemma is implied by the following chain of inequalities, which are
explained below.
\begin{align*}
\MoveEqLeft
\rI (R_K R_J^1 R_S^1 \!:\! R_W^1 R_W^2 W A_i C_i \,|\,  R_L^1 Z)_{\rho_i} \\
    & = \quad \rI(R_K R_J^1 R_S^1 \!:\! R_W^1 R_W^2 \,|\,  R_L^1 Z)_{\rho_i} 
        + \rI (R_K R_J^1 R_S^1 \!:\! W A_i C_i \,|\,  R_L^1 Z R_W^1 R_W^2)_{\rho_i} \\
    & \leq \quad \rI (R_K R_J^1 R_S^1\!:\! ZWA_i C_i \,|\,  R_L^1 R_W^1 R_W^2)_{\rho_i} \\
    & = \quad \sum_{p \leq i,\ p~\mathrm{even}} \rI (R_K R_J^1 R_S^1
\!:\!
              C_p \,|\,  ZW A_p R_L^1 R_W^1 R_W^2)_{\rho_p} \\
    & \qquad \mbox{ }  - \sum_{p \leq i,\ p~\mathrm{odd}} \rI (R_K R_J^1
R_S^1 \!:\! C_p \,|\,  ZW A_p R_L^1 R_W^1 R_W^2)_{\rho_p} 
   + \rI(R_K R_J^1 R_S^1 \!:\! ZW \,|\,  R_L^1 R_W^1 R_W^2)_{\rho_0} \\
    & \leq \quad  \sum_{p \leq i,\ p~\mathrm{even}} \rI (R_K R_J^1 R_S^1
R_L^1 R_W^1 R_W^2 \!:\! C_p \,|\,  ZW A_p)_{\rho_p}
        + \rI(R_K R_J^1 R_S^1 \!:\! ZW R_W^1 R_W^2 \,|\,  R_L^1 )_{\rho_0} \\
    & = \quad \sum_{p \leq i,\ p~\mathrm{even}} \rI (R_K R_J^1 R_S^1
              R_L^1 R_W^1 R_W^2 \!:\! C_p \,|\,  ZW A_p)_{\rho_p} \\
    & \qquad \mbox{ }  + \rI(R_K R_J^1 R_S^1 \!:\! Z \,|\,  R_L^1 )_{\rho_0}
        + \rI(R_K R_J^1 R_S^1 \!:\! W R_W^1 R_W^2 \,|\,  Z R_L^1 )_{\rho_0} \\
    & = \quad \sum_{p \leq i,\ p~\mathrm{even}} \rI (R_K R_J^1 R_S^1
R_L^1 R_W^1 R_W^2 \!:\! C_p \,|\,  ZW A_p)_{\rho_p} \enspace.
\end{align*}
The first equality holds by the chain rule. The first inequality holds
because the first term evaluates to zero, and because the second term is
dominated by the subsequent expression (as may be seen by applying
the chain rule).
The second equality is from the information flow lemma, the second 
inequality follows from the chain rule and the non-negativity of the conditional mutual information, and the third 
equality is by the chain rule. The last equality follows because
$R_K R_J^1 R_S^1$ is independent of~$Z$ and because the registers $R_W^1
R_W^1 W$ of $\rho_0$ are in a pure state. 

The last term is seen to be upper bounded by $\QIC_{\rB \rightarrow \rA} (\Pi, \mu_0)$ by applying the data processing 
inequality to the $R$ register.
\end{proof}

The improved lower bound on $\QIC$ follows along the same lines as in
Section~\ref{sec:lbqic}, but we use Lemma~\ref{lem:impqicvsalice}
instead of Lemma~\ref{lem:qicvsalice} for even $i$'s in
Eq.~(\ref{eq:qicbavshi}). Then Eqs.~(\ref{eq:lbqicfirst}) to~(\ref{eq:lbqiclast}) become
\begin{align*}
\frac{1}{4} (1 - 2 \epsilon) \quad \leq \quad & \sum_{i~\textrm{odd}} \left( \frac{8 }{n}
\cdot \widetilde{\QIC}_{\rA \rightarrow \rB}^i (\Pi, \mu_0)
\right)^{1/2} + \sum_{i~\textrm{even}}
\left( 2
\cdot \widetilde{\QIC}_{\rB \rightarrow \rA}^i (\Pi, \mu_0) \right)^{1/2} \\
		\quad \leq \quad & \left( \frac{4 (M+1) }{n} 
\cdot \sum_{i~\textrm{odd}} \widetilde{\QIC}_{\rA \rightarrow \rB}^i (\Pi, \mu_0) \right)^{1/2} +
\left(  M \cdot \sum_{i~\textrm{even}} \widetilde{\QIC}_{\rB \rightarrow \rA}^i (\Pi, \mu_0) \right)^{1/2} \\
		\quad \leq \quad &  \left( \frac{2 (M+1)^2}{n} \cdot \QIC_{\rA \rightarrow
\rB} (\Pi, \mu_0) \right)^{1/2} +  \left(  \frac{M^2}{2} \cdot \QIC_{\rB
\rightarrow \rA} (\Pi, \mu_0) \right)^{1/2} \enspace,
\end{align*}
completing the proof of Theorem~\ref{th:formainlbaugind} for even~$M$.
The case of odd~$M$ is similar.

\bibliographystyle{alpha}
\bibliography{dyck-lb}

\newcommand{\etalchar}[1]{$^{#1}$}
\begin{thebibliography}{CCKM13}

\bibitem[AF98]{AF98}
Andris Ambainis and R{\=u}si{\c n}{\v s} Freivalds.
\newblock $1$-way quantum finite automata: Strengths, weaknesses and
  generalizations.
\newblock In {\em Proceedings of the 39th Annual IEEE Symposium on Foundations
  of Computer Science}, pages 332--341, 1998.

\bibitem[ANTV02]{ANTV02}
Andris Ambainis, Ashwin Nayak, Amnon {Ta-Shma}, and Umesh Vazirani.
\newblock Dense quantum coding and quantum finite automata.
\newblock {\em Journal of the {ACM}}, 49(4):1--16, 2002.

\bibitem[BBCR13]{BarakBCR13}
Boaz Barak, Mark Braverman, Xi~Chen, and Anup Rao.
\newblock How to compress interactive communication.
\newblock {\em SIAM Journal on Computing}, 42(3):1327--1363, 2013.

\bibitem[BCG14]{Blume-KohoutCG14}
Robin {Blume-Kohout}, Sarah Croke, and Daniel Gottesman.
\newblock Streaming universal distortion-free entanglement concentration.
\newblock {\em IEEE Transactions on Information Theory}, 60(1):334--350, 2014.

\bibitem[BJKS04]{Bar-YossefJKS04}
Ziv {Bar-Yossef}, T.~S. Jayram, Ravi Kumar, and D.~Sivakumar.
\newblock An information statistics approach to data stream and communication
  complexity.
\newblock {\em Journal of Computer and System Sciences}, 68(4):702--732, 2004.
\newblock Special issue on FOCS 2002.

\bibitem[CCKM13]{ChakrabartiCKM10}
Amit Chakrabarti, Graham Cormode, Ranganath Kondapally, and Andrew McGregor.
\newblock Information cost tradeoffs for {Augmented Index} and streaming
  language recognition.
\newblock {\em SIAM Journal on Computing}, 42(1):61--83, 2013.

\bibitem[CK11]{ChakrabartiK11}
Amit Chakrabarti and Ranganath Kondapally.
\newblock Everywhere-tight information cost tradeoffs for {Augmented Index}.
\newblock In {\em Proceedings of the 14th International Workshop on
  Approximation Algorithms for Combinatorial Optimization Problems, and the
  15th International Workshop on Randomization and Computation},
  APPROX'11/RANDOM'11, pages 448--459, 2011.

\bibitem[CS63]{ChomskyS63}
Noam Chomsky and M.~P. Schotzenberger.
\newblock The algebraic theory of context-free languages.
\newblock In P.~Braffort and D.~Hirschberg, editors, {\em Computer Programming
  and Formal Languages}, pages 118--161, 1963.

\bibitem[FR15]{FawziR15}
Omar Fawzi and Renato Renner.
\newblock Quantum conditional mutual information and approximate {Markov}
  chains.
\newblock {\em Communications in Mathematical Physics}, 340(2):575--611, 2015.

\bibitem[FvdG99]{FuchsG99}
Christopher~A. Fuchs and Jeroen van~de Graaf.
\newblock Cryptographic distinguishability measures for quantum-mechanical
  states.
\newblock {\em IEEE Transactions on Information Theory}, 45(4):1216--1227,
  1999.

\bibitem[GKK{\etalchar{+}}08]{GavinskyKKRW08}
Dmitry Gavinsky, Julia Kempe, Iordanis Kerenidis, Ran Raz, and Ronald de~Wolf.
\newblock Exponential separation for one-way quantum communication complexity,
  with applications to cryptography.
\newblock {\em SIAM Journal on Computing}, 38(5):1695--1708, 2008.

\bibitem[JN14]{JainN14}
Rahul Jain and Ashwin Nayak.
\newblock The space complexity of recognizing well-parenthesized expressions in
  the streaming model: the index function revisited.
\newblock {\em IEEE Transactions on Information Theory}, 60(10):6646--6668,
  2014.

\bibitem[JRS03]{JainRS03b}
Rahul Jain, Jaikumar Radhakrishnan, and Pranab Sen.
\newblock A lower bound for the bounded round quantum communication complexity
  of {Set Disjointness}.
\newblock In {\em Proceedings of the 44th Annual IEEE Symposium on Foundations
  of Computer Science}, pages 220--229, 2003.

\bibitem[KLGR16]{KerenidisLLR16}
Iordanis Kerenidis, Mathieu Lauri\`ere, Fran\c{c}ois~Le Gall, and Mathys
  Rennela.
\newblock Information cost of quantum communication protocols.
\newblock {\em Quantum Information and Computation}, 16(3-4):181--196, 2016.

\bibitem[KNTZ07]{KNTZ07}
Hartmut Klauck, Ashwin Nayak, Amnon {Ta-Shma}, and David Zuckerman.
\newblock Interaction in quantum communication.
\newblock {\em IEEE Transactions on Information Theory}, 53(6):1970--1982,
  2007.

\bibitem[KW97]{KW97}
Attila Kondacs and John Watrous.
\newblock On the power of quantum finite state automata.
\newblock In {\em Proceedings of the 38th Annual IEEE Symposium on Foundations
  of Computer Science}, pages 66--75, 1997.

\bibitem[LG09]{LeGall09}
Fran\c{c}ois Le~Gall.
\newblock Exponential separation of quantum and classical online space
  complexity.
\newblock {\em Theory of Computing Systems}, 45:188--202, 2009.

\bibitem[LR73]{LR73}
Elliott~H. Lieb and Mary~Beth Ruskai.
\newblock Proof of the strong subadditivity of quantum mechanical entropy.
\newblock {\em Journal of Mathematical Physics}, 14(12):1938--1941, 1973.

\bibitem[LT17a]{LT17-information-flow}
Mathieu Lauri{\`e}re and Dave Touchette.
\newblock The flow of information in interactive quantum protocols: the cost of
  forgetting.
\newblock In Christos~H. Papadimitriou, editor, {\em 8th Innovations in
  Theoretical Computer Science Conference (ITCS 2017)}, volume~67 of {\em
  Leibniz International Proceedings in Informatics (LIPIcs)}, pages 47:1--47:1,
  Dagstuhl, Germany, 2017. Schloss Dagstuhl--Leibniz-Zentrum fuer Informatik.

\bibitem[LT17b]{LT17-information-flow-arxiv}
Mathieu Lauri{\`e}re and Dave Touchette.
\newblock The flow of information in interactive quantum protocols: the cost of
  forgetting.
\newblock Technical Report arXiv:1701.02062 [quant-ph], arXiv,
  http://www.arxiv.org/, January 2017.

\bibitem[MC00]{MC00}
Cristopher Moore and James~P. Crutchfield.
\newblock Quantum automata and quantum grammars.
\newblock {\em Theoretical Computer Science}, 237(1-2):275--306, 2000.

\bibitem[MMN14]{MagniezMN14}
Fr{\'e}d{\'e}ric Magniez, Claire Mathieu, and Ashwin Nayak.
\newblock Recognizing well-parenthesized expressions in the streaming model.
\newblock {\em SIAM Journal on Computing}, 43(6):1880--1905, 2014.

\bibitem[Mon16]{Montanaro16}
Ashley Montanaro.
\newblock The quantum complexity of approximating the frequency moments.
\newblock {\em Quantum Information and Computation}, 16(13-14):1169--1190,
  2016.

\bibitem[Mut05]{Muthukrishnan05}
S.~Muthukrishnan.
\newblock {\em Data Streams: Algorithms and Applications}, volume 1, number 2
  of {\em Foundations and Trends in Theoretical Computer Science}.
\newblock Now Publishers Inc., Hanover, MA, USA, 2005.

\bibitem[NT17]{NT17}
Ashwin Nayak and Dave Touchette.
\newblock {Augmented Index and Quantum Streaming Algorithms for DYCK(2)}.
\newblock In Ryan O'Donnell, editor, {\em 32nd Computational Complexity
  Conference (CCC 2017)}, volume~79 of {\em Leibniz International Proceedings
  in Informatics (LIPIcs)}, pages 23:1--23:21, Dagstuhl, Germany, 2017. Schloss
  Dagstuhl--Leibniz-Zentrum fuer Informatik.

\bibitem[Tou14]{Touchette14}
Dave Touchette.
\newblock Quantum information complexity and amortized communication.
\newblock Technical Report arXiv:1404.3733, arXiv.org Preprint, April 14, 2014.

\bibitem[Tou15]{Touchette15}
Dave Touchette.
\newblock Quantum information complexity.
\newblock In {\em Proceedings of the 47th Annual ACM on Symposium on Theory of
  Computing}, pages 317--326. ACM, 2015.

\bibitem[Wat18]{Watrous18}
John Watrous.
\newblock {\em Theory of Quantum Information}.
\newblock Cambridge University Press, May 2018.

\bibitem[Wil13]{Wilde13}
Mark~M. Wilde.
\newblock {\em Quantum Information Theory}.
\newblock Cambridge University Press, Cambridge, UK, 2013.

\end{thebibliography}

\appendix

\section{Relating Bob's states to $\QIC_{\rA \rightarrow \rB}$}
\label{app:jnbobqic}

Lemma~\ref{lem:qicvsbob} can be inferred from the proof of Lemma~4.9 in Ref.~\cite{JainN14}.
For completeness, we provide a proof using our notation.

\begin{lemma}
Given any even $n$, let $J$ and $L$ be random variables uniformly
distributed in $[n/2]$ and $[n] \setminus [n/2]$, respectively. Conditional on some 
value $\ell$ for $L$, let $Z$ be a random variable chosen uniformly at random in $\{0, 1 \}^\ell$. Then the following holds for any $M$-message safe quantum protocol $\Pi$ for the
Augmented Index function~$f_n \,$, for any odd $i \leq M$:
\begin{align*}
\frac{1}{n} \; \widetilde{\QIC}_{\rA \rightarrow \rB}^i (\Pi, \mu_0) \quad
\geq \quad
\frac{1}{16 } \; \mathbb{E}_{j \ell z\sim JLZ} \left[ \fh^2 \!\left(\rho_{i, \ell
z j z^\prime}^{  B_i C_i}~,~\rho_{i, \ell z^{(\ell)} j z^\prime}^{  B_i
C_i} \right)\right] \enspace,
\end{align*}
with $\rho_{i, \ell z j z^\prime}$ defined by Eqs.~(\ref{eq:rhoilzk}) and
(\ref{eq:purerhoilzk}).
\end{lemma}
\begin{proof}
We start with the following chain of inequalities.
\begin{align}
\nonumber
\MoveEqLeft  \widetilde{\QIC}_{\rA \rightarrow \rB}^i (\Pi, \mu_0) \\
\nonumber
    & =  \quad \rI (X \!:\! B_i C_i \,|\,  K X [1, K]) \\
\nonumber
    & \geq \quad \frac{1}{2} \; \rI (X \!:\! B_i C_i \,|\,  J X [1, J]) \\
\nonumber
    & = \quad \frac{1}{2} \cdot \frac{2}{n} \sum_{j \leq n/2} \frac{1}{2^j}
          \sum_{z^\prime} \rI (X[j+1, n]\!:\! B_i C_i \,|\,
          J = j, X[1, j] = z^\prime ) \\
\nonumber
    & \geq \quad \frac{1}{n} \sum_{j \leq n/2} \frac{1}{2^j}
             \sum_{z^\prime} \rI (X[n/2+1, n]\!:\! B_i C_i \,|\, 
             X[j+1, n/2], J = j, X[1, j] = z^\prime ) \\
\nonumber
    & = \quad \frac{1}{n} \sum_{j \leq n/2, \ell > n/2} \frac{1}{2^j}
          \sum_{z^\prime} \rI (X_\ell\!:\! B_i C_i \,|\, 
          X[j+1, \ell-1], J = j, X[1, j] = z^\prime ) \\
\label{eq:jnfirst}
    & = \quad \frac{1}{n} \sum_{j \leq n/2, \ell > n/2}
          \frac{1}{2^{\ell-1}} \sum_{z[1, \ell-1]}
          \rI (X_\ell\!:\! B_i C_i \,|\,   J = j, X[1, \ell-1] = z[1, \ell-1] )
          \enspace.
\end{align}
The first equality holds by definition, the first inequality holds
because we can generate the classical random variable~$K$ by setting it
equal to $J$ with probability one half (and then equal to $L$ also with
probability one half), the second equality follows by expanding over $J$
and $X[1, J]$, the second inequality follows by the chain rule and
non-negativity of mutual information, the third equality holds by the
chain rule, and the last equality follows by expanding over $X[j+1, \ell-1]$.

We also get the following bound by the Average Encoding Theorem
(Lemma~\ref{lem:avenc}) and the
weak triangle inequality for $\fh^2$ (cf.\ Eq.~\eqref{eq-weak-triangle}), with $\rho_{i, \ell z[1, \ell-1] j
z^\prime}$ being the state $\rho_{i, \ell z[1, \ell-1] x_\ell  j z^\prime}$ in register $B_i C_i$ averaged over $X_\ell \,$.
\begin{align*}
\MoveEqLeft  \rI (X_\ell \!:\! B_i C_i \,|\,  J = j, X[1, \ell-1] = z[1,
\ell-1]) \\
    & \geq \quad \frac{1}{2} \; \fh^2 \!\left(\rho_{i,  \ell z j z^\prime }^{B_i C_i}~,
             ~\rho_{i, \ell z[1, \ell-1] j z^\prime }^{B_i C_i} \right) 
             +  \frac{1}{2} \; \fh^2 \!\left(
             \rho_{i, \ell z^{(\ell)} j z^\prime }^{B_i C_i}~,
             ~\rho_{i, \ell z[1, \ell-1] j z^\prime }^{B_i C_i} \right) \\
    & \geq \quad \frac{1}{4} \; \fh^2 \!\left(\rho_{i,  \ell z j z^\prime }^{B_i C_i}~,
             ~\rho_{i, \ell z^{(\ell)} j z^\prime}^{B_i C_i} \right) \enspace.
\end{align*}
Taking expectation over $JLZ$ in the above inequality and expanding, we
get the desired result by comparing it with Eq.~(\ref{eq:jnfirst}).
\begin{align*} 
\MoveEqLeft { \mathbb{E}_{j \ell z\sim JLZ} \left[ \fh^2 \!\left(\rho_{i, \ell z j
z^\prime}^{  B_i C_i}~,~\rho_{i, \ell z^{(\ell)} j z^\prime}^{  B_i C_i}
\right)\right] } \\
     & \leq  \quad  4 \; \mathbb{E}_{j \ell z\sim JLZ} \;
             \rI (X_\ell \!:\! B_i C_i \,|\,  J = j, X[1, \ell-1] = z[1, \ell-1]) \\
     & = \quad  4 \left( \frac{2}{n} \right)^2 \sum_{j \leq n/2, \ell > n/2}
            \frac{1}{2^{\ell-1}} \sum_{z[1, \ell-1]}
            \rI (X_\ell \!:\! B_i C_i \,|\,  J = j, X[1, \ell-1] = z[1, \ell-1]) \\
     & \leq \quad \frac{16}{n} \; 
           \widetilde{\QIC}_{\rA \rightarrow \rB}^i (\Pi, \mu_0) \enspace.
\end{align*}
\end{proof}

\section{Information Flow Lemma}
\label{app-ifl}

We use the following bound on the transfer of information in interactive quantum protocols, 
obtained in Ref.~\cite{LT17-information-flow,LT17-information-flow-arxiv}. We include a proof for completeness.

\begin{lemma}
\label{lem:rdindeplb}
	Given a protocol $\Pi$, an input state $\rho$ with purifying register $R$ with arbitrary 
decompositions $R = R_a^\rA R_b^\rA R_c^\rA = R_a^\rB R_b^\rB R_c^\rB$,
the following identities hold:
\begin{align*}
\MoveEqLeft
\sum_{i \geq 0} \rI (R_a^\rB \!:\! C_{2i + 1} \,|\,  R_ b^\rB B_{2i
+ 1} ) - \sum_{i \geq 1} \rI (R_a^\rB \!:\! C_{2i} \,|\,  R_b^\rB B_{2i}) \\
    & = \quad \rI (R_a^\rB \!:\! B_{\rout} B^\prime \,|\,  R_b^\rB) -
\rI (R_a^\rB \!:\!
        B_{\rin} \,|\,  R_b^\rB) \enspace, \\
    & \\
\MoveEqLeft \sum_{i \geq 0} \rI (R_a^\rA \!:\! C_{2i + 2} \,|\,  R_ b^\rA A_{2i
+ 2} ) - \sum_{i \geq 0} \rI (R_a^\rA \!:\! C_{2i + 1} \,|\,  R_b^\rA A_{2i
+ 1}) \\
    & = \quad \rI (R_a^\rA \!:\! A_{\rout} A^\prime \,|\,  R_b^\rA) -
\rI (R_a^\rA \!:\!
        A_{\rin} \,|\,  R_b^\rA) \enspace.
\end{align*}
\end{lemma}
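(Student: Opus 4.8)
The plan is to derive each identity as a single telescoping sum over the rounds of the protocol, using only the chain rule and the invariance of conditional mutual information under local isometries recorded in Section~\ref{app:infomeas}; no inequalities or approximations enter. I describe the argument for the first (Bob-side) identity, the second being identical after interchanging the roles of Alice and Bob and of odd and even rounds.

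Let $\rho_i$ denote the global pure state on $R A_i B_i C_i$ immediately after the isometry $U_i$ is applied ($\rho_0$ being the initial state, $C_0$ a trivial register), and recall the protocol conventions $B_0 = B_\rin T^\rB$, $B_M = B_\rout B'$, and $B_i = B_{i-1}$ for odd $i$. For $0 \le i \le M$ define
\[
P_i \;:=\; \rI(R_a^\rB \!:\! B_i C_i \,|\, R_b^\rB)_{\rho_i} \quad \text{for odd } i, \qquad
P_i \;:=\; \rI(R_a^\rB \!:\! B_i \,|\, R_b^\rB)_{\rho_i} \quad \text{for even } i,
\]
the guiding intuition being that right after an odd round Bob holds $B_i$ together with the message $C_i$ that has just reached him, whereas right after an even round Bob has dispatched $C_i$ and holds only $B_i$. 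Since Bob's half $T^\rB$ of the pre-shared pure state is uncorrelated with $R$ at the outset, $P_0 = \rI(R_a^\rB \!:\! B_\rin \,|\, R_b^\rB)$ while $P_M = \rI(R_a^\rB \!:\! B_\rout B' \,|\, R_b^\rB)$; so the right-hand side of the first identity equals $P_M - P_0 = \sum_{i=1}^{M}(P_i - P_{i-1})$.

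It remains to evaluate each increment $P_i - P_{i-1}$. For odd $i$, the isometry $U_i$ acts only on Alice's registers $A_{i-1}C_{i-1}$ and so leaves the reduced state on $R B_{i-1} = R B_i$ unchanged; hence $\rI(R_a^\rB \!:\! B_i\,|\,R_b^\rB)_{\rho_i} = P_{i-1}$, and the chain rule gives $P_i = P_{i-1} + \rI(R_a^\rB \!:\! C_i \,|\, R_b^\rB B_i)_{\rho_i}$, i.e.\ the increment is $+\rI(R_a^\rB \!:\! C_i \,|\, R_b^\rB B_i)_{\rho_i}$. For even $i$, $U_i$ is a Bob-local isometry $B_{i-1}C_{i-1}\to B_i C_i$ that fixes $R$, so invariance of conditional mutual information under this isometry gives $P_{i-1} = \rI(R_a^\rB \!:\! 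B_{i-1}C_{i-1}\,|\,R_b^\rB)_{\rho_{i-1}} = \rI(R_a^\rB \!:\! B_i C_i\,|\,R_b^\rB)_{\rho_i}$, which the chain rule rewrites as $P_i + \rI(R_a^\rB \!:\! C_i \,|\, R_b^\rB B_i)_{\rho_i}$; thus the increment is $-\rI(R_a^\rB \!:\! C_i \,|\, R_b^\rB B_i)_{\rho_i}$. Summing the increments collapses the telescope to exactly $\sum_{i \text{ odd}}\rI(R_a^\rB \!:\! C_i\,|\,R_b^\rB B_i) - \sum_{i \text{ even}}\rI(R_a^\rB \!:\! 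C_i\,|\,R_b^\rB B_i)$, the left-hand side of the first identity. When $M$ is odd, the additional isometry $U_{M+1}$ (Bob receiving the final message, again Bob-local) contributes a zero increment and leaves $P_{M+1} = \rI(R_a^\rB \!:\! B_\rout B'\,|\,R_b^\rB)$, so the range of the two sums is unaffected. The second identity follows from the same telescoping applied to $\rI(R_a^\rA \!:\! \cdot\,|\,R_b^\rA)$ over Alice's registers, started from $U_1$.

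I expect the genuine difficulty to be bookkeeping rather than mathematics: one must track which party physically holds $C_i$ at each moment, reconcile this with the register identifications ($B_i = B_{i-1}$ for odd $i$, $A_i = A_{i-1}$ for even $i$, and the relabelling of $U_M, U_{M+1}$ when $M$ is odd), and check the first and last rounds separately — although the base case is in fact subsumed by the odd-$i$ computation once $P_0$ is read off the initial state. Every displayed equality is exact, coming either from the chain rule or from the invariance of conditional mutual information under a local isometry applied to one of its three arguments.
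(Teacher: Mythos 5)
Your proof is correct and uses the same ingredients as the paper's argument — the chain rule for conditional mutual information, invariance under local isometries (equivalently, that a one-sided isometry leaves the relevant reduced state untouched), and the fact that the pre-shared entanglement is in tensor product with $R$ in the initial state — only you present it as a uniform single telescoping sum over the potentials $P_i$, whereas the paper packages the same computation as an induction that advances two rounds (one Alice, one Bob) at a time and then cleans up the boundary at $i=M$; the two are the same argument in different dress.
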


\begin{proof}
We focus on the first identity, that for the messages received by Bob; the
identity for the messages received by Alice follows similarly. In the
rest of the proof, we omit the superscripts on the purifying registers;
they are meant to be~$\rB$. 

We show that
\begin{align*}
\MoveEqLeft \rI (R_a \!:\! B_{2k + 1} C_{2k +1} \,|\,  R_b) \\
    & = \quad
\sum_{0 \leq i \leq k} \rI (R_a \!:\! C_{2i + 1} \,|\,  R_ b B_{2i + 1} ) -
\sum_{1 \leq i \leq k} \rI (R_a \!:\! C_{2i} \,|\,  R_b B_{2i}) + \rI
(R_a \!:\!
B_{\rin} \,|\,  R_b)
\end{align*}
by induction on $k$, with~$2k+1 \leq M$.
If $M$ is odd, Bob receives the last message 
and  $\rI (R_a \!:\! B_{\rout} B^\prime \,|\,  R_b) 
= \rI (R_A \!:\! B_{M} C_{M} \,|\,  R_b)$, and the result follows.
If $M$ is even 
and Bob sends the last message, the result follows since $B_M = B_{\rout} B^\prime$ 
and $\rI (R_a \!:\! B_M \,|\,  R_b) = \rI (R_a \!:\! B_{M - 1} C_{M - 1}
\,|\,  R_b)  - \rI (R_a \!:\! C_M \,|\,  R_b B_M) $ 
by using the chain rule and isometric invariance under the map that
takes~$B_{M-1} C_{M-1} \rightarrow B_M C_M$. 

The base case for the induction follows from
\begin{align*}
	\rI (R_a \!:\! B_1 C_1 \,|\,  R_b) 	& \quad = \quad \rI (R_a
\!:\! B_1 \,|\,  R_b)  + \rI(R_a \!:\! C_1 \,|\,   R_ b B_1) \\
				& \quad = \quad \rI (R_a \!:\! B_{\rin} \,|\,
R_b)  + \rI(R_a \!:\! C_1 \,|\,   R_ b B_1) \enspace.
\end{align*}
Here, the first equality holds by the chain rule. The second holds
because $B_1 = B_0 = B_{\rin} T_B$, and
because the state in~$T_B$ is in tensor product with the initial state in the
registers~$R_a R_b B_{\rin}$.

For the induction step, we have 
\begin{align*}
\MoveEqLeft
\rI(R_a \!:\! B_{2k+3} C_{2k + 3} \,|\,  R_b ) \\
    = \quad & \rI (R_a \!:\! B_{2k+3} \,|\,  R_b)
        + \rI(R_a \!:\! C_{2k+2} \,|\,  R_b B_{2k+2}) 
        + \rI (R_a \!:\!  C_{2k+3} \,|\,  R_b B_{2k+3} ) \\
            & \mbox{} - \rI(R_a \!:\! C_{2k+2} \,|\,  R_b B_{2k+2}) \\
    = \quad & \rI(R_a: B_{2k +2} C_{2k+2} \,|\,  R_b )
        + \rI (R_a \!:\!  C_{2k+3} \,|\,  R_b B_{2k+3} ) 
        - \rI(R_a \!:\! C_{2k+2} \,|\,  R_b B_{2k+2}) \\
    = \quad & \rI(R_a: B_{2k +1} C_{2k+1} \,|\,  R_b )
        + \rI (R_a \!:\!  C_{2k+3} \,|\,  R_b B_{2k+3} ) 
        - \rI(R_a \!:\! C_{2k+2} \,|\, R_b B_{2k+2}) \enspace,
\end{align*}
in which the first equality holds by the chain rule and by adding and subtracting the same term,
the second also holds by the chain rule and because $B_{2k+3} = B_{2k+2}$, and
the third holds by the isometric invariance under the map that takes~$B_{2k+1} C_{2k+1} \rightarrow B_{2k+2} C_{2k+2}$.
The induction step follows by comparing terms.
\end{proof}

\end{document}